\title{Optimal Las Vegas Approximate Near Neighbors in $\ell_p$}
\author{Alexander Wei}
\address{Harvard University}
\email{weia@college.harvard.edu}
\thanks{Supported by a Harvard PRISE Fellowship and a Herchel Smith Fellowship.}
\nc{\C}{c}
\nc{\N}{n}
\nc{\dimA}{d}
\nc{\dimB}{B}
\nc{\dimC}{b}
\nc{\expC}{\alpha}
\nc{\expB}{\beta}
\nc{\dist}[2]{\norm{#1 - #2}_2}
\nc{\Filt}{\mathfrak{F}}
\nc{\FiltB}{\mathfrak{G}}
\nc{\Proj}{\mathfrak{P}}
\nc{\Ball}{\cB}
\nc{\Sphere}{\cS}
\nc{\SphCap}{W}
\nc{\frad}{w}
\nc{\fnum}{N}
\nc{\fdel}{\delta}
\nc{\epsB}{\eps}
\nc{\epsA}{\eps}
\nc{\G}{G}
\nc{\F}{F}
\nc{\K}{K}
\nc{\Alpha}{\mathfrak{c}}
\nc{\Beta}{\mathfrak{s}}
\nc{\todo}[1]{{\color{red}[#1]}}
\nc{\block}[1]{
  \underbrace{\begin{matrix} \sqrt{\frac{d'}{d}} & \sqrt{\frac{d'}{d}} & \cdots & \sqrt{\frac{d'}{d}}\end{matrix}}_{#1}
}
\begin{document}

\begin{abstract}
  We show that approximate near neighbor search in high dimensions can be solved in a Las Vegas fashion (i.e., without false negatives) for $\ell_p$ ($1\le p\le 2$) while matching the performance of optimal locality-sensitive hashing. Specifically, we construct a data-independent Las Vegas data structure with query time $O(dn^{\rho})$ and space usage $O(dn^{1+\rho})$ for $(r, c r)$-approximate near neighbors in $\mathbb{R}^{d}$ under the $\ell_p$ norm, where $\rho = 1/c^p + o(1)$. Furthermore, we give a Las Vegas locality-sensitive filter construction for the unit sphere that can be used with the data-dependent data structure of Andoni et al. (SODA 2017) to achieve optimal space-time tradeoffs in the data-dependent setting. For the symmetric case, this gives us a data-dependent Las Vegas data structure with query time $O(dn^{\rho})$ and space usage $O(dn^{1+\rho})$ for $(r, c r)$-approximate near neighbors in $\mathbb{R}^{d}$ under the $\ell_p$ norm, where $\rho = 1/(2c^p - 1) + o(1)$.

  Our data-independent construction improves on the recent Las Vegas data structure of Ahle (FOCS 2017) for $\ell_p$ when $1 < p\le 2$. Our data-dependent construction does even better for $\ell_p$ for all $p\in [1, 2]$ and is the first Las Vegas approximate near neighbors data structure to make use of data-dependent approaches. We also answer open questions of Indyk (SODA 2000), Pagh (SODA 2016), and Ahle by showing that for approximate near neighbors, Las Vegas data structures can match state-of-the-art Monte Carlo data structures in performance for both the data-independent and data-dependent settings and across space-time tradeoffs.
\end{abstract}

\maketitle
\setcounter{page}{0}
\thispagestyle{empty}
\newpage

\section{Introduction}

In the nearest neighbor search problem, we are given a database $\cD$ of $\N$ points in a metric space $(X, D)$ and asked to construct a data structure to efficiently answer queries for the point in $\cD$ that is closest to a query point $q\in X$. Nearest neighbor search has found applications to a diverse set of areas in computer science, including machine learning, computer vision, and document retrieval. For many of these applications, the space $X$ is $\RR^\dimA$ and the metric $D$ is an $\ell_p$ norm.

The nearest neighbor search problem can be solved efficiently when the dimension of the space $X$ is ``low'' (e.g., \cite{Clarkson88, Meiser93}). However, all known data structures for nearest neighbor search suffer from the ``curse of dimensionality,'' in which either the space or time complexity of the data structure is exponential in the dimension $\dimA$. Thus researchers (e.g., \cite{HIM12, GIM99, KOR00}) have also considered an approximate version of nearest neighbor search, stated as follows:
\begin{defn}[$(r, \C r)$-approximate near neighbors]
  Given a database $\cD$ of $\N$ points in a metric space $(X, D)$, construct a data structure which, given a query point $q\in X$, returns a point in $\cD$ within distance $\C r$ of $q$, provided there exists a point in $\cD$ within distance $r$ of $q$.
\end{defn}

In their seminal work, \cite{HIM12} propose \emph{locality-sensitive hashing} (LSH) as a solution for approximate near neighbors and provide an LSH data structure for Hamming space that overcomes the curse of dimensionality. In particular, their data structure obtains query time $O(\dimA n^\rho)$ and space usage $O(\dimA n^{1 + \rho})$ for an exponent of $\rho = 1 / \C$. Over the past two decades, the LSH framework and its generalization \emph{locality-sensitive filters} (LSF) \cite{BDGL16, Christiani17, ALRW17} have emerged as leading approaches for solving approximate near neighbors problems, both in theory and in practice. These frameworks serve as the basis of efficient approximate near neighbors data structures for a variety of metric spaces (e.g., \cite{HIM12, BGMZ97, DIIM04, AI06, CP17}; see also \cite{AI17, AIR18} for surveys). Corresponding lower bounds for the LSH framework have also been found \cite{MNP07, OWZ14}, with the construction of \cite{HIM12} known to be optimal for Hamming space via the result of \cite{OWZ14}.

A more recent line of work \cite{AINR14, AR15, ALRW17} has been on \emph{data-dependent} LSH, approaches towards solving approximate near neighbors in which the ``bucketing'' of points depends on the database $\cD$. This stands in contrast to earlier \emph{data-independent} LSH and LSF (e.g., as in \cite{HIM12}), in which the families of hash functions and filters used are determined independently of $\cD$. The data dependence allows such data structures to overcome the lower bound of \cite{OWZ14} for data-independent LSH, with \cite{AR15} obtaining an improved exponent of $\rho = 1 / (2\C^2 - 1) + o(1)$ for Euclidean approximate near neighbors. More recently, \cite{ALRW17} obtain space-time tradeoffs with data-dependent hashing, extending the result of \cite{AR15} to different exponents for space usage and query time.

Almost all data structures that solve approximate near neighbors while overcoming the curse of dimensionality are Monte Carlo. That is, they fail with some probability to find a near neighbor of the query point when one exists. There are applications, however, where such a failure probability is undesirable, e.g., fraud detection or fingerprint lookups within a criminal database. Furthermore, tuning the failure probability precisely can be difficult in practical settings \cite{GIM99}. These shortcomings of the Monte Carlo approach have motivated the study of Las Vegas data structures for approximate near neighbor search, in which the data structure must \emph{always} return a near neighbor when one exists, but the runtime of the queries is permitted to be a random variable.\footnote{Note that a Las Vegas data structure can be converted into a Monte Carlo data structure with failure probability $\delta$ by running it for $\log(1/\delta)$ times the expected runtime before breaking if no near neighbor is returned. However, no reduction in the other direction (from Monte Carlo to Las Vegas) is known to exist.} In the literature, this problem has also been referred to as that of constructing approximate near neighbors data structures that are ``without false negatives'' \cite{Pagh16, GPSS17}, ``with total recall'' \cite{PP16}, ``exact'' \cite{AGK06}, or ``explicit'' \cite{KKKC16}.

The problem of constructing Las Vegas data structures for approximate near neighbors traces back to \cite{Indyk00} and \cite{AGK06}. However, it was not until recently that Las Vegas data structures have come close to matching the bounds achieved by the best Monte Carlo LSH data structures: \cite{Pagh16} constructs a Hamming space data structure with exponent $\rho = O(1 / \C)$, coming within a constant factor of the optimal $\rho = 1 / \C$ of \cite{HIM12}, and \cite{Ahle17} closes the gap, achieving $\rho = 1 / \C + o(1)$ with a new construction based on dimensionality reductions and locality-sensitive filters. By standard reductions (see \cite{HIM12}), the Hamming space data structure of \cite{Ahle17} can be used to solve $(r, \C r)$-approximate neighbors for $\ell_p$ with $\rho = 1 / \C + o(1)$ for all $p\in [1, 2]$. Although this result is optimal for Hamming space and $\ell_1$ in the data-independent setting, it leaves open whether a Las Vegas data structure can match the performance of Monte Carlo data structures for $p\in (1, 2]$ and in particular for the important case of Euclidean space (i.e., $p = 2$). Progress on this front was made by \cite{SW17, Wygocki17}, which give new Las Vegas data structures for Euclidean $(r, \C r)$-approximate near neighbors. However, these data structures do not match the performance of the best Monte Carlo approaches for this problem and in some instances require exponential space or an approximation factor that is $\omega(1)$.

In this paper, we resolve the above open problem from \cite{Indyk00, Pagh16, Ahle17} by constructing the first Las Vegas data structure for Euclidean $(r, \C r)$-approximate near neighbors with exponent $\rho = 1 / \C^2 + o(1)$. This exponent matches that of the ball lattice LSH construction of \cite{AI06}, which was later shown to be optimal in the data-independent setting by \cite{OWZ14}. By a reduction \cite[Section 5]{Nguyen14}, our data structure implies a data structure for approximate near neighbors in $\ell_p$ for $p\in (1, 2)$ with exponent $\rho = 1 / \C^p + o(1)$, which is again tight by \cite{OWZ14}.\footnote{Our data structure also implies a new Las Vegas construction for approximate near neighbors in Hamming space and $\ell_1$, matching the bounds of \cite{Ahle17}. This can be done via the embedding of Hamming space into Euclidean space (with distances squared) and the reduction from $\ell_1$ to Hamming space, respectively.}

We achieve this result by combining the approaches of \cite{Ahle17} and \cite{AI06} with some new applications of dimensionality reduction to derandomizing geometric problems. In particular, we modify and extend the general approach outlined in \cite{Ahle17} for Las Vegas data structures for approximate near neighbors to the more difficult $\ell_2$ case. Furthermore, we translate ball lattice hashing \cite{AI06} to the LSF framework, construct using CountSketch \cite{CCF04} a geometric analog of the \emph{splitters} described in \cite{NSS95, AMS06}, and give a two-stage sequence of dimensionality reductions that allows for more efficient processing of false positives than in \cite{Ahle17}. These techniques culminate in the following theorem and corollary:

\begin{thm}\label{thm:main}
  There exists a Las Vegas data structure for Euclidean $(r, \C r)$-approximate near neighbors in $\RR^\dimA$ with expected query time $\tilde O(\dimA\N^{\rho})$, space usage $\tilde O(\dimA\N^{1 + \rho})$, and preprocessing time $\tilde O(\dimA\N^{1 + \rho})$ for $\rho = 1 / \C^2 + o(1)$.
\end{thm}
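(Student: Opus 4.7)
The plan is to adapt the general Las Vegas framework of Ahle (FOCS 2017), which attains the optimal exponent for Hamming space via explicit code-based filter families, to the $\ell_2$ setting where no analogous purely combinatorial construction is directly available. First I would apply a standard randomized reduction from $(r, \C r)$-approximate near neighbors in $\RR^\dimA$ to the corresponding problem on the unit sphere $\Sphere^{\dimA - 1}$, after which the task becomes building a Las Vegas locality-sensitive filter family on the sphere that mirrors the ball lattice hashing of Andoni--Indyk (SODA 2006) but guarantees zero false negatives.

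The heart of the construction is to translate ball lattice hashing into the LSF framework and then derandomize the filter assignment. To derandomize, I would construct a \emph{geometric splitter}: a small family of linear maps (analogous to the combinatorial splitters of Naor--Schulman--Srinivasan and Alon--Matias--Szegedy) such that for every pair of points at least one map in the family preserves their inner product accurately enough that the ball lattice filter detects them as close. CountSketch is the natural primitive here because its sparsity lets one argue that the contributions of any fixed pair of points are isolated on some member of the family, yielding a deterministic no-false-negatives guarantee rather than a probabilistic one.

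To keep enumerate-and-check over this splitter family cheap, I would precede it with a two-stage dimensionality reduction. The outer reduction would push the ambient dimension down to roughly $\log^{O(1)}\N$, so that the explicit filter family on projected points is efficient to enumerate and evaluate; a finer inner reduction, applied inside the query loop, lets us reject false positives in time much smaller than a full $\dimA$-dimensional distance computation, with a final exact check in the original space. Each of the two reductions must itself be Las Vegas in the sense that, for every possible input pair, at least one member of the family preserves distances within the required tolerance, and the splitter machinery built above is precisely what supplies this.

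The main obstacle will be showing that the cumulative cost still gives $\rho = 1/\C^2 + o(1)$. Because derandomization replaces the usual law-of-large-numbers argument with a union bound over a structured family, the analysis must (i) bound the size of the splitter and projection families so that they contribute only a $\N^{o(1)}$ overhead, and (ii) recover, on the low-dimensional projections, the same Gaussian collision calculation that underlies the $1/\C^2$ exponent of Andoni--Indyk, despite having to work with an explicit and hence less random filter assignment. Balancing these parameters against the false-positive verification cost --- for which the two-stage reduction is the key saving over Ahle's Hamming-space pipeline --- is the delicate step; the $o(1)$ slack in the exponent is exactly the budget that absorbs the losses from going explicit.
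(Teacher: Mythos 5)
Your outline captures several of the paper's ingredients (ball lattice filters, geometric splitters via CountSketch, a two-stage dimensionality reduction), but it begins with a step the paper does not take and omits the mechanism that makes the base filter family Las Vegas.

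\textbf{The sphere reduction is not in the paper and is problematic.} The proof of this theorem never passes to the unit sphere. Ball lattice filters are a Euclidean construction: after the dimensionality reduction, the base-level family (Section 4) is a collection of ball lattices in $\RR^\dimC$, $\dimC = \Theta(\log^\expC\N)$, translated by $\fnum$ random offsets. Reducing first to $\Sphere^{\dimA-1}$ would change the problem to spherical ANN, for which \emph{data-independent} spherical LSF does not achieve $\rho = 1/\C^2$ at the distance scales that arise after a generic embedding --- that is precisely why the paper reserves the spherical LSF machinery for the data-dependent Theorem~\ref{thm:SA} in \Cref{appendix:B}. Moreover, the ``standard randomized reduction'' to the sphere does not obviously preserve the Las Vegas guarantee. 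You should drop this step and work with balls in $\RR^\dimC$ directly.

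\textbf{The no-false-negatives guarantee for the base family does not come from CountSketch.} You write that CountSketch's sparsity ``yields a deterministic no-false-negatives guarantee'' for the filter family, but that is not how the paper obtains it. The base filter family for $\RR^\dimC$ is made Las Vegas by a \emph{verify-and-restart} argument: sample $\fnum = \Theta(p^{-1}\dimC\log(\frad/\fdel))$ random ball lattices, then deterministically check on a finite net $L = \fdel\cdot\ZZ^\dimC \cap [0,6\frad]^\dimC$ that every close pair (under the slightly strengthened condition $\dist{x}{y}\le 1+\fdel\sqrt\dimC$ with shrunken radius $\frad'$) is jointly covered; \Cref{lem:sample} shows this check passes with probability $\ge 1/2$, so one resamples $O(1)$ times in expectation. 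This net argument is the crux of the Las Vegas property in $\RR^\dimC$, and your proposal has no substitute for it. CountSketch instead appears one level up, in \Cref{prop:splitting}, to build the collection $\Proj$ of orthogonal decompositions of $\RR^\dimB$ into copies of $\RR^\dimC$ (the tensoring step, \Cref{sec:B}), where the relevant property is a second-moment distortion bound, not sparsity-based isolation.

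\textbf{Ordering and role of the two reductions.} Minor but worth flagging: the paper first projects $\RR^\dimA \to \RR^{\dimB'} \to \RR^\dimB$ (FastJL, then random rotation, each giving an orthogonal decomposition so that near pairs survive in \emph{some} block), and \emph{then} inside $\RR^\dimB$ tensors up base families on $\RR^\dimC$ using the $\Proj$-splitters. Your narrative interleaves these in a way that makes it unclear which reduction bounds which false-positive cost; the point of the intermediate dimension $\dimB'=\poly\log(\dimA\N)$ with $\delta = 1/(\N\dimA)$ is specifically to amortize the $O(\dimA)$ exact-check cost so the total false-positive overhead is $\tilde O(\dimA)$ per query rather than $O(\dimA^2/\dimB)$.
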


\begin{cor}\label{cor:lp}
  There exists a Las Vegas data structure for $(r, \C r)$-approximate near neighbors in $\RR^\dimA$ under $\ell_p$ for $1 < p < 2$ with expected query time $\tilde O(\dimA\N^{\rho})$, space usage $\tilde O(\dimA\N^{1 + \rho})$, and preprocessing time $\tilde O(\dimA\N^{1 + \rho})$ for $\rho = 1 / \C^p + o(1)$.
\end{cor}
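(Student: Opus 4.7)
The plan is to reduce $(r, \C r)$-approximate near neighbors in $\ell_p$ to an instance of approximate near neighbors in $\ell_2$ and then invoke \Cref{thm:main}, mirroring how the Monte Carlo $\ell_p$ bounds are typically derived from their Euclidean counterparts. Concretely, I apply the reduction of \cite[Section 5]{Nguyen14}, which embeds $(\RR^\dimA, \ell_p)$ into $(\RR^{\dimA'}, \ell_2)$ via a ``snowflake''-style map $\phi$ satisfying: if $\|x - y\|_p \le r$ then $\|\phi(x) - \phi(y)\|_2 \le R$, and if $\|x - y\|_p \ge \C r$ then $\|\phi(x) - \phi(y)\|_2 \ge \C^{p/2} R$, for an appropriate scale $R$. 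The key point is that at $\ell_p$-distance $t$, the $\ell_2$-distance under $\phi$ behaves like $t^{p/2}$, so the contrast factor inflates from $\C$ to $\C^{p/2}$.

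Given such an embedding, the data structure operates in three steps. First, compute $\phi(x)$ for every $x \in \cD$ and insert the images into the Las Vegas $\ell_2$ structure of \Cref{thm:main} instantiated with approximation factor $\C^{p/2}$; the resulting exponent is $\rho = 1/(\C^{p/2})^2 + o(1) = 1/\C^p + o(1)$, exactly as claimed. Second, given a query $q \in \RR^\dimA$, compute $\phi(q)$, query the $\ell_2$ structure, and receive a candidate $y$ whose image $\phi(y)$ lies within $\ell_2$-distance $\C^{p/2} R$ of $\phi(q)$ whenever a near neighbor exists. Third, verify in $\ell_p$ that $\|q - y\|_p \le \C r$, discarding the candidate if not. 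Because the embedding preserves every true near pair and the $\ell_2$ data structure has no false negatives, the composed data structure retains the Las Vegas property.

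The main obstacle is confirming that the embedding $\phi$ can be realized in a Las Vegas way---i.e., that it never inflates a true $\ell_p$-near pair beyond the $\ell_2$ near-threshold---and that its target dimension satisfies $\dimA' = \tilde O(\dimA)$, so that the bounds $\tilde O(\dimA'\N^\rho)$ and $\tilde O(\dimA'\N^{1+\rho})$ inherited from \Cref{thm:main} collapse to $\tilde O(\dimA\N^\rho)$ and $\tilde O(\dimA\N^{1+\rho})$ after absorbing polylog factors. Any $(1 + o(1))$-multiplicative distortion incurred by $\phi$ can be folded into the $o(1)$ term of $\rho$, since $1/((1+o(1))\C^{p/2})^2 = 1/\C^p + o(1)$. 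Finally, the $\ell_p$-verification step costs only $O(\dimA)$ per returned candidate, so the total query time remains $\tilde O(\dimA\N^\rho)$ in expectation.
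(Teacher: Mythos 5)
Your high-level plan — embed $\ell_p$ into $\ell_2$ via Nguyen's snowflake map, instantiate \Cref{thm:main} with contrast factor $\C^{p/2}$, and verify candidates back in $\ell_p$ — is exactly the approach the paper takes, and your accounting for the exponent $\rho = 1/(\C^{p/2})^2 + o(1) = 1/\C^p + o(1)$ and for the $\tilde O(\dimA)$ target dimension is correct.

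However, there is a genuine gap you flag but do not close, and it is in fact the crux of the paper's argument. Nguyen's embedding (Lemma~\ref{lem:embed} in the paper) is only valid on a \emph{bounded} range of $\ell_p$-distances, $\norm{x - y}_p \le \dimA^{O(1)}$, and moreover carries an \emph{additive} error of size $\eps$ in addition to the multiplicative $(1 \pm \eps)$ factor. Your paraphrase ``$\|x-y\|_p \le r \Rightarrow \|\phi(x)-\phi(y)\|_2 \le R$ and $\|x-y\|_p \ge \C r \Rightarrow \|\phi(x)-\phi(y)\|_2 \ge \C^{p/2} R$'' quietly assumes the embedding behaves uniformly well at all scales, which it does not. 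To bring distances into the valid range, the paper first decomposes $\RR^\dimA$ using a randomly shifted copy of the lattice $\dimA\cdot\ZZ^\dimA$, builds a separate sub-data-structure on a hypercube of side length $\dimA + 2$ centered at each lattice cell, and observes (i) that a random point lands in only $(1 + 2/\dimA)^\dimA = O(1)$ such hypercubes in expectation, so the overhead is a constant factor, and (ii) that every pair at $\ell_p$-distance at most $1$ is fully contained in at least one hypercube, so the Las Vegas guarantee survives the decomposition. Within a hypercube all pairwise $\ell_p$-distances are $O(\dimA^{3/2}) \le \dimA^{O(1)}$, so Lemma~\ref{lem:embed} applies with $\eps = 1/\log\N$, giving an $\ell_2$ instance with contrast $\C' = (1 - 2\eps)\C^{p/2}$; the $(1-2\eps)$ factor is then absorbed into the $o(1)$ in $\rho$. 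Without this lattice-hypercube reduction — or some other mechanism bounding the distance scale — your proposal cannot legitimately invoke the embedding, so the proof as written is incomplete. You also list ``confirm the embedding is realizable in a Las Vegas way'' as an obstacle, but you never address it; the resolution is simply that the map of Lemma~\ref{lem:embed} is a fixed (data-oblivious) map whose distortion bound holds deterministically for all pairs in range, so it introduces no false negatives.
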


For the data-dependent setting, we apply similar high-level techniques to get a Las Vegas version of the data-independent spherical LSF used in the data-dependent data structure of \cite{ALRW17}. We then substitute our Las Vegas LSF family in for the spherical LSF family of \cite{ALRW17} to obtain a Las Vegas data structure for data-dependent hashing with space-time tradeoffs:

\begin{thm}\label{thm:SA}
  Let $p$ be such that $1\le p\le 2$, and suppose $\rho_u, \rho_q\ge 0$ are such that
  \[ \C^p\sqrt{\rho_q} + (\C^p - 1)\sqrt{\rho_u}\ge\sqrt{2\C^p - 1}. \]
  There exists a Las Vegas $(r, \C r)$-approximate near neighbors data structure in $\RR^\dimA$ under $\ell_p$ with expected query time $\tilde O(\dimA\N^{\rho_q + o(1)})$, space usage $\tilde O(\dimA\N^{1 + \rho_u + o(1)})$, and preprocessing time $\tilde O(\dimA\N^{1 + \rho_u + o(1)})$.
\end{thm}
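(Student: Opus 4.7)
The plan is to reduce \cref{thm:SA} to constructing a Las Vegas locality-sensitive filter family on the unit sphere with the same parameters as the spherical LSF used in the data-dependent framework of \cite{ALRW17}, and then to substitute this family into their reduction as a black box. For $p\in[1,2)$, I would additionally apply the reduction from $\ell_p$ to $\ell_2$ of \cite[Section 5]{Nguyen14}, which preserves the Las Vegas property and turns the Euclidean exponent into the advertised $c^p$-exponent.

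The core task is therefore to build a Las Vegas LSF on $\Sphere^{\dimA-1}$ whose filter volumes and pairwise collision probabilities at the relevant near and far angular distances match those of the random spherical cap family used in \cite{ALRW17}. My approach would parallel the construction behind \cref{thm:main}: first reduce to a sphere of dimension $\N^{o(1)}$ via a Las Vegas dimensionality reduction, then cover this low-dimensional sphere explicitly by a deterministic family of spherical caps, built via a splitter-style construction using CountSketch as in the data-independent case, and finally dispose of false positives by a higher-fidelity second-stage filter on a less aggressive embedding. The Las Vegas requirement amounts to the family being a true cover, i.e., every pair within the near radius lies in at least one common filter with certainty, which distinguishes it from the random-cap family where coverage holds only with high probability.

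With such a filter family in hand, plugging it into the framework of \cite{ALRW17} requires checking that the data-dependent components of their reduction --- in particular, the recursive ball carving that isolates dense clusters from the pseudorandom remainder --- do not introduce false negatives. Since the carving step is deterministic given the database, no true near neighbor is ever discarded, so failure can only come from the filter family, which we have made Las Vegas. Replacing their Monte Carlo spherical LSF by our Las Vegas version then yields correctness without false negatives at every level of the recursion, and the $\N^{o(1)}$ overhead introduced by the dimensionality reduction and two-stage filter is absorbed into the $\rho_u + o(1)$ and $\rho_q + o(1)$ exponents. The tradeoff curve $\C^p\sqrt{\rho_q}+(\C^p-1)\sqrt{\rho_u}\ge\sqrt{2\C^p-1}$ is then exactly the one obtained by \cite{ALRW17} with their spherical LSF, transported to $\ell_p$ via Nguyen's reduction.

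The main obstacle will be the Las Vegas spherical LSF construction itself: deterministically covering $\Sphere^{\dimA-1}$ by filters whose volume-versus-collision profile matches the tight \cite{ALRW17} bounds at every pair of angular thresholds $(\theta_{\mathrm{near}}, \theta_{\mathrm{far}})$ that the data-dependent reduction may call, while keeping filter evaluation subpolynomial, seems to require a careful composition of dimensionality reduction with the CountSketch-based splitter cover and a two-stage false-positive test. Verifying that the ball-carving step of \cite{ALRW17} and the $\ell_p$-to-$\ell_2$ reduction both preserve the Las Vegas guarantee should by comparison be largely bookkeeping.
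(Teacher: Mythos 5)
Your proposal follows the paper's Appendix~\ref{appendix:B} construction almost exactly: build a Las Vegas spherical LSF family (a brute-force sample-and-verify cover in dimension $\dimC = \Theta(\log^\expC\N)$, tensored up to $\dimB$ via the CountSketch-based splitting orthogonal decompositions of \Cref{subsec:splitting}), substitute it into the data-dependent framework of \cite{ALRW17}, apply the one-sided dimensionality reductions of \Cref{sec:A} to get from $\dimA$ down to $\dimB$, and transfer from $\ell_2$ to $\ell_p$ via \cite{Nguyen14}. Two small inaccuracies worth correcting but not fatal to the outline: the carving/clustering stage of \cite{ALRW17} is not entirely deterministic (e.g., the fast preprocessing of \cite{AR15} uses randomness), though as the paper notes that randomness cannot cause false negatives; and the reduction through false positives additionally requires modifying \cite{ALRW17} to store the full list of points at certain tree nodes rather than a single representative, since the representative could itself be a false positive.
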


This result matches optimal data-dependent hashing bounds given by \cite{AR15, AR16, ALRW17} and resolves open questions of \cite{Ahle17} regarding data-dependence and space-time tradeoffs for Las Vegas data structures. In particular, we have an exponent of $\rho = 1 / (2\C^2 - 1) + o(1)$ for Euclidean approximate near neighbors in the symmetric case $\rho_u = \rho_q$. Since the approach we take to make the data-independent spherical LSH of \cite{ALRW17} into a Las Vegas filter family applies similar ideas and techniques as those of \Cref{thm:main} and \Cref{cor:lp}, but in a more technically involved way, we defer further discussion of this construction to \Cref{appendix:B}.

The combination of \Cref{thm:main}, \Cref{cor:lp}, and \Cref{thm:SA} shows that relative to Monte Carlo data structures, Las Vegas data structures for approximate near neighbors in $\RR^\dimA$ under the $\ell_p$ norm no longer have ``polynomially higher'' query time as noted in the survey \cite{AI17}, but rather match the best Monte Carlo constructions for all $p\in [1, 2]$ in both the data-independent and data-dependent settings.

\subsection{Background and Techniques}\label{subsec:techniques}

LSH \cite{HIM12} has been one of the most popular foundations for approximate near neighbors data structures in recent years. The basic object of this framework is a \emph{locality-sensitive hash family}, a distribution over hash functions such that any pair of ``close'' points is mapped to the same bucket with probability \emph{at least} $p_1$ and any pair of ``distant'' points is mapped to the same bucket with probability \emph{at most} $p_2$. A major contribution of \cite{HIM12} is the result that such a hash family implies a data structure for approximate near neighbors with exponent $\rho = \log(1 / p_1) / \log(1 / p_2)$.

Our construction roughly fits into the \emph{locality-sensitive filters} (LSF) framework, a generalization of LSH to a setting where each point is associated to a \emph{set} of filters instead of a single hash bucket. The basic object of this framework is a \emph{locality-sensitive filter family} (see \Cref{sec:preliminaries}). This framework allows for greater flexibility on the algorithm designer's part \cite{BDGL16, Christiani17, ALRW17} and has been successfully used to construct Las Vegas locality-sensitive data structures for Hamming space \cite{Ahle17}. There are also known lower bounds for LSF: \cite{Christiani17} extends the lower bound of \cite{OWZ14} to a Monte Carlo formalization of LSF, and in this paper, we extend the results of \cite{OWZ14, Christiani17} further to LSF families with Las Vegas properties.
% In \cite{Christiani17}, Christiani discusses the LSF framework in depth and, in particular, shows for LSF families with properties suitable for a Monte Carlo setting that the lower bounds of \cite{OWZ14} still hold. The filter families that we and \cite{Ahle17} consider require a slightly different formalization of LSF, since Las Vegas data structures necessitate some sort of correlation between filters to ensure a zero probability of failure. Despite this difference, the LSF families we consider can be transformed into the LSF families discussed in \cite{Christiani17} by sampling filters from the family uniformly at random, and thus the same lower bounds of \cite{Christiani17} and \cite{OWZ14} apply (see \Cref{appendix:A}).

The construction of filters we use is inspired by the ball lattice LSH of \cite{AI06}. In \cite{AI06}, an LSH family is constructed by covering a low-dimensional space $\RR^\dimC$ with lattices of balls and hashing each point to the first ball it is covered by. We apply this idea in the context of LSF, constructing an LSF family with Las Vegas properties in which each point is mapped to the set of balls containing it (i.e., the filters in the LSF family are the balls in the lattices of balls).

We combine this filter family with the general approach of \cite{Ahle17} for Las Vegas data structures for approximate near neighbors to obtain our data structure for Euclidean space. In \cite{Ahle17}, Ahle first constructs an LSF family with Las Vegas properties for a low-dimensional space $\{0, 1\}^\dimC$, where $\dimC = \Theta(\log\N)$. Then, a ``tensoring'' construction (as in \cite{Christiani17}) is used to combine LSF families for $\{0, 1\}^\dimC$ into an efficient LSF family for $\{0, 1\}^\dimB$, where $\dimB = \Theta(\log^{1+\beta}\N)$. However, directly tensoring with randomly sampled filter families is not sufficient for a Las Vegas data structure, and thus a key innovation of \cite{Ahle17} was to apply the splitters of \cite{NSS95, AMS06} as a derandomization tool. The splitters create a limited set of partitions of Hamming space, each of which corresponds to a way to tensor together lower-dimensional filter families such that the union of these tensored families is a filter family with Las Vegas properties.

We show that this approach extends from Hamming space to Euclidean space with some modifications. In particular, we adapt splitters to the geometric setting of $\RR^\dimB$ by constructing a collection $\Proj$ of orthogonal decompositions (see \Cref{subsec:orthdec}) of $\RR^\dimB$ such that each vector $x\in\RR^\dimB$ is ``split'' into components of almost equal length by at least one of the orthogonal decompositions in $\Proj$. The collection $\Proj$ is constructed with a new variation of CountSketch \cite{CCF04} using $2$-wise independent permutations \cite{AL13}. This geometric analog of splitters lets us obtain a similar result as \cite{Ahle17} for tensoring together filter families that cover the lower-dimensional space $\RR^\dimC$.

The final technique of dimensionality reduction is needed to go from efficient LSF families for $\RR^\dimB$ to a Las Vegas locality-sensitive data structure for $\RR^\dimA$, where the dimension $\dimA$ is arbitrary. The Johnson-Lindenstrauss lemma \cite{JL84} states that the dimension of a set of $\N$ points can be reduced to dimension $O(\eps^{-2}\log\N)$ while preserving pairwise distances up to a multiplicative factor of $1\pm\eps$. This idea is useful for high-dimensional approximate near neighbors and has a long history of being applied in various forms to such data structures (e.g., \cite{Indyk00, DIIM04, AI06, AC09, Ahle17, SW17}). However, as these dimensionality reduction maps are typically randomized in a Monte Carlo sense, \cite{Ahle17, SW17} require the dimensionality reduction to have an additional ``one-sided'' property to preserve the Las Vegas guarantee. Our construction relies on the same one-sided property and also improves on the runtimes of \cite{Ahle17, SW17} for this dimensionality reduction stage with a more careful two-stage sequence of reductions and an application of FastJL \cite{AC09}.

\section{Preliminaries}\label{sec:preliminaries}

\subsection{$\C$-Approximate Near Neighbors}

If $X = \RR^\dimA$ and $D$ is an $\ell_p$ norm, one can assume without loss of generality that $r = 1$, in which case we also refer to $(r, \C r)$-approximate near neighbors as \emph{$\C$-approximate near neighbors}. We will use this convention throughout the rest of this paper.

\subsection{Locality-Sensitive Filters}

The basic object of the LSF framework is a \emph{filter family}, which we define as follows:

\begin{defn}\label{defn:filterfamily}
  For a metric space $(X, D)$, a \emph{filter family} $\cF$ is a collection of subsets of $X$. An element $F\in\cF$ is known as a \emph{filter}. For a point $x\in X$, define $\cF(x)\coloneqq\{ F\in\cF : x\in F\}$ to be the set of all filters containing $x$. Note that a filter family can also be characterized by the values $\cF(x)$ for all $x\in X$, i.e., as a map from $X$ to the power set of filters.

  % Furthermore, we say that a filter family is \emph{$r$-Las Vegas} if for all $x, y\in X$ such that $D(x, y)\le r$, we have $\cF(x)\cap\cF(y)\neq\emptyset$.
\end{defn}

In \cite{Christiani17}, Christiani also defines $(r, \C r, p_1, p_2, p_q, p_u)$-sensitive filter families as distributions over filters with certain locality-sensitive properties. Although we will not be using this definition, since we require families that have a Las Vegas guarantee, we note that the filter families we consider in \Cref{sec:C,,sec:B} can be converted into a $(r, \C r, p_1, p_2, p_q, p_u)$-sensitive filter family by defining the distribution to be sampling a filter uniformly at random from the family. This reduction extends the lower bound proven in \cite{Christiani17} to the filter families we consider, meaning our later constructions of filter families are optimal. See \Cref{appendix:A} for more details about the lower bound.

\subsection{Orthogonal Decompositions}\label{subsec:orthdec}

Also useful to us are orthogonal decompositions of a vector space $\RR^d$, i.e., families of projections that express $\RR^d$ as a direct sum of $d'$-dimensional subspaces. Formally, we define these families as follows:

\begin{defn}
  A family $\{P_i\}_{i=1}^k$ of $d'\times d$ matrices is an \emph{orthogonal decomposition} of $\RR^d$ if $kd' = d$ and the set of row vectors of $P_1,\ldots,P_k$ forms an orthonormal basis of $\RR^d$.

  % Note: more generally, f_1,..., f_k : V -> W (for inner product spaces V and W) are an orthogonal decomposition if f_1\oplus ... \oplus f_k is an isometry.
\end{defn}

\section{Overview}\label{sec:overview}

As discussed in \Cref{subsec:techniques}, our construction can be roughly broken down into three stages. The first stage takes place in $\RR^\dimC$, where $\dimC = \Theta(\log^\expC\N)$ is a power of two and $0 < \expC < 1$; the second stage takes place in $\RR^\dimB$, where $\dimB = \Theta(\log^{1+\expB}\N)$ is a power of two and $0 < \expB < \expC$; and the third stage takes place in the original dimension $\RR^\dimA$. Although any $\expC$ and $\expB$ satisfying these constraints gives a data structure with exponent $\rho = 1 / \C^2 + o(1)$, the specific choice of $\expC$ and $\expB$ affects the $o(1)$ term in $\rho$. Setting $\expC = 4 / 5$ and $\expB = 2 / 5$ gives the fastest convergence of $\rho$ to $1 / \C^2$. (This choice yields a $o(1)$ term that diminishes as $\tilde O(\log^{-1 / 5}\N)$.)

The stages of our construction proceed as follows:
\begin{enumerate}
  \item
    \hyperref[sec:C]{\emph{Construct an LSF family that solves $\C$-approximate near neighbors in $\RR^\dimC$.}} We accomplish this by translating the ball lattice hashing approach of \cite{AI06} to the LSF framework. Our construction starts by sampling a collection of ball lattices to cover $\RR^\dimC$, with the individual balls being the filters of a filter family $\cF$. We then show that this sampled filter family has the desired ``Las Vegas'' locality-sensitive properties with probability at least $1 / 2$. We also describe an algorithm to verify these properties given a sampled $\cF$. It suffices for the sampling to succeed with probability $1/2$, since if the verification fails, we can restart---the constant probability of success implies we expect to restart at most $O(1)$ times.

    More concretely, the LSF family we construct has the following properties: a filter family $\cF$ can be constructed efficiently (in time $O(\poly(\dimC^\dimC)) = \N^{o(1)}$); for any $x\in\RR^\dimC$, $\cF(x)$ can be computed efficiently (in time $O(\poly(\dimC^\dimC))$); for all points $x, y\in\RR^\dimC$ such that $\dist{x}{y}\le 1$, $\cF(x)\cap\cF(y)\neq\emptyset$; and for all points $x, y\in\RR^\dimC$ such that $\dist{x}{y}\ge t$, $\EE_{\cF}[\abs{\cF(x)\cap\cF(y)}]\le\exp(-\Omega(t^2))$, with the constant implied by the big-$\Omega$ to be specified later.

  \item
    \hyperref[sec:B]{\emph{Construct an efficient LSF family that solves $\C$-approximate near neighbors in $\RR^\dimB$.}} We can construct a filter family for $\RR^\dimB$ by taking $\dimB / \dimC$ filter families for $\RR^\dimC$ along with an orthogonal decomposition of $\RR^\dimB$ into subspaces of dimension $\dimC$ and applying the tensor operation. In the tensored filter family, the set of filters containing a point $x\in\RR^\dimB$ is the direct product of the sets of filters containing each component of $x$ with respect to the orthogonal decomposition. We repeat this for a collection $\Proj$ of orthogonal decompositions of $\RR^\dimB$ into $\RR^\dimC$ and have our final filter family be the union of these tensored filter families.

    The collection $\Proj$ of orthogonal decompositions we consider has the property that every vector in $\RR^\dimB$ is ``split'' into components that are almost equal in length by some orthogonal decomposition in $\Proj$. To construct $\Proj$, we first describe a modification of CountSketch using $2$-wise independent permutations where each element of the modified CountSketch family is an orthogonal projection. We then construct the elements of $\Proj$ in a ``divide-and-conquer'' manner by composing CountSketch projections that each halve the dimension of the space. Important to us is the fact that $\Proj$ has size $\poly(\dimB^{\dimB / \dimC}) = \N^{o(1)}$.

    The splitting property of $\Proj$ implies that if $x, y\in\RR^\dimB$ are such that $\dist{x}{y}\le 1$, then there exists an orthogonal decomposition $\cP\in\Proj$ such that all components of $x - y$ under $\cP$ have length at most $(1 + \epsB)\sqrt{\dimC / \dimB} $ for $\epsB = \Theta(\log^{-\expC/2}\N)$. Thus, if a pair of points $x,y\in\RR^\dimB$ are ``close,'' then there exists an orthogonal decomposition $\cP\in\Proj$ such $x$ and $y$ are ``close'' in all subspaces given by $\cP$. This is useful because it lets the Las Vegas properties for the filter families in $\RR^\dimC$ transfer to the filter family in $\RR^\dimB$. Furthermore, an analogous property bounding the expected number of shared filters holds for ``distant'' points in $\RR^\dimB$.

    The result is a filter family for $\RR^\dimB$ with Las Vegas properties, such that the set of filters containing $x\in\RR^\dimB$ can be efficiently computed (because of tensoring). We can then use this filter family to obtain a data structure solving Euclidean $\C$-approximate near neighbors in $\RR^\dimB$ with exponent $\rho = 1 / \C^2 + o(1)$.

  \item
    \hyperref[sec:A]{\emph{Reduce approximate near neighbors in $\RR^\dimA$ to $\dimA / \dimB$ instances of approximate near neighbors in $\RR^\dimB$.}} We reduce the original $\C$-approximate near neighbors problem in $\RR^\dimA$ to $\dimA/\dimB$ instances of $\C'$-approximate near neighbors in $\RR^\dimB$, where $\C' = (1-\epsA)\C$ for $\epsA = \Theta(\log^{-\expB / 2}\N)$. This is done with a two-stage dimensionality reduction process that again uses orthogonal decompositions. We construct distributions over orthogonal decompositions with distributions of Johnson-Lindenstrauss maps in which all elements are projections.

    Orthogonal decompositions of $\RR^d$ into subspaces of dimension $d'$ have the property that if $x, y\in\RR^d$ are such that $\dist{x}{y}\le 1$, then at least one component of $x - y$ under the decomposition will have length at most $\sqrt{d' / d}$. Thus, if we treat each subspace as a separate instance of approximate near neighbors, then all $x$ and $y$ that are ``close'' in $\RR^d$ will also be close in at least one of the subspaces in the orthogonal decomposition, preserving the Las Vegas guarantee.

    A caveat of applying an orthogonal decomposition is that two ``distant'' points could be close in some subspaces, such that solving approximate near neighbors in those subspaces produces false positives for the original problem. We must filter out these false positives to maintain the Las Vegas guarantee. To reduce the time spent checking for false positives, we break the dimensionality reduction from $\RR^\dimA$ to $\RR^\dimB$ into two steps: The first step uses FastJL to map $\RR^{\dimA}$ into $\RR^{\dimB'}$ for $\dimB' = O(\poly(\log(\dimA\N)))$, and the second step uses a random projection to map $\RR^{\dimB'}$ into $\RR^\dimB$. This process has a runtime overhead of $O(d\poly(\log(\dimA\N)))$ per point. Hence query time, space usage, and preprocessing time are dominated by those of solving approximate near neighbors in the subspaces.
\end{enumerate}
In the following three sections, we describe in detail the constructions for each stage.

\section{Ball Lattice Filters for $\RR^\dimC$}\label{sec:C}

In this section, the variables $\dimC$, $\frad$, and $\fdel$ are each functions of the number of points $\N$, such that $\dimC = \Theta(\log^\expC\N)$, $\frad = \Theta(\log^{\expB / 2}\N)$ and $\fdel = \Theta(1 / \dimC)$ where $\expC$ and $\expB$ are as defined in \Cref{sec:overview}. (One can also consider the explicit parameter setting $\expC = 4 / 5$ and $\expB = 2 / 5$.)

We start with some notation and a lemma relating to balls in $\RR^\dimC$. For $x\in\RR^\dimC$, we use $\Ball(x, r)$ to denote the ball of radius $r$ centered at $x$. Furthermore, let $V_\dimC$ denote the volume of a unit $\dimC$-ball, let $C_\dimC(u)$ denote the volume of the cap at distance $u$ from the center of a unit $\dimC$-ball\footnote{Alternatively, $C_\dimC(u)$ is half the volume of the intersection of two unit $\dimC$-balls whose centers are distance $2u$ apart.}, and define $I_\dimC(u)\coloneqq C_\dimC(u) / V_\dimC$ to be the relative cap volume at distance $u$ for a $\dimC$-ball.

\begin{lem}[\cite{AI06}]\label{lem:balls}
  For all $\dimC\ge 2$ and $0\le u\le 1$,
  \[ \frac{1}{\sqrt \dimC}\bigp{1 - u^2}^{\frac\dimC 2}\lesssim I_\dimC(u)\le \bigp{1 - u^2}^{\frac\dimC 2}. \]
\end{lem}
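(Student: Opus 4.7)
The plan is to prove the two inequalities separately, both based on the slicing representation
\[ C_\dimC(u) = V_{\dimC-1}\int_u^1 (1-t^2)^{(\dimC-1)/2}\,dt, \]
obtained by integrating $(\dimC-1)$-dimensional ball cross-sections of radius $\sqrt{1-t^2}$ perpendicular to $e_1$.

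For the upper bound, I would argue via a geometric containment: every point $x = (x_1, x_\perp)$ in the cap $\{x \in \Ball(0,1) : x_1 \ge u\}$ satisfies
\[ \|x - u e_1\|^2 = (x_1 - u)^2 + \|x_\perp\|^2 \le 1 - 2u x_1 + u^2 \le 1 - u^2, \]
where the first inequality uses $\|x\|^2 \le 1$ and the second uses $x_1 \ge u$. Hence the cap is contained in $\Ball(u e_1, \sqrt{1-u^2})$, which has volume $V_\dimC(1-u^2)^{\dimC/2}$, giving $I_\dimC(u) \le (1 - u^2)^{\dimC/2}$.

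For the lower bound, Stirling's formula applied to $V_\dimC = \pi^{\dimC/2}/\Gamma(\dimC/2+1)$ yields $V_{\dimC-1}/V_\dimC = \Theta(\sqrt{\dimC})$, so it suffices to show the slice integral is at least of order $(1-u^2)^{\dimC/2}/\dimC$. I would apply the substitution $t^2 = u^2 + (1-u^2)s$ (so $1 - t^2 = (1-u^2)(1-s)$ and $dt = (1-u^2)\,ds/(2t)$), which converts the integral into
\[ \frac{(1-u^2)^{(\dimC+1)/2}}{2}\int_0^1 \frac{(1-s)^{(\dimC-1)/2}}{\sqrt{u^2 + (1-u^2)s}}\,ds. \]
Bounding $\sqrt{u^2 + (1-u^2)s} \le 1$ pointwise reduces the remaining integral to the standard Beta integral $\int_0^1 (1-s)^{(\dimC-1)/2}\,ds = 2/(\dimC+1)$, yielding $\int_u^1 (1-t^2)^{(\dimC-1)/2}\,dt \gtrsim (1-u^2)^{(\dimC+1)/2}/\dimC$ and hence $I_\dimC(u) \gtrsim (1-u^2)^{(\dimC+1)/2}/\sqrt{\dimC}$. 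The claimed form is then immediate in the parameter regime that matters for later sections of the paper, where $u$ corresponds to a small normalized distance so that $(1-u^2)^{1/2}$ is bounded below by a constant and can be absorbed into the $\lesssim$.

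The only delicate step is the substitution revealing the Beta-integral structure of the slice integral; the geometric containment for the upper bound and the Stirling asymptotics for $V_{\dimC-1}/V_\dimC$ are routine. The main obstacle, if one insists on the exponent $\dimC/2$ uniformly over $u \in [0,1]$ rather than only where $u$ is bounded away from $1$, is to replace the crude bound $\sqrt{u^2 + (1-u^2)s} \le 1$ with a sharper estimate concentrating the mass of the integrand near $s = 0$, where $1/\sqrt{u^2+(1-u^2)s} \approx 1/u$ contributes an extra factor that offsets the lost $(1-u^2)^{1/2}$.
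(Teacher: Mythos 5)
The paper cites this lemma from \cite{AI06} without reproducing a proof, so there is no in-paper argument to compare against; I evaluate your proposal directly. Your upper-bound argument is correct: the containment of the cap $\{x \in \Ball(0,1) : x_1 \ge u\}$ in $\Ball(u e_1, \sqrt{1-u^2})$ is a clean pointwise identity. Your lower-bound argument --- the slicing formula, the Stirling estimate $V_{\dimC-1}/V_\dimC = \Theta(\sqrt{\dimC})$, the substitution $t^2 = u^2 + (1-u^2)s$, and the pointwise bound $\sqrt{u^2+(1-u^2)s}\le 1$ --- is also correct, and as you note it yields $I_\dimC(u) \gtrsim (1-u^2)^{(\dimC+1)/2}/\sqrt{\dimC}$, i.e.\ an extra factor of $\sqrt{1-u^2}$ relative to the stated form.

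Your instinct to flag that discrepancy is right, but the fix you sketch cannot close it: extracting the $1/u$ behavior of the integrand near $s=0$ gains a factor $1/u$, and $\sqrt{1-u^2}/u = \sqrt{1/u^2 - 1}\to 0$ as $u\to 1$, so the lost factor is \emph{not} recovered precisely where it is missing. In fact the stated lower bound fails with an absolute constant once $u$ is allowed near $1$: writing $1-t^2=(1-t)(1+t)$ and using $1+u\le 1+t\le 2$ on the range of integration gives $\int_u^1(1-t^2)^{(\dimC-1)/2}\,dt\asymp(1-u)^{(\dimC+1)/2}$ at fixed $\dimC$ as $u\to 1$, whence the ratio $I_\dimC(u)\cdot\sqrt{\dimC}\,/(1-u^2)^{\dimC/2}\asymp\sqrt{(1-u)/(1+u)}\to 0$. (Concretely, for $\dimC=2$ and $u=0.99$ one computes $I_2(u)\approx 7\times 10^{-4}$ while $(1-u^2)/\sqrt 2\approx 1.4\times 10^{-2}$.) The bound you actually proved, with the extra $\sqrt{1-u^2}$, is the sharp one; the lemma should be read as holding for $u$ bounded away from $1$. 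That is exactly the regime the paper uses --- \Cref{lem:sample} and \Cref{filt:upper} invoke the lemma with argument $O(1/\frad)\to 0$ --- so your argument suffices for the paper's purposes and is as strong as the statement can be made uniformly in $u$.
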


The following proposition is the main result of this section; the distribution $\Filt$ over filter families with Las Vegas properties that we construct will be used in the tensoring step in \Cref{sec:B}.

\begin{prop}\label{prop:filt}
  There is a distribution $\Filt$ over filter families for $\RR^\dimC$ with the following properties:
  \begin{enumerate}
    \item\label{filt:sample}
      A filter family $\cF$ can be sampled from $\Filt$ in expected time $O(\poly(\dimC^\dimC))$.
    \item\label{filt:decode}
      For all $x\in\RR^\dimC$, $\cF(x)$ can be computed in expected time $O(\poly(\dimC^\dimC))$.
    \item\label{filt:all_pairs}
      For all $x, y\in\RR^\dimC$ such that $\dist{x}{y}\le 1$, $\cF(x)\cap\cF(y)\neq\emptyset$.
    \item\label{filt:local}
      Let $t\ge 0$ be a fixed constant. For all $x,y\in\RR^\dimC$ such that $\dist{x}{y}\ge t$,
      \[ \EE_{\cF\sim\Filt}(\abs{\cF(x)\cap\cF(y)}) \le O\bigp{\poly(\dimC)\exp\bigp{-\frac{\dimC}{2}\frac{t^2-1 - o(1)}{4\frad^2}}}. \]
      In particular, for $t = 0$, $\EE_{\cF\sim\Filt}(\abs{\cF(x)}) = O(\poly(\dimC)\exp(\frac{\dimC}{2}\frac{1 + o(1)}{4\frad^2}))$.
  \end{enumerate}
\end{prop}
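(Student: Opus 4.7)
The plan is to translate the ball lattice LSH of \cite{AI06} into the LSF framework. Define $\Filt$ as the distribution over filter families $\cF$ obtained by sampling $M = \poly(\dimC)$ i.i.d.\ random ball lattices in $\RR^\dimC$: each lattice $\cL_i$ consists of the balls of radius $\frad$ centered at the points of $R_i(\tau\ZZ^\dimC) + t_i$, where $R_i \in O(\dimC)$ is a uniform random rotation and $t_i$ is a uniform translation over a fundamental cell. The filters of $\cF$ are the individual balls across all $M$ lattices, and $\cF(x)$ is the set of balls containing $x$. The spacing $\tau$ and count $M$ will be chosen to balance property \ref{filt:all_pairs} (coverage of every close pair) against property \ref{filt:local} (controlled expected intersection for distant pairs).

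Properties \ref{filt:sample} and \ref{filt:decode} follow from the compact description of each lattice. Each $\cL_i$ is specified by $O(\poly(\dimC))$ bits and can be sampled in $O(\poly(\dimC))$ time, yielding total sampling time $O(\poly(\dimC))$. For $x\in\RR^\dimC$, the set $\cF(x)$ is computed by transforming $x$ into the coordinates of each lattice and enumerating the lattice points within distance $\frad/\tau$ of it; the number of such lattice points per lattice is $O(V_\dimC (\frad/\tau)^\dimC) \le \poly(\dimC^\dimC)$, giving the claimed running time.

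Property \ref{filt:local} follows from translation invariance. Within a single lattice, the number of centers in the lens $\Ball(x,\frad) \cap \Ball(y,\frad)$ has expectation (over the uniform translation $t_i$) equal to $|\mathrm{Lens}|/\tau^\dimC = 2 V_\dimC \frad^\dimC I_\dimC(t/(2\frad))/\tau^\dimC$, the lens volume divided by the cell volume. Summing over the $M$ lattices and applying the bound $I_\dimC(u) \le (1 - u^2)^{\dimC/2}$ from \Cref{lem:balls}, a calibration of $\tau$ near the covering-radius threshold $2\sqrt{\frad^2 - 1/4}/\sqrt{\dimC}$ yields the target bound $\poly(\dimC) \exp(-\dimC(t^2 - 1 - o(1))/(8\frad^2))$.

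Property \ref{filt:all_pairs} is the main obstacle, since the Las Vegas guarantee must hold for every $\cF$ in the support of $\Filt$, not merely in expectation. The strategy is rejection sampling with explicit verification. By the Pythagorean theorem, a pair $(x,y)$ with $\dist{x}{y} \le 1$ is covered by a ball $\Ball(c, \frad)$ iff the midpoint $m = (x+y)/2$ satisfies $\dist{m}{c} \le \sqrt{\frad^2 - 1/4}$, so property \ref{filt:all_pairs} is equivalent to the union of the $M$ lattice center sets forming a $\sqrt{\frad^2 - 1/4}$-cover of $\RR^\dimC$. To verify this, I would discretize a bounded region containing a fundamental cell of each lattice into a $\fdel$-net of $\poly(\dimC^\dimC)$ points and check, for each net point, that some lattice center lies within $\sqrt{\frad^2 - 1/4} - \fdel$; a slack argument then lifts net-coverage to full coverage. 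A union bound over the net, combined with a lower bound on per-lattice covering probability, shows that verification succeeds with probability at least $1/2$ over $\cF \sim \Filt$, so the expected number of restarts is $O(1)$. The main technical difficulty lies in balancing $\tau$, $M$, and $\fdel$ so that verification runs in time $\poly(\dimC^\dimC)$ while the bound in property \ref{filt:local} retains its precise exponential form.
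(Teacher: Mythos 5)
Your overall architecture — sample a union of random ball lattices, verify the Las Vegas coverage property on a discretized net via rejection sampling, and bound the expected number of shared filters via lens-volume/cap-volume estimates — is the same as the paper's. However, three of your specific choices are incorrect or would break the balance the paper achieves.

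First, the midpoint characterization of pair coverage is false. For $\dist{x}{y}=2a\le 1$ and center $c$, the Apollonius identity gives $\|x-c\|^2+\|y-c\|^2=2\|m-c\|^2+2a^2$, so $\|m-c\|\le\sqrt{\frad^2-a^2}$ is \emph{necessary} for $\{x,y\}\subseteq\Ball(c,\frad)$ but not \emph{sufficient}: if $x-m$ is aligned with $m-c$ then $\|x-c\|=\|m-c\|+a$, which can exceed $\frad$ even when $\|m-c\|\le\sqrt{\frad^2-1/4}$ (indeed for $\frad\gg 1$ the slack $\frad-\sqrt{\frad^2-1/4}$ is $\Theta(1/\frad)\ll a$). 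Your verification step would therefore accept filter families that fail property~\ref{filt:all_pairs}. The paper sidesteps this by checking pairs of net points directly: it verifies that for all $x,y\in L=\fdel\ZZ^\dimC$ with $\dist{x}{y}\le 1+\fdel\sqrt\dimC$, some ball of the shrunken radius $\frad'=\frad-\tfrac12\fdel\sqrt\dimC$ contains both, and then lifts this to arbitrary close pairs by rounding.

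Second, your calibration $\tau\approx 2\sqrt{\frad^2-1/4}/\sqrt\dimC$ makes each lattice so dense that a single lattice has $V_\dimC(\frad/\tau)^\dimC=2^{\Theta(\dimC)}$ balls covering any point, which is far larger than the target $O(\poly(\dimC)\exp(\tfrac{\dimC(1+o(1))}{8\frad^2}))$ of property~\ref{filt:local} at $t=0$. The paper instead takes sparse lattices with spacing $3\frad$, so balls within a lattice are disjoint and each point lies in at most one ball per offset; the covering guarantee is achieved by having many independent offsets, not a dense lattice.

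Third, $M=\poly(\dimC)$ is far too few lattices. The per-lattice probability that a fixed close pair is covered (after the radius shrinkage needed for the net argument) is $\Omega(\poly(\dimC^{-1})V_\dimC\,3^{-\dimC}\exp(-\tfrac{\dimC(1+o(1))}{8\frad^2}))$, which is $2^{-\Theta(\dimC)}$ because of the $3^{-\dimC}$ factor. Taking the union bound over the $\poly(\dimC^\dimC)$ pairs in the net forces $\fnum=\Theta(p^{-1}\dimC\log(\frad/\fdel))=\poly(\dimC^\dimC)$, not $\poly(\dimC)$. The random rotations $R_i$ you add do not help here (the paper uses only axis-aligned lattices with random offsets) and are unnecessary.
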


% NOTE: bounding the o(1) term
% (1): 1 - alpha
% (2): 1 - alpha
% (3):
% (4): min(alpha / 2, beta, (alpha - beta) / 2)

Each filter in the filter families that we sample will be a ball of radius $\frad$. Like \cite{AI06}, we consider infinite lattices of such balls in $\RR^\dimC$, in particular translations of the lattice of balls of radius $\frad$ with centers at $3\frad\cdot\ZZ^\dimC$. Each such lattice of balls thus corresponds to an offset $v\in [0, 3\frad]^\dimC$.

Let $\Filt_0$ be the distribution over filter families such that a filter family $\cF$ is sampled by independently generating $\fnum$ offsets $v_1,\ldots,v_\fnum\in [0, 3\frad]^\dimC$ and defining
\[ \cF\coloneqq\bigcup_{i=1}^\fnum \bigc{\Ball(u, \frad) : u\in v_i + 3\frad\cdot\ZZ^\dimC}. \]
The set of filters a point belongs to is exactly the set of balls that contain it. (This differs from the ball lattice LSH of \cite{AI06}, in which a point gets hashed to the ball associated with the offset $v_i$ for the smallest $i$.)

A filter family sampled from $\Filt_0$ does not necessarily satisfy property \ref{filt:all_pairs}. Thus, to ensure property \ref{filt:all_pairs} holds, we obtain $\Filt$ from $\Filt_0$ by adding a verification step that drops all filter families drawn from $\Filt_0$ that do not satisfy property \ref{filt:all_pairs}. Checking that property \ref{filt:all_pairs} holds exactly is difficult, since there are infinitely many pairs $x, y\in\RR^{\dimC}$ such that $\dist{x}{y}\le 1$. We instead verify a stronger condition and choose $N$ so that a filter family sampled from $\Filt_0$ satisfies this stronger condition with probability at least $1 / 2$. Then a filter family can be sampled from $\Filt$ with $O(1)$ samples from $\Filt_0$ in expectation.

The stronger condition we check is the following: Consider the lattice $L\coloneqq\fdel\cdot\ZZ^\dimC$ and a smaller radius $\frad'\coloneqq\frad - \frac 12\fdel\sqrt\dimC$. We check that for all $x, y\in L$ such that $\dist{x}{y}\le 1 + \fdel\sqrt\dimC$, there exists some filter $F\in\cF$ with center $u$ such that $x, y\in\Ball(u, \frad')$. This condition suffices because for every pair $x', y'\in\RR^\dimC$ such that $\dist{x'}{y'}\le 1$, there exist $x, y\in L$ such that $\dist{x'}{x}, \dist{y'}{y}\le\frac 12\fdel\sqrt\dimC$. In particular, $\dist{x}{y}\le 1 + \fdel\sqrt\dimC$, and for all $u\in\RR^\dimC$, $x, y\in\Ball(u, \frad')$ implies $x', y'\in\Ball(u, \frad)$. Note that this condition holds if it holds on the subset $L\cap [0, 6\frad]^{\dimC}$ by the lattice structure of the filters. Therefore, it is enough to check $O((6\frad / \fdel)^{2\dimC}) = O(\poly(\dimC^\dimC))$ pairs of points in $L$, which can be done in $O(N\poly(\dimC^\dimC))$ time.

The next lemma lower bounds the probability of success for a fixed ``close'' pair $x, y\in L$:

\begin{lem}\label{lem:sample}
  Let $x, y\in L$ be such that $\dist{x}{y}\le 1 + \fdel\sqrt\dimC$. Suppose an offset $v\in [0, 3\frad]^\dimC$ is sampled uniformly at random. The probability there exists a point $u\in v + 3\frad\cdot\ZZ^\dimC$ such that $x, y\in\Ball(u, \frad')$ is at least $\Omega(\poly(\dimC^{-1})V_\dimC 3^{-\dimC}\exp(-\frac{\dimC}{2}\frac{1+o(1)}{4\frad^2}))$.
\end{lem}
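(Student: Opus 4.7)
The plan is to reformulate the event geometrically as a lattice-hitting problem and then reduce it to a single volume computation. For $x, y\in L$ with $\dist{x}{y}\le 1 + \fdel\sqrt\dimC$, the desired event---that some $u\in v + 3\frad\cdot\ZZ^\dimC$ satisfies $x, y\in\Ball(u, \frad')$---is equivalent to the shifted lattice $v + 3\frad\cdot\ZZ^\dimC$ intersecting the lens $R\coloneqq\Ball(x, \frad')\cap\Ball(y, \frad')$. Since $v$ is uniform on the fundamental domain $[0, 3\frad]^\dimC$, this probability equals the total mass that $R$ carries after projection to the torus $\RR^\dimC/(3\frad\cdot\ZZ^\dimC)$.

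The next step is to show this projection is injective on $R$. The set $R\subseteq\Ball(x,\frad')$ has Euclidean diameter at most $2\frad' = 2\frad - \fdel\sqrt\dimC < 3\frad$, so for every nonzero $k\in\ZZ^\dimC$ the translates $R$ and $R + 3\frad\cdot k$ are disjoint; hence the projected mass is exactly $\text{vol}(R)/(3\frad)^\dimC$. The volume of $R$ can be read off from the footnote defining $C_\dimC$: for two balls of radius $\frad'$ at center-distance $d$, the intersection has volume $2 C_\dimC(d/(2\frad'))\cdot(\frad')^\dimC$. Applying \Cref{lem:balls} with $u = \dist{x}{y}/(2\frad')$ therefore gives the lower bound
\[ \frac{2V_\dimC}{\sqrt\dimC}\cdot\frac{(1 - u^2)^{\dimC/2}(\frad')^\dimC}{(3\frad)^\dimC} = \poly(\dimC^{-1})\cdot V_\dimC\cdot 3^{-\dimC}\cdot(1-u^2)^{\dimC/2}\cdot(\frad'/\frad)^\dimC. \]

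The remaining task is to collect $(1-u^2)^{\dimC/2}(\frad'/\frad)^\dimC$ into the target factor $\exp(-\frac{\dimC}{2}\cdot\frac{1+o(1)}{4\frad^2})$. Since $\fdel = \Theta(1/\dimC)$ and $\frad = \Theta(\log^{\expB/2}\N)$, both $\fdel\sqrt\dimC$ and $\fdel\sqrt\dimC/\frad$ are $o(1)$, and a direct computation gives $u^2 \le (1 + \fdel\sqrt\dimC)^2/(2\frad - \fdel\sqrt\dimC)^2 = (1 + o(1))/(4\frad^2)$. A Taylor expansion of $\ln(1-u^2)$ then yields $(1-u^2)^{\dimC/2}\ge\exp(-\frac{\dimC}{2}\cdot\frac{1+o(1)}{4\frad^2})$.

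The main subtlety I anticipate is the auxiliary factor $(\frad'/\frad)^\dimC = (1 - \fdel\sqrt\dimC/(2\frad))^\dimC$, which contributes a sub-exponential loss of order $\exp(-O(\sqrt\dimC/\frad))$. For this loss to be absorbable into the $o(1)$ in the target exponent, one needs $\sqrt\dimC/\frad = o(\dimC/\frad^2)$, equivalently $\frad = o(\sqrt\dimC)$. In the paper's parameter regime $\frad = \Theta(\log^{\expB/2}\N)$ and $\dimC = \Theta(\log^\expC\N)$ with $\expB < \expC$, this is satisfied (for instance, at $\expC = 4/5$ and $\expB = 2/5$, $\sqrt\dimC/\frad = \Theta(\log^{1/5}\N)$ versus $\dimC/\frad^2 = \Theta(\log^{2/5}\N)$), so the two factors combine into the required exponential bound, the $2/\sqrt\dimC$ prefactor matches the $\poly(\dimC^{-1})$ term, and \Cref{lem:sample} follows.
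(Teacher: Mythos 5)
Your proof is correct and takes essentially the same approach as the paper. The paper factors the probability as $\PP(E_2)\cdot\PP(E_1\mid E_2)$ where $E_2$ is the event that the lattice hits $\Ball(x,\frad')$, but this factorization computes exactly the lens volume over the fundamental domain volume, as you do directly; the bookkeeping with $\fdel\sqrt\dimC$, the application of \Cref{lem:balls}, and the absorption of the $(\frad'/\frad)^\dimC$ factor into the $o(1)$ term (requiring $\frad = o(\sqrt\dimC)$, i.e., $\expB < \expC$) all match the paper's argument.
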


\begin{proof}
  Let $E_1$ be the event that there exists a $u\in v + 3\frad\cdot\ZZ^\dimC$ such that $x, y\in \Ball(u, \frad')$ and $E_2$ be the event that there exists a $u\in v + 3\frad\cdot\ZZ^\dimC$ such that $x\in\Ball(u, \frad')$. Because $E_1$ implies $E_2$, we can condition to obtain
  \[ \PP(E_1) = \PP(E_2)\cdot\PP(E_1 | E_2)\ge\bigp{\frac{\frad'}{3\frad}}^\dimC V_\dimC\cdot 2I_\dimC\bigp{\frac{1 + \fdel\sqrt\dimC}{2\frad'}}. \]
  For the first term, assuming sufficiently large $\dimC$ and $\frad$:
  \begin{align*}
    \PP(E_2)
    &= V_\dimC 3^{-\dimC}\bigp{1 - \frac{\fdel\sqrt\dimC}{2\frad}}^\dimC \\
    &= V_\dimC 3^{-\dimC}\exp\bigp{\dimC\log\bigp{1 - \frac{\fdel\sqrt\dimC}{2\frad}}} \\
    &\ge V_\dimC 3^{-\dimC}\exp\bigp{-\frac{\dimC}{2}\frac{1}{4\frad^2}\cdot 8\frad\fdel\sqrt\dimC} \\
    &= V_\dimC 3^{-\dimC}\exp\bigp{-\frac{\dimC}{2}\frac{1}{4\frad^2}\cdot o(1)}.
  \end{align*}
  We now bound the second term. Since $I_\dimC(u)$ is decreasing in $u$, we first upper bound its argument (for large enough $\dimC$ and $\frad$) as $\frac{1}{2\frad'}(1 + \fdel\sqrt\dimC)\le\frac{1}{2\frad}(1 + 3\fdel\sqrt\dimC)$. Let $\xi = \frac{1}{2\frad}(1 + 3\fdel\sqrt\dimC)$. By \Cref{lem:balls},
  \begin{align*}
    \PP(E_1 | E_2)
    &\ge I_b(\xi) \\
    &\gtrsim\frac{1}{\sqrt\dimC} \bigp{1 - \xi^2}^{\frac{\dimC}{2}} \\
    &=\frac{1}{\sqrt\dimC}\exp\bigp{\frac\dimC 2\log\bigp{1 - \xi^2}} \\
    &\ge\frac{1}{\sqrt\dimC}\exp\bigp{-\frac\dimC 2\xi^2(1 + \xi^2)} \\
    &\ge\frac{1}{\sqrt\dimC}\exp\bigp{-\frac\dimC 2\frac{1}{4\frad^2}\bigp{1 + 3\fdel\sqrt\dimC}^2\bigp{1 + \frac{1}{\frad^2}}} \\
    &=\poly(\dimC^{-1})\exp\bigp{-\frac\dimC 2\frac{1}{4\frad^2} (1 + o(1))}.
  \end{align*}
  The desired bound now follows from combining our bounds on $\PP(E_1)$ and $\PP(E_1 | E_2)$.
\end{proof}

Let $p = \Omega(\poly(\dimC^{-1})V_\dimC 3^{-\dimC}\exp(-\frac{\dimC}{2}\frac{1 + o(1)}{4\frad^2}))$ be the least probability with which two ``close'' points share a filter in a random ball lattice as in \Cref{lem:sample}. Setting $\fnum = \Theta(p^{-1}\dimC\log(\frad / \fdel)) = O(\poly(\dimC^\dimC))$ allows all checks for $x, y\in L\cap [0, 6\frad]$ to succeed with probability at least $1 / 2$ by the union bound. Thus it is possible to check that a filter family sampled from $\Filt_0$ satisfies the stronger condition in $O(\poly(\dimC^\dimC))$, and property \ref{filt:sample} follows. To get property \ref{filt:decode}, note that there are only $\fnum = O(\poly(\dimC^\dimC))$ offsets, and thus computing the set of filters containing a given point can be done by iterating over all $\fnum$ offsets and finding for each offset the filter (if any) that contains the point.

It remains to verify property \ref{filt:local}. Let $x, y\in\RR^\dimC$ be such that $\dist{x}{y}\ge t\ge 0$. The construction succeeds with probability at least $1/2$. Therefore,
\[ \EE_{\cF\sim\Filt}(\abs{\cF(x)\cap\cF(y)}) = \EE_{\cF\sim\Filt_0}(\abs{\cF(x)\cap\cF(y)} | \,\text{construction succeeds})\le 2\EE_{\cF\sim\Filt_0}(\abs{\cF(x)\cap\cF(y)}). \]
We prove a lemma that lets us bound the expected value on the right-hand side:

\begin{lem}\label{filt:upper}
  Let $x, y\in\RR^\dimC$ be such that $\dist{x}{y}\ge t\ge 0$. Suppose an offset $v\in [0, 3\frad]^\dimC$ is sampled uniformly at random. The probability there exists a point $u\in v + 3\frad\cdot\ZZ^\dimC$ such that $x, y\in\Ball(u, \frad)$ is at most $V_\dimC 3^{-\dimC}\exp(-\frac\dimC 2\frac{t^2}{4\frad^2})$.
\end{lem}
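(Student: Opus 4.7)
The plan is to reduce the event to a volume estimate on the intersection of two balls and then apply the upper bound half of \Cref{lem:balls}.

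First, observe that the event in question is equivalent to the condition that the lattice $v + 3\frad\cdot\ZZ^\dimC$ meets the set $A \coloneqq \Ball(x, \frad) \cap \Ball(y, \frad)$. Equivalently, reducing modulo $3\frad\cdot\ZZ^\dimC$, it is the event that $v$ lies in the image of $A$ inside the fundamental cell $[0, 3\frad]^\dimC$. Since $v$ is uniform on that cell and the image of $A$ has measure at most $\Vol(A)$ (overlaps of translates of $A$ in the fundamental cell only shrink the image), we obtain
\[ \PP[\text{event}] \le \frac{\Vol(A)}{(3\frad)^\dimC}. \]

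Next, I will compute $\Vol(A)$. Because $A$ is the intersection of two $\dimC$-balls of radius $\frad$ whose centers are at distance $\dist{x}{y}$, it decomposes into two congruent spherical caps, each corresponding to the part of one ball beyond the hyperplane bisecting the segment between $x$ and $y$. Scaling to the unit ball gives
\[ \Vol(A) = 2\frad^\dimC\, C_\dimC\!\left(\frac{\dist{x}{y}}{2\frad}\right) = 2\frad^\dimC V_\dimC\, I_\dimC\!\left(\frac{\dist{x}{y}}{2\frad}\right). \]

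Finally, I will use that $I_\dimC$ is decreasing, so $I_\dimC(\dist{x}{y}/(2\frad)) \le I_\dimC(t/(2\frad))$ when $\dist{x}{y} \ge t$, and apply the upper half of \Cref{lem:balls} together with the elementary inequality $\log(1-z) \le -z$:
\[ I_\dimC\!\left(\tfrac{t}{2\frad}\right) \le \Bigl(1 - \tfrac{t^2}{4\frad^2}\Bigr)^{\dimC/2} \le \exp\!\Bigl(-\tfrac{\dimC}{2}\tfrac{t^2}{4\frad^2}\Bigr). \]
Combining the three inequalities yields the stated bound (the factor of $2\frad^\dimC/(3\frad)^\dimC$ simplifies to $2\cdot 3^{-\dimC}$, and the factor of $2$ can be absorbed into the constant or the $o(1)$ term in the subsequent application in \Cref{prop:filt}).

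The only subtle step is the first one: justifying that $\PP[\text{event}]$ is bounded by $\Vol(A)/(3\frad)^\dimC$ rather than just the measure of some carefully disjointified image. This is a minor issue since we want an upper bound, so summing the measures of the individual translates $A - 3\frad k$ and dividing by the cell volume (i.e., a union bound) is enough — the true probability is smaller whenever two translates overlap inside the cell. The rest of the argument is a direct computation using \Cref{lem:balls}, with no further technical obstacles.
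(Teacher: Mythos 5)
Your proof is correct and takes essentially the same approach as the paper: both reduce the probability to the relative volume of $\Ball(x,\frad)\cap\Ball(y,\frad)$ inside a fundamental cell of the lattice and then apply the upper bound of \Cref{lem:balls}. The paper phrases this via the conditional-probability decomposition $\PP(E_1)=\PP(E_2)\PP(E_1\mid E_2)$ rather than a direct volume ratio, and in fact silently drops the factor of $2$ that you correctly keep track of (the footnote to \Cref{lem:balls} makes it clear the intersection volume is $2C_\dimC$, not $C_\dimC$); as you note, this constant factor is immaterial everywhere the lemma is used.
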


\begin{proof}
  Let $E_1$ be the event that there exists a $u\in v + 3\frad\cdot\ZZ^\dimC$ such that $x, y\in \Ball(u, \frad)$ and $E_2$ be the event that there exists a $u\in v + 3\frad\cdot\ZZ^\dimC$ such that $x\in\Ball(u, \frad)$. By \Cref{lem:balls},
  \[ \PP(E_1) = \PP(E_2)\cdot\PP(E_1 | E_2)\le V_\dimC 3^{-\dimC}\cdot I_b\bigp{\frac{t}{2\frad}}\le V_\dimC 3^{-\dimC}\bigp{1 - \frac{t^2}{4\frad^2}}^{\frac\dimC 2}\le V_\dimC 3^{-\dimC}\exp\bigp{-\frac\dimC 2\frac{t^2}{4\frad^2}}. \]
\end{proof}

Applying \Cref{filt:upper}, we get that
\[ \EE_{\cF\sim\Filt_0}(\abs{\cF(x)\cap\cF(y)})\le\fnum\cdot V_\dimC 3^{-\dimC}\exp\bigp{-\frac\dimC 2\frac{t^2}{4\frad^2}} = O\bigp{\poly(\dimC)\exp\bigp{-\frac\dimC 2\frac{t^2 - 1 - o(1)}{4\frad^2}}}, \]
which gives us property \ref{filt:local} and completes our proof of \Cref{prop:filt}.

\section{Tensoring Up}\label{sec:B}

In this section, we use a ``tensoring'' operation and a Euclidean analog of splitters \cite{NSS95, AMS06} to construct a distribution over filter families for $\RR^\dimB$. This distribution will give us an efficient data structure for Euclidean $\C$-approximate near neighbors in $\RR^\dimB$.  The subsections consist of constructing a collection $\Proj$ of orthogonal decompositions with a ``splitting'' property and showing how to use $\Proj$, tensoring, and \Cref{prop:filt} to get the desired distribution over filter families for $\RR^\dimB$.

\subsection{A Collection of ``Splitting'' Orthogonal Decompositions}\label{subsec:splitting}

In this subsection, we assume without loss of generality that both the initial dimension $d$ and the dimension $d'$ ($d' < d$) of the space that we decompose $\RR^d$ into are powers of two. We can ensure this by padding zeroes to either space while increasing the dimension by at most a constant factor.

We describe how to construct a collection $\Proj$ of orthogonal decompositions of $\RR^d$ into $\RR^{d'}$ such that for any vector $x\in\RR^d$, there is an orthogonal decomposition in $\Proj$ that ``splits'' $x$ into components in $\RR^{d'}$ that are almost equal in length. Using a modified CountSketch \cite{CCF04}, we get this splitting property while having only $\poly(d^{d / {d'}})$ orthogonal decompositions in $\Proj$.

\begin{prop}\label{prop:splitting}
  For all $0 < \eps < 1 / 2$ and $d' = \Omega(1 / \eps^2)$, there exists a collection $\Proj$ of orthogonal decompositions of $\RR^{d}$ into $\RR^{d'}$ such that $\abs{\Proj} = O(\poly(d^{d / d'}))$ and for all $x\in\RR^d$ of unit norm, there exists a decomposition $\cP\in\Proj$ such that
  \[ \abs{\sqrt{\frac{d}{d'}}\norm{Px}_2 - 1} < \eps \]
  for all $P\in\cP$.
\end{prop}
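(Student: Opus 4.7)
The plan is to build $\Proj$ as the family of orthogonal decompositions arising from binary trees of ``CountSketch halvings.'' A single halving of $\RR^n$ (for $n$ a power of two) is parameterized by a $2$-wise independent permutation $\pi$ of $[n]$ (from an explicit family of size $\poly(n)$, cf.\ \cite{AL13}) together with a $4$-wise independent sign vector $\sigma\in\{\pm 1\}^n$: the coordinates are grouped into $n/2$ disjoint pairs $(\pi(2k-1),\pi(2k))$, and each pair $(i,j)$ contributes the two orthonormal rows $(\sigma_i e_i + \sigma_j e_j)/\sqrt 2$ and $(\sigma_i e_i - \sigma_j e_j)/\sqrt 2$ to two projections $P^L, P^R : \RR^n \to \RR^{n/2}$ respectively. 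Disjointness of the pairs plus within-pair orthogonality make $(P^L,P^R)$ an orthogonal decomposition per \Cref{subsec:orthdec}. I then define $\Proj$ by iterating halvings along a complete binary tree of depth $L=\log_2(d/d')$, independently choosing a halving at each internal node; the total count is $\prod_{\ell=0}^{L-1}(\poly(d/2^\ell))^{2^\ell}=\poly(d^{d/d'})$.

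For any $y\in\RR^n$ and a uniformly random halving, a brief calculation gives $\EE[\|P^L y\|_2^2]=\|y\|^2/2$ (since $\EE[\sigma_i\sigma_j]=0$ for $i\ne j$) and variance $O(\|y\|^4/n)$, via $4$-wise sign independence to kill cross terms in the Rademacher-type pair-product sum and $2$-wise permutation independence to average $y_i^2 y_j^2$ over random pairings. Iterating along the tree, for any unit $x\in\RR^d$ and any leaf $P$, the squared projection $\|Px\|_2^2$ factors as $(d'/d)\prod_{\ell=0}^{L-1}(1+\delta_\ell)$, where $\delta_\ell$ is the path-signed imbalance at the depth-$\ell$ ancestor on the leaf's path; because the halvings at distinct nodes use independent randomness, the $\delta_\ell$'s form a martingale-difference sequence with conditional mean $0$ and conditional variance $O(1/d_\ell)=O(2^\ell/d)$.

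The heart of the argument is a probabilistic-method step: show that for every unit $x\in\RR^d$, a uniformly random tree in $\Proj$ makes all $d/d'$ leaves balanced within $(1\pm\eps)$ with positive probability, which forces some deterministic tree in $\Proj$ to work for $x$. I would bound per-leaf moments of the product $\prod(1+\delta_\ell)$ using Khintchine-type inequalities for the single-halving Rademacher forms (with signs drawn from a $2k$-wise independent family), derive Chernoff-style tail bounds per leaf on the deviation of $\|Px\|_2^2$ from $d'/d$, and then union-bound over the $d/d'$ leaves.

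I expect the main obstacle to be making this concentration tight enough that $d'=\Omega(1/\eps^2)$ is sufficient. A plain second-moment/Chebyshev analysis yields per-leaf failure probability $O(1/(d'\eps^2))$, which after union-bounding over leaves would demand $d'\gtrsim\sqrt{d}/\eps$ — much stronger than the claimed threshold. Closing that gap forces using higher-order sign independence (at a family-size cost that remains $\poly(n)$) to obtain subgaussian-scale moments on each $\delta_\ell$, and carefully tracking how these moments compound over the $\log_2(d/d')$ levels of the tree so that the per-leaf failure probability is driven down to $O(d'/d)$ and the union bound goes through.
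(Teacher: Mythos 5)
Your construction of $\Proj$ (CountSketch halvings along a complete binary tree, with $2$-wise independent permutations and $4$-wise independent signs) is the same as the paper's, and your per-halving second-moment calculation is essentially the paper's \Cref{lem:CountSketch}. The gap, which you correctly flag yourself, is in the concentration step: your plan argues that for each unit $x$ a \emph{random} tree succeeds with positive probability by union-bounding per-leaf failure over the $d/d'$ leaves, and Chebyshev's per-leaf failure probability $\Theta(1/(\eps^2 d'))$ would then force $d'\gtrsim\sqrt d/\eps$. The patch you propose --- raising the sign independence to $k$-wise for large $k$ to get subgaussian-scale per-level tails --- does not repair this within the size budget: $k$-wise independent families have size $\poly(d^{k})$, so each of the $\le d/d'$ internal nodes contributes $\poly(d^{k})$ choices and $\abs{\Proj}$ becomes $\poly(d^{kd/d'})$, which exceeds the required $\poly(d^{d/d'})$ unless $k = O(1)$.

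The idea you are missing is that no union bound over leaves is needed at all. Each halving $(P^L, P^R)$ is an orthogonal decomposition, so $\norm{P^L y}_2^2 + \norm{P^R y}_2^2 = \norm{y}_2^2$ exactly; once the left child preserves the norm up to the per-level tolerance $\eps_j$, the right child does too, automatically, by the Pythagorean theorem. This lets you build the tree \emph{greedily}, one internal node at a time: at each node, \Cref{lem:CountSketch} shows a random CountSketch fails with probability $< 1/2$, so \emph{some} element of $\cA_i$ works, and you commit to it. No simultaneous success across sibling branches is ever required, so the constant failure probability from $2$-wise permutations and $4$-wise signs already suffices. Taking the per-level tolerance $\eps_j\propto 1/\sqrt{d_{j-1}}$ makes the accumulated multiplicative error $\exp(O(\sum_j \eps_j))$ a geometric series of order $1/\sqrt{d'} = O(\eps)$ once $d' = \Omega(1/\eps^2)$, and since $\Proj$ ranges over all combinations of per-node CountSketch choices, the greedily built tree lies in $\Proj$.
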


To construct such a collection $\Proj$, we introduce a variation of CountSketch using $2$-wise independent permutations \cite{AL13}. The traditional CountSketch implementation \cite{CCF04, CJN18} gives a family of linear maps from $\RR^d$ to $\RR^{d'}$, where $d' = \Omega(1 / (\eps^2\delta))$, such that for all $x\in\RR^d$, the distortion is within $1\pm\eps$ with probability at least $1 - \delta$. We prove a similar result for a modified CountSketch where each element of the family is a projection. CountSketch can be derandomized with $k$-wise independent hash families; we do the same for our variant with $2$-wise independent permutations.

Let $\eps = 4 / \sqrt{d'}$ and let $A_0$ be the $d'\times d$ matrix
\[
  d'\left\{\begin{matrix}
      \vphantom{} \\ \vphantom{} \\ \vphantom{} \\ \vphantom{} \\
      \vphantom{} \\ \vphantom{} \\ \vphantom{} \\
  \end{matrix}\right.
  \underbrace{\begin{bmatrix}
    \smash[b]{\block{d / d'}} & \\
                              & \smash[b]{\block{d / d'}} \\
                              &                           & \ddots \\
                              &                           &        & \block{d / d'}
  \end{bmatrix}}_{d}.
\]
Then $A_0$ is a matrix all of whose non-zero entries are $\sqrt{d' / d}$, has exactly one non-zero entry per column, and has exactly $d / d'$ non-zero entries per row. Let $H$ be a family of $2$-wise independent permutations of $[d]$ and $\Sigma$ be a family of $4$-wise independent hash functions $[d]\to\{\pm 1\}$. Let $A_{h,\sigma}$ for $h\in H$ and $\sigma\in\Sigma$ be the matrix obtained by permuting the columns of $A_0$ by $h$ and then multiplying the $i$-th column by $\sigma(i)$ for each $i\in [d]$. We define our modified CountSketch family to be $\cA\coloneqq\{ A_{h,\sigma} : h\in H, \sigma\in\Sigma \}$. To analyze this family, we use the second-moment method.

\begin{lem}\label{lem:CountSketch}
  Suppose $A$ is sampled uniformly at random from $\cA$. For all $x\in\RR^d$ of unit norm,
  \[ \PP\bigp{\abs{\sqrt{\frac{d}{d'}}\norm{Ax}_2 - 1} > \eps} < \frac 12. \]
\end{lem}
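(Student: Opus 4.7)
The plan is to apply the second-moment method to $Y\coloneqq\norm{Ax}_2^2$. Specifically, I would show $\EE[Y] = d'/d$ and $\mathrm{Var}(Y)\le 2d'/d^2$, and then apply Chebyshev to obtain $\PP(\abs{Y - d'/d} > \eps\cdot d'/d)\le 2/(\eps^2 d')$; with $\eps = 4/\sqrt{d'}$ this is $1/8 < 1/2$. The claim on $\abs{\sqrt{d/d'}\norm{Ax}_2 - 1}$ then follows from the elementary fact that $\abs{\sqrt u - 1} > \eps$ implies $\abs{u - 1} > \eps$ for $u \ge 0$, applied with $u = (d/d')Y$.

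Concretely, I would write $(Ax)_i = \sqrt{d'/d}\sum_{j \in S_i}\sigma(j)x_j$, where $S_i\coloneqq\{j : h(j)\in B_i\}$ and $B_i$ is the $i$-th block of $d/d'$ row-indices of $A_0$, so that
\begin{equation*}
  Y = \tfrac{d'}{d}\sum_{j,k}\sigma(j)\sigma(k)x_j x_k\cdot [h(j)\text{ and }h(k)\text{ lie in the same block}].
\end{equation*}
Taking expectation over the $4$-wise independent Rademachers $\sigma$ (only $2$-wise is actually needed for this step) annihilates every $j\neq k$ term, leaving $\EE[Y] = (d'/d)\sum_j x_j^2 = d'/d$.

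For the variance I would expand $(Y - d'/d)^2$ as a quadruple sum and take expectation over $\sigma$ first. By $4$-wise independence of $\Sigma$, the only surviving quadruples $(j,k,j',k')$ with $j\neq k$ and $j'\neq k'$ are those with $\{j,k\} = \{j',k'\}$, which yields
\begin{equation*}
  \mathrm{Var}(Y) = \tfrac{2(d')^2}{d^2}\sum_{j\neq k} x_j^2 x_k^2\cdot\PP_h[h(j), h(k)\text{ in same block}].
\end{equation*}
The $2$-wise independence of the permutation family $H$ then lets me evaluate the collision probability by conditioning on $h(j)$: since $h(k)$ is then uniform on $[d]\setminus\{h(j)\}$, the probability is $(d/d'-1)/(d-1)\le 1/d'$. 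Combined with $\sum_{j\neq k}x_j^2 x_k^2\le (\sum_j x_j^2)^2 = 1$ for unit $x$, this gives $\mathrm{Var}(Y)\le 2d'/d^2$, as promised.

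The main subtlety I anticipate is accounting for the fact that $h$ is drawn from a $2$-wise independent permutation family rather than a $2$-wise independent hash family: the bijectivity of $h$ means $h(k)\mid h(j)$ is uniform on $[d]\setminus\{h(j)\}$ rather than on all of $[d]$, slightly changing the collision probability from the naive $1/d'$ to $(d/d'-1)/(d-1)$. The inequality $(d/d'-1)/(d-1)\le 1/d'$ (valid for $d'\ge 1$) absorbs this gap cleanly, so the rest of the proof reduces to a direct $L^2$ computation essentially parallel to classical CountSketch analysis \cite{CCF04}.
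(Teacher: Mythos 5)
Your proof is correct and follows essentially the same route as the paper: second-moment (Chebyshev) analysis of $\norm{Ax}_2^2$, exploiting $2$-wise independence of the permutation and $4$-wise independence of the signs, with the same elementary reduction from the $\sqrt{\cdot}$-form to the squared form. In fact, you handle the one subtlety — that a $2$-wise independent \emph{permutation} gives collision probability $(d/d'-1)/(d-1)$ rather than $(d/d'-1)/d$ — more carefully than the paper's displayed computation, which writes the $d$ denominator; this does not affect the final bound $2/(\eps^2 d') < 1/2$ in either case.
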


\begin{proof}
  Let $\eta_{r,i}$ be an indicator for whether $A_{r,i}$ is non-zero, and let $\sigma_i$ be the sign of the non-zero entry of $A_{\cdot,i}$. Then for $A$ sampled uniformly at random, the $\eta_{\cdot,j}$ vectors are drawn from a $2$-wise independent permutation of $d$ elements and the $\sigma_i$ values are $4$-wise independent, with the sets of values $\{\eta_{\cdot,\cdot}\}$ and $\{\sigma_\cdot\}$ themselves being independent of each other.

  For the event to not occur, it suffices that $\frac{d}{d'}\norm{Ax}_2^2 - 1\in [-2\eps + \eps^2, 2\eps]$. Therefore, it suffices to analyze the random variable $Z\coloneqq\frac{d}{d'}\norm{Ax}_2^2 - 1$. Rewrite
  \[ Z = \sum_{r=1}^{d'}\sum_{i\neq j} \eta_{r,i}\eta_{r,j}\sigma_{i}\sigma_{j} x_ix_j. \]
  By Chebyshev's inequality,
  \begin{align*}
    \PP(\abs{Z} > \eps)
    &\le\frac{1}{\eps^2}\EE(Z^2) \\
    &= \frac{1}{\eps^2}\EE\bigp{\sum_{r=1}^{d'}\bigp{\sum_{i\neq j} \eta_{r,i}\eta_{r,j}\sigma_i\sigma_jx_ix_j}^2} \\
    &\qquad\qquad + \frac{1}{\eps^2}\EE\bigp{\sum_{r\neq s}\bigp{\bigp{\sum_{i\neq j} \eta_{r,i}\eta_{r,j}\sigma_i\sigma_jx_ix_j}\bigp{\sum_{i\neq j} \eta_{s,i}\eta_{s,j}\sigma_i\sigma_jx_ix_j}}}.
  \end{align*}
  Due to the $4$-wise independence of the $\sigma_i$ and the fact that $\eta_{r,i}\eta_{s,i} = 0$ when $r\neq s$, we have that the expectations of all terms in the second summation are zero. By the same reasoning, the expectation of the first term is equal to
  \[ 2d'\sum_{i\neq j} \EE(\eta_{r,i}\eta_{r,j}) x_i^2x_j^2 = 2d'\sum_{i\neq j}\frac{d / d'}{d}\frac{d / d' - 1}{d} x_i^2x_j^2\le 2\bigp{\frac{1}{d'} - \frac{1}{d}}\bigp{\sum_{i} x_i^2}^2\le\frac{2}{d'}. \]
  Thus $\PP(\abs Z > \eps)\le 2 / (\eps^2 d') < 1 / 2$.
\end{proof}

\begin{figure}\label{fig:decomp}
  \centering
  \begin{tikzcd}[column sep=-1.6em]
    & &  &  &  &  &  & \boxed{\quad\quad\quad\quad x\quad\quad\quad\quad} \arrow[llld, "A_0"'] \arrow[rrrd, "A_1"] &  &  &  &  &  &  \\
    & &  &  & \boxed{\quad\quad x_0\quad\quad} \arrow[lld, "A_{00}"'] \arrow[rd, "A_{01}"] &  &  &  &  &  & \boxed{\quad\quad x_1\quad\quad} \arrow[ld, "A_{10}"'] \arrow[rrd, "A_{11}"] &  &  &  \\
    & & \boxed{\quad x_{00}\quad} \arrow[rd, "A_{001}"] \arrow[ld, "A_{000}"'] &  &  & \boxed{\quad x_{01}\quad} \arrow[rd, "A_{011}"] \arrow[ld, "A_{010}"'] &  &  &  & \boxed{\quad x_{10}\quad} \arrow[rd, "A_{101}"] \arrow[ld, "A_{100}"'] &  &  & \boxed{\quad x_{11}\quad} \arrow[ld, "A_{110}"'] \arrow[rd, "A_{111}"] &  \\
    & \quad\vdots\arrow[ld, "A_{0\cdots 00}"']\arrow[d, "A_{0\cdots 01}"] &  & \vdots & \quad\quad\vdots &  & \vdots &  & \vdots &  & \vdots\quad\quad & \vdots &  & \vdots\arrow[d, "A_{1\cdots 10}"']\arrow[rd, "A_{1\cdots 11}"]\quad \\
    \boxed{x_{0\cdots 00}}\quad & \boxed{x_{0\cdots 01}} &  &  & \cdots &  &  & \cdots &  &  & \cdots & & & \boxed{x_{1\cdots 10}} & \quad\boxed{x_{1\cdots 11}}
  \end{tikzcd}
  \caption{An orthogonal decomposition $\cP\in\Proj$ applied to some vector $x\in\RR^d$.}
\end{figure}

We now construct the collection $\Proj$ of orthogonal decompositions of $\RR^d$ into $\RR^{d'}$. Let $\ell\in\ZZ$ be such that $d = 2^\ell d'$. Let $d_j = 2^j d'$ and $\eps_j = 10 / \sqrt{d_\ell}$ for $j\in\{0,\ldots,\ell\}$. Let $\cA_1,\ldots,\cA_\ell$ be families of modified CountSketch projections as defined above, where $\cA_i$ consists of linear maps $\RR^{d_i}\to\RR^{d_{i-1}}$.

Each orthogonal decomposition in $\Proj$ can be constructed as follows: Start with some $A_0\in\cA_\ell$. Let $A_1\colon\RR^{d_\ell}\to\RR^{d_{\ell-1}}$ be a projection onto $\ker(A_0)$ (equivalently a projection onto the orthogonal complement of the row space of $A_0$). We then choose $A_{00}, A_{10}\in\cA_{\ell-1}$, and set $A_{01}$ and $A_{11}$ to be projections onto $\ker(A_{00})$ and $\ker(A_{10})$, respectively. Similarly, we choose $A_{s_1\cdots s_{i-1}0}\in\cA_{\ell-i+1}$ for each $(s_1,\ldots,s_{i-1})\in\{0,1\}^{i-1}$ for all $i\in [\ell]$ and set $A_{s_1\cdots s_{i-1}1}$ to be a projection onto $\ker(A_{s_1\cdots s_{i-1}0})$. For each $(s_1,\ldots,s_\ell)\in \{0,1\}^\ell$, define $P_{s_1\cdots s_\ell}\coloneqq A_{s_1\cdots s_\ell}A_{s_1\cdots s_{\ell-1}}\cdots A_{s_1s_2}A_{s_1}$. We take the set
\[ \cP\coloneqq\bigc{ P_{s_1\cdots s_\ell} : (s_1,\ldots,s_\ell)\in\{0,1\}^\ell} \]
to be an element of $\Proj$. (In particular, it is not difficult to see that $\cP$ is an orthogonal decomposition of $\RR^d$.) For a diagram of this process, see \Cref{fig:decomp}.

We now prove that the set $\Proj$ of all such orthogonal decompositions $\cP$ (i.e., over all choices of elements from the $\cA_i$) has the desired properties for \Cref{prop:splitting}. Suppose $x\in\RR^d$ is of unit norm. By \Cref{lem:CountSketch} there is some $A_0\in\cA_\ell$ that projects $x$ to $x_0\in\RR^{d_{\ell-1}}$ with distortion $1\pm\eps_\ell$. Then $A_1$ also projects $x$ to $x_1\in\RR^{d_{\ell-1}}$ with distortion $1\pm\eps_\ell$ by the Pythagorean theorem. Inducting on $i$ while applying \Cref{lem:CountSketch}, note that at each level there exists some $A_{s_1\cdots s_{i-1}0}\in\cA_i$ that projects $x_{s_1\cdots s_{i-1}}$ with distortion $1\pm\eps_{\ell - i + 1}$ to $x_{s_1\cdots s_{i-1}0}$. Summing the geometric series, observe that the total distortion of any $x_{s_1\cdots s_\ell}$ relative to $x$ in this construction is bounded by
\[ \exp\bigp{\log(1\pm\eps_\ell) + \cdots + \log(1\pm\eps_1)} = \exp(O(\pm\eps_\ell\pm\cdots\pm\eps_1)) = 1\pm O(\eps). \]
Therefore, the orthogonal decomposition $\cP\coloneqq\{P_{s_1\cdots s_\ell}x : (s_1,\ldots,s_\ell)\in\{0,1\}^\ell\}$ satisfies the distortion property required for \Cref{prop:splitting}.

Finally, we check that $\abs\Proj$ is small enough. Note that each element of $\cA_i$ can be characterized by a permutation $h$ on $d$ elements drawn from a $2$-wise independent permutation family and a function $\sigma\colon [d]\to\{\pm 1\}$ drawn from a $4$-wise independent hash family. Both $h$ and $\sigma$ can be characterized by $O(\log d)$ bits, so $\abs{\cA_i} = O(\poly(d))$ for all $i$. Each element of $\Proj$ is defined in terms of $2^\ell - 1 \le d / d'$ selections from $\cA_i$ families. Hence $\abs\Proj = O(\poly(d)^{d/d'}) = O(\poly(d^{d/d'}))$, and our construction of $\Proj$ for \Cref{prop:splitting} is complete.

\subsection{Efficient Filter Families for $\RR^\dimB$}

In this subsection, $\dimB$ and $\epsB$ are functions of $\N$ such that $\dimB = \Theta(\log^{1 + \expB}\N)$ and $\epsB = \Theta(\log^{-\expC / 2}\N)$, where $\expC$ and $\expB$ are as defined in \Cref{sec:overview}. (One can also consider the explicit parameter setting $\expC = 4 / 5$ and $\expB = 2 / 5$.) Furthermore, $\frad = \Theta(\log^{\expB/2}\N)$ is as defined in \Cref{sec:C}.

We construct a filter family with Las Vegas properties for $\RR^\dimB$ via a tensoring operation. The tensoring operation produces a filter family for $\RR^\dimB$ by combining $\dimB / \dimC$ filter families for $\RR^\dimC$ with respect to an orthogonal decomposition of $\RR^\dimB$. We take the union of several tensored filter families, one for each element of $\Proj$, to obtain the filter family for $\RR^\dimB$ that we want. The splitting property of $\Proj$ gives this filter family the desired Las Vegas properties.

% The splitting property of $\Proj$ gives this filter family we construct its Las Vegas properties. That is, for each vector $x\in\RR^\dimB$, there exists an orthogonal decomposition that splits $x$ into components such that each component has distortion within $1\pm\epsB$.
%Formally, this means that if $x$ of unit norm is decomposed as $x_1 + \cdots + x_{\dimB / \dimC}$, then
% \[ \abs{\sqrt{\frac{\dimB}{\dimC}}\norm{x_i}_2 - 1} < \eps \]
% for all $i\in [\dimB / \dimC]$.
% Thus, if two $x, y\in\RR^\dimB$ points are near neighbors before applying the decomposition, then each pair of the projected components also make up a pair of near neighbors, i.e., share a filter in the lower-dimensional family.

We begin by defining what it means to ``tensor'' together filter families and then state the main result (\Cref{prop:filtB}) for this subsection.
\begin{defn}
  The \emph{tensoring} operation takes an orthogonal decomposition $\{P_i\}_{i=1}^{d / {d'}}$ of $\RR^d$ into $\RR^{d'}$ and filter families $\cF_1,\ldots,\cF_{d / {d'}}$ for $\RR^{d'}$ and returns a filter family $\cG$ whose filters are such that
  \[ \cG(x) = \cF_1\bigp{\sqrt{\frac{d}{{d'}}} P_1 x}\times\cdots\times\cF_{d / {d'}}\bigp{\sqrt{\frac{d}{{d'}}} P_{d / {d'}} x}. \]
  In particular, the filters of $\cG$ are implicitly characterized by defining $\cG(x)$ for each $x\in\RR^d$.
\end{defn}

\begin{prop}\label{prop:filtB}
  There is a distribution $\FiltB$ over filter families for $\RR^\dimB$ with the following properties:
  \begin{enumerate}
    \item\label{filtB:sample}
      A filter family $\cG$ can be sampled from $\FiltB$ in expected time $O(\poly(\dimB^{\dimB / \dimC}\dimC^\dimC))$.
    \item\label{filtB:decode}
      For all $x\in\RR^\dimB$, $\cG(x)$ can be computed in expected time $O(\poly(\dimB^{\dimB / \dimC}\dimC^\dimC) + \abs{\cG(x)})$. % TODO: soft-O?
    \item\label{filtB:all_pairs}
      For all $x, y\in\RR^\dimB$ such that $\dist{x}{y}\le 1$, $\cG(x)\cap\cG(y)\neq\emptyset$.
    \item\label{filtB:local}
      Let $t\ge 0$ be a fixed constant. For all $x,y\in\RR^\dimB$ such that $\dist{x}{y}\ge t$,
      \[ \EE_{\cG\sim\FiltB}(\abs{\cG(x)\cap\cG(y)}) \le O\bigp{\poly(\dimB^{\dimB / \dimC})\exp\bigp{-\frac{\dimB}{2}\frac{t^2-1 - o(1)}{4\frad^2}}}. \]
      In particular, for $t = 0$, $\EE_{\cG\sim\FiltB}(\abs{\cG(x)}) = O(\poly(\dimB^{\dimB / \dimC})\exp(\frac{\dimB}{2}\frac{1 + o(1)}{4\frad^2}))$.
  \end{enumerate}
\end{prop}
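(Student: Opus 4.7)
The plan is to construct $\FiltB$ by independently tensoring copies of $\Filt$ along every orthogonal decomposition in $\Proj$ and then taking the union. Concretely, I would fix $\Proj$ from \Cref{prop:splitting} with splitting parameter $\eps=\epsB$. For each $\cP=\{P_i\}_{i=1}^{\dimB/\dimC}\in\Proj$, I would sample filter families $\cF_1^\cP,\ldots,\cF_{\dimB/\dimC}^\cP$ independently from $\Filt$, tensor them along $\cP$ to obtain a filter family $\cG_\cP$, and set $\cG\coloneqq\bigcup_{\cP\in\Proj}\cG_\cP$. So that property \ref{filtB:all_pairs} survives the $(1+\epsB)$ splitting slack, I would calibrate $\Filt$ at the enlarged close-pairs threshold $1+\epsB$ rather than $1$; since $\epsB=o(1)$, the resulting $(1+\epsB)^2=1+o(1)$ factor is absorbed into the $o(1)$ term of property \ref{filt:local} of $\Filt$.

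Properties \ref{filtB:sample} and \ref{filtB:decode} then reduce to bookkeeping using the efficiency guarantees of $\Filt$ and $\Proj$. There are $\abs\Proj=O(\poly(\dimB^{\dimB/\dimC}))$ decompositions, each requiring $\dimB/\dimC$ independent draws from $\Filt$ at cost $O(\poly(\dimC^\dimC))$ per draw by property \ref{filt:sample} of $\Filt$, which yields the stated sampling time. For decoding $\cG(x)$, I would iterate over $\cP\in\Proj$, compute the scaled components $\sqrt{\dimB/\dimC}\,P_ix\in\RR^\dimC$, invoke property \ref{filt:decode} of $\Filt$ on each component to get filter lists, form the Cartesian product per $\cP$, and union; the cost of assembling these lists is dominated by the polynomial overhead plus the size of the output.

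The Las Vegas guarantee of property \ref{filtB:all_pairs} is where the splitting collection earns its keep, and this is the main obstacle. Given $x,y\in\RR^\dimB$ with $\dist{x}{y}\le 1$, I would apply \Cref{prop:splitting} to the unit vector $(x-y)/\dist{x}{y}$: this produces some $\cP\in\Proj$ such that $\sqrt{\dimB/\dimC}\norm{P_i(x-y)}_2\le(1+\epsB)\dist{x}{y}\le 1+\epsB$ for every $P_i\in\cP$. With $\Filt$ calibrated at threshold $1+\epsB$, property \ref{filt:all_pairs} of $\Filt$ then produces a shared filter in each component family $\cF_i^\cP$, and the product of these shared filters is a common filter in $\cG_\cP\subseteq\cG$.

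For property \ref{filtB:local}, I would set $t_i^{(\cP)}\coloneqq\sqrt{\dimB/\dimC}\,\norm{P_i(x-y)}_2$. By independence of the component families, $\EE[\abs{\cG_\cP(x)\cap\cG_\cP(y)}]$ factors as a product over the $\dimB/\dimC$ components; applying property \ref{filt:local} of $\Filt$ to each factor and using the orthogonal-decomposition identity $\sum_i(t_i^{(\cP)})^2=(\dimB/\dimC)\dist{x}{y}^2\ge(\dimB/\dimC)t^2$ collapses the product of exponentials into $\exp(-(\dimB/2)(t^2-1-o(1))/(4\frad^2))$ up to a $\poly(\dimC^{\dimB/\dimC})$ prefactor. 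Summing over $\cP\in\Proj$ then multiplies by $\abs\Proj$, which folds into the $\poly(\dimB^{\dimB/\dimC})$ prefactor of the claimed bound. The only delicate step is managing the $(1+\epsB)$ slack via the threshold recalibration so that property \ref{filt:local} still carries a $1+o(1)$ term; once that calibration is fixed, everything else is additive and multiplicative accounting.
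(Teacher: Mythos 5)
Your proposal is correct and follows essentially the same route as the paper: tensor copies of $\Filt$ along each $\cP\in\Proj$, take the union over $\cP$, use the splitting property of $\Proj$ to verify the Las Vegas guarantee, and factor the expected-collision bound across components using the orthogonal-decomposition identity $\sum_i\norm{P_i(x-y)}_2^2 = \dist{x}{y}^2$. The only difference is where the $(1+\epsB)$ slack from \Cref{prop:splitting} is absorbed: the paper defines $\cG(x)\coloneqq\cG_0(x/(1+\epsB))$, rescaling the input at the top level so that the inner $\Filt$ from \Cref{prop:filt} can be reused verbatim at threshold $1$, whereas you recalibrate $\Filt$ itself to a close-pairs threshold of $1+\epsB$, which requires re-tracing the verification step of \Cref{sec:C} with the enlarged radius (a harmless but real re-derivation the paper's rescaling avoids). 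Both choices are formally equivalent since the scalings commute, and both place the resulting $(1+\epsB)^2 = 1+o(1)$ factor into the $o(1)$ term of the exponent, so the proof goes through identically otherwise.
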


% NOTE: bounding the o(1) term
% (1): alpha - beta
% (2): alpha - beta
% (3):
% (4): alpha / 2

Let $\Filt$ be as in \Cref{prop:filt} and $\Proj$ be as in \Cref{prop:splitting}, with the elements of $\Proj$ decomposing $\RR^{\dimB}$ into copies of $\RR^{\dimC}$. To sample a filter family from $\FiltB$, we first define $\cG_\cP$ as a filter family obtained by applying the tensoring operation to $\cP\in\Proj$ and $\dimB / \dimC$ filter families drawn from $\Filt$. Now, define $\cG_0\coloneqq\bigcup_{\cP\in\Proj}\cG_\cP$. Our sampled family $\cG$ is defined so that $\cG(x) = \cG_0(x / (1 + \epsB))$.

Such a $\cG$ can be sampled in time $\abs{\Proj}\frac{\dimB}{\dimC}O(\poly(\dimC^\dimC)) = O(\poly(\dimB^{\dimB / \dimC}\dimC^\dimC))$, which yields property \ref{filtB:sample}. To compute $\cG(x)$, we first compute $\cF(x)$ in time $O(\poly(\dimC^\dimC))$ for each $\cF$ sampled from $\Filt$. There are $\poly(\dimB^{\dimB / \dimC})$ such $\cF$, so the time for this step is $O(\poly(\dimB^{\dimB / \dimC}\dimC^\dimC))$. Finally, $\cG(x)$ can be generated from the sets $\cF(x)$ in $O(\abs{\cG(x)})$, giving us property \ref{filtB:decode}.

To see property \ref{filtB:local}, suppose $x, y\in\RR^\dimB$ are such that $\dist{x}{y}\ge t\ge 0$. Let $x' = x / (1 + \epsB)$ and $y' = y / (1 + \epsB)$. Then for each $\cG_\cP$, where $\cP\coloneqq\{P_i\}_{i=1}^{\dimB / \dimC}\in\Proj$ and $\cF_1,\ldots,\cF_{\dimB / \dimC}$ are sampled from $\Filt$,
\begin{align*}
  \EE(\abs{\cG_\cP(x')\cap\cG_\cP(y')})
  &= \prod_{i=1}^{\dimB / \dimC} \EE_{\cF_i\sim\Filt}\bigp{\abs{\cF_i\bigp{\sqrt{\frac{\dimB}{\dimC}} P_ix'}\cap\cF_i\bigp{\sqrt{\frac{\dimB}{\dimC}} P_iy'}}} \\
  &\le\prod_{i=1}^{\dimB / \dimC} O\bigp{\poly(\dimC)\exp\bigp{-\frac{\dimC}{2}\frac{\frac{\dimB}{\dimC}\dist{P_ix'}{P_iy'}^2 - 1 - o(1)}{4\frad^2}}} \\
  &= O\bigp{\poly(\dimC^{\dimB / \dimC})\exp\bigp{-\frac{\dimB}{2}\frac{t^2 - 1 - o(1)}{4\frad^2}}}.
\end{align*}
Since $\EE_{\cG\sim\FiltB}(\abs{\cG(x)\cap\cG(y)}) = \poly(\dimB^{\dimB / \dimC})\EE(\abs{\cG_{\cP}(x')\cap\cG_{\cP}(y')})$, the above gives us property \ref{filtB:local}.

Finally, to verify property \ref{filtB:all_pairs}, suppose $x, y\in\RR^\dimB$ are such that $\dist{x}{y}\le 1$. Let $x' = x / (1 + \epsB)$ and $y' = y / (1 + \epsB)$. By the splitting property of $\Proj$ (see \Cref{prop:splitting}), there exists some $\cP$ in $\Proj$ such that for all $P\in\cP$,
\[ \sqrt{\frac{\dimB}{\dimC}}\norm{Px' - Py'}_2\le 1. \]
Then $\cG(x)\cap\cG(y)\supseteq\cG_\cP(x')\cap\cG_\cP(y')\neq\emptyset$ by the tensor definition of $\cG_\cP$ and property \ref{filt:all_pairs} of $\Filt$.

\begin{cor}\label{cor:B}
  There exists a Las Vegas data structure for Euclidean $\C$-approximate near neighbors in $\RR^\dimB$ with expected query time $O(\N^{\rho})$, space usage $O(\N^{1 + \rho})$, and preprocessing time $O(\N^{1 + \rho})$ for $\rho = 1 / \C^2 + o(1)$.
\end{cor}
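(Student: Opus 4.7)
The plan is to use a single sample $\cG\sim\FiltB$ from \Cref{prop:filtB} as the locality-sensitive filter family underlying a standard LSF data structure. First I would sample $\cG$ using property \ref{filtB:sample} in time $\N^{o(1)}$, then for each database point $p\in\cD$ enumerate $\cG(p)$ using property \ref{filtB:decode} and insert $p$ into a hash-table bucket indexed by each filter $F\in\cG(p)$. Given a query $q\in\RR^\dimB$, I would compute $\cG(q)$, scan every point stored under each $F\in\cG(q)$, and return any such point at distance at most $\C$ from $q$.

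The Las Vegas guarantee is immediate from property \ref{filtB:all_pairs}: if some $p^*\in\cD$ satisfies $\dist{p^*}{q}\le 1$, then $\cG(p^*)\cap\cG(q)\neq\emptyset$, so $p^*$ is stored in at least one scanned bucket and the subsequent distance check returns $p^*$ (or something at least as close). For the complexity, write $\alpha\coloneqq\EE_{\cG\sim\FiltB}|\cG(x)|$ and let $\beta\coloneqq\EE_{\cG\sim\FiltB}|\cG(x)\cap\cG(y)|$ for any fixed pair with $\dist{x}{y}\ge\C$. Property \ref{filtB:local} gives
\[ \alpha\le\poly(\dimB^{\dimB/\dimC})\exp\bigp{\tfrac{\dimB}{8\frad^2}(1+o(1))}\quad\text{and}\quad\beta\le\poly(\dimB^{\dimB/\dimC})\exp\bigp{-\tfrac{\dimB(\C^2-1)}{8\frad^2}(1+o(1))}. \]
Consequently expected space usage is $O(\N\alpha)$, expected preprocessing is $O(\N\alpha)+\N^{1+o(1)}$, and expected query time is $O(\alpha+\N\beta)$ (plus $\N^{o(1)}$ overhead for decoding $\cG(q)$).

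The one substantive step is to pick the hidden constant in $\frad=\Theta(\log^{\expB/2}\N)$ so that $\alpha$ and $\N\beta$ balance. Taking $\frad^2=\C^2\dimB/(8\log\N)\cdot(1+o(1))$---which is compatible with $\dimB=\Theta(\log^{1+\expB}\N)$ and $\frad=\Theta(\log^{\expB/2}\N)$---gives $\dimB/(8\frad^2)=(\log\N)/\C^2\cdot(1+o(1))$, so both $\log\alpha$ and $\log(\N\beta)$ equal $(\log\N)/\C^2\cdot(1+o(1))$ up to the additive term $\log\poly(\dimB^{\dimB/\dimC})=(\dimB/\dimC)\log\dimB\cdot O(1)$ coming from the prefactor. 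Since $\expB<\expC$ and hence $\dimB/\dimC=\log^{1+\expB-\expC}\N$, this correction is $o(\log\N)$, yielding $\alpha=\N\beta=\N^{1/\C^2+o(1)}$ and the stated bounds. The main thing to verify carefully is that the $\poly(\dimB^{\dimB/\dimC}\dimC^\dimC)$ overhead in \Cref{prop:filtB} together with the $O(\dimB)$ cost of each distance check are all absorbed into $\N^{o(1)}$; this reduces to the parameter inequalities $\expB<\expC<1$ from \Cref{sec:overview}, after which the rest is routine LSF bookkeeping.
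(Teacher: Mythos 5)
Your proposal is correct and follows essentially the same route as the paper: sample $\cG\sim\FiltB$, build the standard LSF hash-table index, invoke property~\ref{filtB:all_pairs} for the Las Vegas guarantee, and choose the constant in $\frad$ (the paper sets $\frad=\sqrt{\dimB/(8\log\N)}\,\C$, matching your $\frad^2=\C^2\dimB/(8\log\N)(1+o(1))$) to balance $\alpha$ and $\N\beta$ at $\N^{1/\C^2+o(1)}$, with the $\poly(\dimB^{\dimB/\dimC}\dimC^\dimC)=\N^{o(1)}$ prefactor absorbed because $\expB<\expC<1$.
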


\begin{proof}
  We use the distribution $\FiltB$ constructed in \Cref{prop:filtB}. Recall our parameter settings for $\dimC$ and $\dimB$ in terms of $\expC$ and $\expB$, i.e., $\dimC = \Theta(\log^{\expC}\N)$ and $\dimB = \Theta(\log^{1 + \expB}\N)$. Notice that $\poly(\dimB^{\dimB / \dimC}\dimC^\dimC) = \N^{o(1)}$. Recall also that $\frad = \Theta(\log^{\expB / 2}\N)$. For $\frad$, we specifically set
  \[ \frad\coloneqq\sqrt{\frac{\dimB}{8\log\N}}\C. \]
  By \Cref{prop:filtB}, for $\cG\sim\FiltB$, the expected number of filters containing any $x\in\RR^{\dimB}$ is $O(\N^{1/\C^2 + o(1)})$. In preprocessing, we sample $\cG$ and compute $\cG(x)$ for each $x$ in $\cD$, which takes time $O(\N^{1 + 1/\C^2 + o(1)})$. Because we need $\abs{\cG(x)}$ space to store the filters containing $x$, space usage is also $O(\N^{1 + 1 / \C^2 + o(1)})$. Finally, to answer queries with this data structure, we can compute $\cG(q)$ for a query point $q\in\RR^\dimB$ in expected time $O(n^{1 / \C^2 + o(1)})$. The expected number of collisions between $q$ and distant points in $\cD$ is $n\cdot O(n^{-1 + 1 / \C^2 + o(1)}) = O(n^{1 / \C^2 + o(1)})$. Hence the expected runtime of a query is $O(n^{1 / \C^2 + o(1)})$.
\end{proof}

\section{Projecting Down}\label{sec:A}

We complete our construction of the data structure in this section by defining a two-stage sequence of dimensionality reductions that efficiently reduces the original problem into $\dimA / \dimB$ instances of approximate near neighbors in $\dimB$ dimensions. In this reduction, it is guaranteed that any pair of near neighbors are considered near neighbors in \emph{at least} one of the lower-dimensional problems.

Our dimensionality reduction improves on previous implementations of the ``one-sided'' dimensionality reduction idea \cite{Ahle17, SW17, Wygocki17} in that the overhead per point is now $\tilde O(\dimA)$ instead of having a linear dependence on $\dimA^2$. The data structure in \cite{Ahle17} implements this idea for Hamming space and has $O(1)$ false positives from each lower-dimensional problem. Each false positive requires $O(\dimA)$ time to filter out, resulting in a total cost of $O(\dimA^2 / \dimB)$. Our two-stage dimensionality reduction lets us spend only $\tilde O(\dimA)$ time handling false positives. The data structures of \cite{SW17} and \cite{Wygocki17} implement one-sided dimensionality reduction for Euclidean space, but require $\Omega(\dimA^2)$ time to compute the dimensionality reduction since they multiply by a random rotation in $\RR^\dimA$. We note that this can be improved to $\tilde O(\dimA)$ by applying a slightly modified FastJL.

% As in \Cref{sec:B}, we want the maps of our dimensionality reduction to be projections so that we can use them to construct families of orthogonal decompositions. The goal is to preserve the squared norm of the original vector across the subproblems in order to enforce the Las Vegas property. We start with an orthogonal variant of FastJL to reduce to problems of dimension $\dimB' = \Theta(\epsA^{-2}\log^2(\N\dimA))$ and then apply the Johnson-Lindenstrauss lemma via projecting onto random subspaces as in \cite{SW17, Wygocki17} to get to dimension $\dimB$.

% One motivation for the two-stage process is that the first dimensionality reduction (to dimension $\dimB'$) creates very few false positives, whereas the second dimensionality creates many more. However, we can detect and handle the false positives created by the second reduction while still in dimension $\dimB'$, thereby saving on computational cost. The second motivation is that FastJL does not project to an optimal dimension, with the resulting dimension still having a dependence on $\dimA$.

After applying the dimensionality reduction, we use the data structure constructed for $\RR^\dimB$ from \Cref{sec:B} to get a Las Vegas data structure that solves $\C$-approximate near neighbors for all dimensions with an exponent of $\rho = 1 / \C^2 + o(1)$.

\subsection{Efficient Distributions of Orthogonal Decompositions}

In this subsection, we prove two ``one-sided'' dimensionality reduction results (\Cref{prop:fastJL} and \Cref{lem:JL}) using orthogonal decompositions. These results will be used to reduce the dimension of the problem from $\dimA$ to $\dimB$.

For this subsection, we assume without loss of generality that the initial dimension $d$ is a power of two. We can guarantee this by padding zeros, which increases $d$ by at most a constant factor.

\begin{prop}\label{prop:fastJL}
  For all $0 < \eps < 1 / 2$, $0 < \delta < 1 / 2$, and $d' = \Omega(\eps^{-2}\log(1 / (\eps\delta))\log(1 / \delta))$, there exists a distribution $\mathscr P_1$ over orthogonal decompositions of $\RR^{d}$ into $\RR^{d'}$ such that for all $x\in\RR^{d}$ of unit norm and $i\in [d / d']$, where $\cP\coloneqq\{P_i\}_{i=1}^{d / d'}$ is sampled from $\mathscr P_1$,
  \[ \PP_{\cP\sim\mathscr P_1}\bigp{\abs{\sqrt{\frac{d}{d'}}\norm{P_ix}_2 - 1} > \eps} < \delta. \]
  Furthermore, the set of values $\{P_ix\}_{i=1}^{d / d'}$ is computable in $O(d\log d)$ time.
\end{prop}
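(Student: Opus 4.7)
The plan is to implement a subsampled randomized Hadamard transform in the style of \cite{AC09}, with the twist that all $d/d'$ output blocks are kept so as to tile $\RR^d$ and form a bona fide orthogonal decomposition. Concretely, let $H$ be the normalized $d\times d$ Hadamard matrix, let $D$ be a diagonal matrix of independent Rademacher signs, and let $\pi$ be a uniformly random permutation of $[d]$; then $\pi HD$ is an orthogonal matrix. I define $P_i$ to be the $d'\times d$ submatrix consisting of rows $(i-1)d'+1,\ldots,id'$ of $\pi HD$. Since the rows of $P_1,\ldots,P_{d/d'}$ together form an orthonormal basis of $\RR^d$, the family $\{P_i\}$ is an orthogonal decomposition. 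The runtime claim is immediate: $HDx$ is computable in $O(d\log d)$ time by the Fast Hadamard Transform, after which permuting and slicing the result into $d/d'$ blocks takes $O(d)$ additional time, so $\{P_ix\}$ is produced in $O(d\log d)$ time total.

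To prove the distortion bound for a fixed $i$, I would proceed in two steps. The preprocessing step is a ``flatness'' claim: for an appropriate $M = O\bigp{\sqrt{\log(1/(\eps\delta))/d}}$, the $d'$ coordinates of $y\coloneqq HDx$ landing in block $i$ all satisfy $\abs{y_j}\le M$ with probability at least $1-\delta/2$. This follows because each coordinate of $y$ is of the form $\frac{1}{\sqrt d}\sum_\ell\pm x_\ell$ with independent Rademacher signs, to which Hoeffding's inequality applies; a union bound is then taken only over the $d'$ coordinates in block $i$. The second step is concentration of the block norm: conditional on flatness, $\norm{P_ix}_2^2$ is a sum of $d'$ values of the form $y_j^2$ drawn without replacement from $\{y_1^2,\ldots,y_d^2\}$, with mean $d'/d$, summands bounded by $M^2$, and variance at most $M^2 d'/d$. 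A Bernstein-type inequality for sampling without replacement (Serfling) then gives
\[ \PP\bigp{\abs{\norm{P_ix}_2^2 - d'/d} > \eps\cdot d'/d}\le 2\exp\bigp{-\Omega\bigp{\eps^2 d'/(M^2 d)}}, \]
which is at most $\delta/2$ once $d' = \Omega(\eps^{-2}\log(1/\delta)\log(1/(\eps\delta)))$. Converting squared-norm to norm distortion loses only a factor of two, and union bounding with the flatness event finishes the proof.

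The technical subtlety I expect to be the main obstacle is pinning down the log factors to match the stated lower bound on $d'$: a direct union bound over all $d$ coordinates in the flatness step would introduce a stray $\log d$ factor not present in the claim. My approach sidesteps this by union bounding only over the single block that $i$ picks out, exploiting the fact that by uniform randomness of $\pi$ this block is distributed as a uniform $d'$-subset and no individual coordinate of $y$ is pre-selected by the adversary. If the coupling between the flatness conditioning and the sampling-without-replacement step proves inconvenient, an alternative plan is to bound $\norm{P_ix}_2^2$ directly as a Rademacher quadratic form via Hanson--Wright, which yields the same tail up to constants and avoids the intermediate flatness step altogether.
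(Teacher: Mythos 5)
Your construction is identical to the paper's: sample a random permutation $\pi$ and a Rademacher diagonal $D$, take the orthogonal matrix $\pi H D$, and slice its rows into $d/d'$ blocks. The runtime analysis is also the same. Where you diverge is in proving the distortion bound: the paper applies the moment method, using Hoeffding's comparison theorem to reduce sampling without replacement to sampling with replacement and then invoking a moment bound for the subsampled randomized Hadamard transform from \cite{CNW16} before finishing with Markov at $p = \log(1/\delta)$. You instead propose the classical two-step SRHT analysis (flatness of $HDx$ followed by concentration of a sampled block norm).

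The flatness step as you state it has a genuine gap, and it is the one you yourself flag as a worry. Union bounding only over the $d'$ coordinates in block $i$ requires conditioning on $\pi$ (to know which coordinates those are); but once $\pi$ is fixed, $\sum_{j \in J} y_j^2$ is a deterministic function of $D$ given the block $J$, so there is no remaining randomness over which to run a Serfling/Bernstein sampling-without-replacement argument. The clean ordering is the opposite one: condition on $D$ first, union bound flatness over \emph{all} $d$ coordinates to get $M = O(\sqrt{\log(d/\delta)/d})$, and then use the randomness of $\pi$ for the without-replacement concentration. That proof is correct but yields $d' = \Omega(\eps^{-2}\log(d/\delta)\log(1/\delta))$, which has a stray $\log d$ not present in the stated claim (take $\eps, \delta$ constant and $d$ large to see the two bounds genuinely differ). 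Your Hanson--Wright fallback does not repair this, because it again conditions on $\pi$: for a fixed block $J$, the matrix $A = X H_J^\top H_J X$ (with $X = \operatorname{diag}(x)$) governing the Rademacher quadratic form $g^\top A g$ can have off-diagonal mass up to order $d'/d$ per entry, and without averaging over $J$ the resulting Frobenius and operator norms are too large to give the stated tail. The moment-method route the paper takes exploits the joint randomness of $D$ and the row sampling, which is exactly what your fixed-$\pi$ approaches give up.

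One mitigating observation: in the paper's actual application, $\delta = 1/(n d)$, so $\log(d/\delta) = O(\log(1/\delta))$ and the stray $\log d$ is absorbed. The downstream results of \Cref{sec:A} would therefore survive with your analysis, but \Cref{prop:fastJL} as stated (with $\log(1/(\eps\delta))$ rather than $\log(d/\delta)$) would not be proved by it.
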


To construct this distribution $\mathscr P_1$, we use a modified FastJL \cite{AC09}. Let $H_d$ be the $d$-dimensional Hadamard matrix, and consider the distribution $\cA$ over $d'\times d$ matrices defined as follows: Sample a $d\times d$ diagonal matrix $D$ whose diagonal entries are independent Rademachers and a $d'\times d$ matrix $S$ whose rows are standard basis vectors sampled without replacement. We take $A\coloneqq SH_d D$ to be the sampled element of $\cA$. Note that $A$ is orthogonal. We analyze $\cA$ with the following lemma:

\begin{lem}\label{lem:fastJL}
  Suppose $A\in\cA$ is sampled uniformly at random. Then for all $x\in\RR^d$ of unit norm,
  \[ \PP\bigp{\abs{\sqrt{\frac{d}{d'}}\norm{Ax}_2 - 1} > \eps} < \delta. \]
\end{lem}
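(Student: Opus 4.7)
The plan is to follow the two-stage FastJL analysis of Ailon--Chazelle. The key observation is that $A = SH_d D$ decomposes as a randomized isometry $H_d D$ followed by a coordinate subsampling $S$, and each stage can be analyzed essentially separately. Throughout I take $H_d$ to be the $\pm 1/\sqrt d$-normalized Hadamard matrix so that $H_d D$ is orthogonal and $y \coloneqq H_d D x$ satisfies $\|y\|_2 = \|x\|_2 = 1$ deterministically.

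\textbf{Stage 1 (flattening).} Each coordinate $y_i = \frac{1}{\sqrt d}\sum_j \epsilon_{ij} x_j$ is a Rademacher sum, where $\epsilon_{ij} \in \{\pm 1\}$ is the signed product of $(H_d\sqrt d)_{ij}$ and $D_{jj}$ and only the second factor is random. By Hoeffding's inequality each $|y_i|$ is sub-Gaussian with variance proxy $1/d$, so a union bound over the $d$ coordinates gives an event $\cE_1$ of probability at least $1 - \delta/2$ on which $\|y\|_\infty \le L$ for $L = O(\sqrt{\log(d/\delta)/d})$.

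\textbf{Stage 2 (subsampling).} On $\cE_1$ we may write $\tfrac{d}{d'}\|Ax\|_2^2 = \tfrac{d}{d'}\sum_{j \in S} y_j^2$, a rescaled sum of $d'$ terms sampled without replacement from $\{y_1^2,\ldots,y_d^2\}$, with mean $1$ since $\|y\|_2^2 = 1$. Each summand lies in $[0, L^2]$ and the indicators $\mathbf{1}[j \in S]$ are negatively associated, so a Serfling--Hoeffding bound for sampling without replacement yields
\[ \PP\bigp{\abs{\tfrac{d}{d'}\|Ax\|_2^2 - 1} > \eps \,\bigm|\, \cE_1} \le 2\exp\bigp{-\Omega(\eps^2 d'/(dL^2))} \le \delta/2 \]
whenever $d' = \Omega(\eps^{-2}\,dL^2\,\log(1/\delta))$. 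A union bound across the two stages together with the elementary implication $|z^2 - 1| < \eps \Rightarrow |z - 1| < \eps$ (for $z \ge 0$ and $\eps < 1$) then gives the claimed tail.

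\textbf{Main obstacle.} The trickiest point is matching the stated $\log(1/(\eps\delta))$ factor in the lower bound on $d'$, rather than the $\log(d/\delta)$ that falls out of the flattening bound naively. I would handle this by a standard preconditioning: if $d$ exceeds $\poly(1/(\eps\delta))$, an initial coarse reduction (e.g., applying $H_d D$ followed by a sampling matrix of dimension $\poly(1/(\eps\delta))$) caps the effective dimension before running the main argument, so that the $L$ appearing in Stage 2 depends only on $\log(1/(\eps\delta))$. Alternatively one can replace Hoeffding in Stage 2 by a Bernstein-type bound that exploits the variance $\mathrm{Var}(y_j^2) = O(L^2/d)$ of the subsampled summands, which is strictly smaller than their range and yields the same improvement.
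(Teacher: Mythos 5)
Your two-stage flattening-plus-subsampling argument is a genuinely different route from the paper's. The paper instead first compares sampling without replacement to sampling with replacement via a convex-ordering inequality of Hoeffding (Theorem 4 of \cite{Hoeffding63}) applied to the $p$-th moment, and then invokes a black-box moment bound for the subsampled randomized Hadamard transform (Theorem 9 of \cite{CNW16}), finishing with Markov's inequality at $p = \log(1/\delta)$. The payoff of the paper's route is precisely that the CNW-type moment bound avoids ever passing through $\|H_dDx\|_\infty$, which is where the extraneous $\log d$ enters your argument.

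This is also where there is a genuine gap in your write-up. Your Stage 1 union bound over $d$ coordinates unavoidably forces $L = \Theta(\sqrt{\log(d/\delta)/d})$ (for a generic unit $x$, $\|H_d D x\|_\infty$ really is $\Theta(\sqrt{\log d / d})$ with constant probability, so this cannot be sharpened). Plugging $dL^2 = O(\log(d/\delta))$ into your Stage 2 exponent $\exp(-\Omega(\eps^2 d'/(dL^2)))$ yields the requirement $d' = \Omega(\eps^{-2}\log(d/\delta)\log(1/\delta))$, not the $\log(1/(\eps\delta))$ stated. Neither of your two proposed fixes resolves this. The preconditioning suggestion changes the distribution from $A = SH_dD$ to a composition of two such maps, and so it proves a statement about a \emph{different} distribution than the $\cA$ the lemma is about; \Cref{lem:fastJL} is a claim about the specific $\cA$ defined just above it. The Bernstein suggestion is moot: your displayed exponent $\exp(-\Omega(\eps^2 d'/(dL^2)))$ is already the Bernstein-type bound (plain Hoeffding would give $\exp(-\Omega(\eps^2 d'/(d^2L^4)))$), and the $\log d$ lives inside $L$ rather than in the concentration step, so sharpening the tail inequality cannot remove it. Any argument that routes through $\|H_dDx\|_\infty$ will carry a $\log d$; the moment-based analysis in the paper sidesteps this entirely, which is why the citation to \cite{CNW16} is doing real work and is not just a stylistic shortcut.

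Two smaller points, both fine as written: the decomposition of the failure probability as $\PP(\cE_1^c) + \PP(\text{bad}\mid\cE_1)$ is valid because $D$ and $S$ are independent, and the elementary implication $|z^2 - 1| < \eps \Rightarrow |z - 1| < \eps$ for $z \ge 0$, $0 < \eps < 1$ is correct.
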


\begin{proof}
  Let $\cA'$ be the distribution $\cA$ with the modification that the rows of the coordinate sampling matrix $S$ are sampled \emph{with} replacement. Applying \cite[Theorem 4]{Hoeffding63} and \cite[Theorem 9]{CNW16}, we have for $p = \log(1 / \delta)$ that
  \[ \EE_{A\sim\cA}\bigp{\abs{\frac{d}{d'}\norm{Ax}_2^2 - 1}^p}\le\EE_{A\sim\cA'}\bigp{\abs{\frac{d}{d'}\norm{Ax}_2^2 - 1}^p} < \eps^p\delta, \]
  since $x\mapsto\abs{x}^p$ is a convex function. By Markov's inequality, the lemma follows.
\end{proof}

We now construct our distribution $\mathscr P_1$ over orthogonal decompositions by sampling $D\in\RR^{d\times d}$ and a random permutation matrix $P\in\RR^{d\times d}$. We define the orthogonal decomposition to be the rows of $PH_dD$ partitioned into $d / d'$ matrices of dimension $d'\times d$. Since $PH_dD$ is orthogonal, the resulting family is an orthogonal decomposition. By \Cref{lem:fastJL}, this distribution over orthogonal decompositions satisfies the condition of \Cref{prop:fastJL}.

\begin{lem}[\cite{SW17}]\label{lem:JL}
  For all $0 < \eps < 1 / 2$, $0 < \delta < 1 / 2$, and $d' = \Omega(\eps^{-2}\log(1 / \delta))$, there exists a distribution $\mathscr P_2$ over orthogonal decompositions of $\RR^d$ into $\RR^{d'}$ such that for all $x\in\RR^d$ of unit norm and $i\in [d / d']$, where $\cP = \{P_i\}_{i=1}^{d / d'}$ is sampled from $\mathscr P_2$,
  \[ \PP_{\cP\sim\mathscr P_2}\bigp{\abs{\sqrt{\frac{d}{d'}}\norm{P_i x}_2 - 1} > \eps} < \delta. \]
\end{lem}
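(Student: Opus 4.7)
The plan is to construct $\mathscr P_2$ by sampling a single uniformly random orthogonal matrix $U\in O(d)$ from the Haar measure and then partitioning its $d$ rows into $d/d'$ consecutive blocks of $d'$ rows each; call the resulting matrices $P_1,\ldots,P_{d/d'}$. By construction the rows of $P_1,\ldots,P_{d/d'}$ are an orthonormal basis of $\RR^d$, so $\{P_i\}_{i=1}^{d/d'}$ is an orthogonal decomposition, and this gives a well-defined distribution over orthogonal decompositions of $\RR^d$ into $\RR^{d'}$. It then suffices to prove the norm-preservation bound for a single fixed $i$.

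Fix a unit vector $x\in\RR^d$ and an index $i$. By the rotational invariance of the Haar measure on $O(d)$, the vector $Ux$ is uniformly distributed on the unit sphere $S^{d-1}$, and $\norm{P_ix}_2^2$ equals the sum of $d'$ coordinates of a uniform point on $S^{d-1}$. The next step is to invoke a standard concentration inequality for marginals of the uniform distribution on the sphere (equivalently, for the sum of squares of $d'$ out of $d$ coordinates of a Gaussian vector conditioned on its norm): this yields
\[
  \PP\bigp{\abs{\sqrt{\tfrac{d}{d'}}\norm{P_ix}_2-1}>\eps}\le 2\exp(-c\eps^2 d')
\]
for some absolute constant $c>0$. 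Choosing $d'=\Omega(\eps^{-2}\log(1/\delta))$ with a sufficiently large hidden constant makes the right-hand side at most $\delta$, proving the claim.

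The only nontrivial ingredient is the concentration bound used in the second paragraph. A convenient route is to realize $Ux$ as $g/\norm{g}_2$ for a standard Gaussian $g\in\RR^d$; then $\norm{P_ix}_2^2 = \sum_{j\in B_i}g_j^2 / \norm{g}_2^2$, where $B_i$ is the block of coordinates corresponding to $P_i$. Both $\sum_{j\in B_i}g_j^2$ and $\norm{g}_2^2$ are chi-squared variables with known Laplace-type concentration (e.g.\ Laurent--Massart), and combining the one-sided tail bounds for numerator and denominator gives the required $e^{-\Omega(\eps^2 d')}$ deviation bound for the ratio. The main (mild) obstacle is simply keeping track of constants through this reduction, but no new ideas are required; a sampling-without-replacement argument as in the proof of \Cref{lem:fastJL}, or a direct appeal to standard Haar-measure concentration, would work equally well.
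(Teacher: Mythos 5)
Your construction and argument coincide with the paper's: the paper likewise samples a Haar-random rotation $R$, splits its rows into $d/d'$ blocks, and invokes Johnson--Lindenstrauss-type concentration for each block. Your writeup simply unpacks the concentration step (via the Gaussian / chi-squared realization of a uniform point on the sphere) that the paper cites as the standard JL lemma, so there is no substantive difference.
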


\begin{proof}
  Sample a random rotation $R$ in $d$-dimensions and define the orthogonal decomposition to be the rows of $R$ split into $\frac{d}{d'}$ matrices of dimension $d'\times d$. By the Johnson-Lindenstrauss lemma, if $d' = \Omega(\eps^{-2}\log(1 / \delta))$, then the property holds with the desired probability.
\end{proof}

\subsection{Constructing the Data Structure for $\RR^\dimA$}

Let $\epsA$ be a function of $\N$ with $\epsA = \Theta(\log^{-\expB / 2}\N)$, where $\expB$ is as defined in \Cref{sec:overview}. (One can also consider the explicit parameter setting $\expB = 2 / 5$.)

% NOTE: bounding the o(1) term
% (0): beta / 2

Our algorithm for $\C$-approximate near neighbors works by a sequence of two reductions to lower dimensional $\C$-approximate near neighbors problems. In the first reduction, we reduce the dimension by applying an orthogonal decomposition drawn from $\mathscr P_1$ (as in \Cref{prop:fastJL}) for $\delta = 1 / (\N\dimA)$ and $\eps$ as above to obtain $\dimA / \dimB'$ problems in dimension $\dimB'\coloneqq\Theta(\eps^{-2}\log^2(\N\dimA))$. We then reduce the dimension further by applying an orthogonal decomposition drawn from $\mathscr P_2$ (as in \Cref{lem:JL}) with $\delta = 1 / \N$ and $\eps$ as above for each $\dimB'$-dimensional problem to obtain a set of $\dimB$-dimensional problems. Finally, we apply the algorithm of \Cref{cor:B}. We describe the steps in greater detail below.

\subsubsection{A General Reduction}

We give a general reduction from $\C$-approximate near neighbors in $\RR^d$ to $d / d'$ instances of $\C'$-approximate near neighbors in $\RR^{d'}$ for $\C'\coloneqq\C(1 - \epsA)$, assuming a distribution $\mathscr P$ over orthogonal decompositions of $\RR^d$ into $\RR^{d'}$ such that for $\cP\coloneqq\{P_i\}_{i=1}^{d / d'}$ sampled from $\mathscr P$,
\[ \PP_{\cP\sim\mathscr P}\bigp{\abs{\sqrt{\frac{d}{d'}}\norm{P_ix}_2 - 1} > \eps} < \delta. \]

We start by sampling an orthogonal decomposition $\cP\sim\mathscr P$. Then $\cP$ decomposes $\RR^d$ as a direct sum of $d / d'$ subspaces of dimension $d'$. For any two points $x, y\in\RR^d$ such that $\dist{x}{y}\le 1$, if we let $x_1,\ldots,x_{d / d'}$ and $y_1,\ldots, y_{d/d'}$ be the components of $x$ and $y$ under $\cP$, respectively, then there exists some $i$ such that $\sqrt{d / d'}\dist{x_i}{y_i}\le 1$. Therefore, if $x$ and $y$ are near neighbors in $\RR^d$, then they are also near neighbors in at least one of subspaces (after scaling by $\sqrt{d / d'}$). That is, all pairs of near neighbors in $\RR^d$ correspond to at least one pair of near neighbors in the subspaces.

Our only worry is that $\dist{x}{y} > \C$, but $\dist{x_i}{y_i}\le (1 - \eps)\C$, i.e., we have a \emph{false positive}. By the definition of $\mathscr P$, the expected number of false positives is bounded above by $\delta\cdot \N d / d'$ for a query $y\in\RR^\dimA$. However, we can filter out these false positives out by continuing on in the algorithm for the lower-dimensional problem as if a false positive $x$ did not share a filter with $y$. The cost of filtering out such a false positive is $O(d)$, since we need to check whether $\dist{x}{y} > c$ for each candidate near neighbor $x$ returned. Thus the total cost of false positives is $\delta\N d^2 / d'$.

% Lastly, we do the accounting for the cost of running this reduction. For a query, the cost is that of computing the projections, handling false positives, and running the lower-dimensional data structures. For space usage, the cost is that of storing the dimensionality reduction and the lower-dimensional data structures. And for preprocessing, the cost is that of computing the dimensionality reduction for the entire database and preprocessing the lower-dimensional data structures.

\subsubsection{Proof of \Cref{thm:main}}

We now prove \Cref{thm:main} by describing a data structure in which the above reduction is applied twice. We first apply the reduction to get from dimension $\dimA$ to dimension $\dimB'$ using $\mathscr P_1$ and $\delta\coloneqq 1 / (\N\dimA)$. For this, we need time $\delta\N\dimA^2 / \dimB' = O(\dimA / \dimB')$ to handle false positives and time $O(\dimA\log\dimA)$ to compute the orthogonal decomposition of $q$. The space overhead of storing an element of $\mathscr P_1$ is $O(\dimA)$. We apply the reduction again to get from dimension $\dimB'$ to $\dimB$ using $\mathscr P_2$ and $\delta\coloneqq 1 / \N$. This time, we need time $O(\dimB'^2 / \dimB)$ to handle false positives and time $O(\dimB'^2)$ to compute the orthogonal decomposition of $q$. The space overhead of storing an element of $\mathscr P_2$ is $O(\dimB'^2)$. However, these costs are incurred $\dimA / \dimB'$ times, once for each of the subproblems to which the reduction is applied.

We are finally ready to account for the costs of the entire data structure. Let $\C'' = \C(1-\epsA)^2$ be the approximation factor for the $\dimB$-dimensional subproblems. Then the total query time is
\[ O(\dimA\log\dimA) + O\bigp{\frac{\dimA}{\dimB'}} + \frac{\dimA}{\dimB'}\bigp{O\bigp{\frac{\dimB'^2}{\dimB}} + O\bigp{\dimB'^2} + \frac{\dimB'}{\dimB}\cdot\N^{1 / \C''^2 + o(1)}} = \tilde O(\dimA n^{1 / \C^2 + o(1)}), \]
the total space usage is
\[ O(\dimA) + \frac{\dimA}{\dimB'}\bigp{O(\dimB'^2) + \frac{\dimB'}{\dimB}\cdot n^{1 + 1 / \C''^2 + o(1)}} = \tilde O(\dimA n^{1 + 1 / \C^2 + o(1)}), \]
and the total preprocessing time is
\[ \N\cdot O(\dimA\log\dimA) + \frac{\dimA}{\dimB'}\bigp{\N\cdot O(\dimB'^2) + \frac{\dimB'}{\dimB}\cdot n^{1 + 1 / \C''^2 + o(1)}} = \tilde O(\dimA n^{1 + 1 / \C^2 + o(1)}). \]
These match the desired exponent of $\rho = 1 / \C^2 + o(1)$, proving \Cref{thm:main}.

\begin{rem*}
  For the parameter setting $\expC = 4 / 5$ and $\expB = 2 / 5$, the $o(1)$ term in the exponent is in fact bounded by $\tilde O(\log^{-1 / 5}\N)$.
\end{rem*}

\subsubsection{Proof of \Cref{cor:lp}}

We first state a result of \cite{Nguyen14}:

\begin{lem}[{\cite[Theorem 116]{Nguyen14}}]\label{lem:embed}
  There exists a map $f\colon\RR^d\to\RR^{d\poly(\eps^{-1}\log d)}$ such that for all $x, y\in\RR^d$ such that $\norm{x - y}_p\le d^{O(1)}$, we have
  \[ (1 - \eps)\norm{x - y}_p^p - \eps\le\dist{f(x)}{f(y)}^2\le (1 + \eps)\norm{x - y}_p^p + \eps. \]
\end{lem}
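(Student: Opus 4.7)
The plan is to build $f$ as a randomized feature map derived from the symmetric $p$-stable distribution, applied coordinate-by-coordinate, and to establish existence via the probabilistic method. Since $\|x-y\|_p^p = \sum_{i=1}^d |x_i-y_i|^p$ and squared Euclidean distance is additive across coordinates, it suffices to build for each coordinate $i$ a map $f_i : \RR \to \RR^k$ with $k = \poly(\eps^{-1}\log d)$ approximating $|s-t|^p$ as $\|f_i(s)-f_i(t)\|_2^2$; concatenating yields total dimension $dk$, matching the target.

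For the 1D construction I would use random Fourier features keyed to the symmetric $p$-stable law $\mu_p$ on $\RR$, whose defining property $\int e^{iws}\,d\mu_p(w) = e^{-|s|^p}$ is exactly the Schoenberg negative-type identity for $|\cdot|^p$ (valid for $p\in(0,2]$). With a scaling $\lambda > 0$, sample $w_j\sim\mu_p$ i.i.d.\ and phases $b_j$ uniform on $[0,2\pi]$, and set
\[ f_i(t) := (k\lambda^p)^{-1/2}\bigp{\cos(\lambda w_j t + b_j)}_{j=1}^k. \]
Integrating out $b_j$ and then $w_j$ gives $\EE\|f_i(s)-f_i(t)\|_2^2 = \lambda^{-p}(1 - e^{-\lambda^p|s-t|^p})$, which Taylor-expands to $|s-t|^p(1\pm O(\eps))$ once $\lambda^p d^{O(1)}$ is a sufficiently small multiple of $\eps$; this is the regime relevant to $\|x-y\|_p \le d^{O(1)}$.

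The remaining work is concentration and a union bound. Each summand is bounded in $[0,O(1/(k\lambda^p))]$ and has variance $O(|s-t|^p/(k^2\lambda^p))$, so Bernstein's inequality yields a per-pair deviation of $\eps|s-t|^p$ with failure probability $\exp(-\Omega(\eps^2|s-t|^p k\lambda^p))$. For coordinate pairs with $|x_i - y_i|^p\le \eps/d$, the additive $\eps$ slack in the lemma absorbs all error on the $i$-th coordinate without any concentration; for larger gaps, a one-dimensional $\poly(\eps/d)$-net on $[0,d^{O(1)}]$ of size $(d/\eps)^{O(1)}$, combined with the Lipschitz smoothness of $t\mapsto\cos(\lambda w_j t + b_j)$ after mild truncation of $|w_j|$ (to control the heavy tails of $\mu_p$), transfers the bound from the net to all points. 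Summing over the $d$ coordinates and union-bounding then shows the random $f$ works simultaneously for all admissible $x,y$ with positive probability, so a deterministic $f$ exists.

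The main obstacle, and the reason for the coordinate-wise structure, is keeping the target dimension at $d\poly(\eps^{-1}\log d)$ rather than $d^{1+\Omega(1)}$: a naive net over all of $(\RR^d)^2$ would inflate $k$ by a $\poly(d)$ factor. This forces two design choices. First, the union bound must be reduced to per-coordinate 1D nets, using separability of both $\|\cdot\|_p^p$ and the embedding. Second, the additive $\eps$ slack must be used to discard close pairs whose Bernstein exponent would otherwise be too weak—likely via a multi-scale variant of $f$ that uses $O(\log d)$ blocks of $\poly(\eps^{-1}\log d)$ features each, tuned to geometrically decreasing values of $\lambda$, so that for every $|s-t|$ at least one block lands in its sweet spot with Bernstein exponent $\gtrsim \log(d/\eps)$. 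This multiplies the dimension by only a $\polylog(d)$ factor, preserving the claimed bound.
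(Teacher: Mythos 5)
You should first note that the paper itself contains no proof of this lemma: it is imported verbatim as Theorem 116 of Nguyen's thesis and used as a black box, so the comparison here is against that cited argument rather than an in-paper one. Within your sketch, the pieces that are worked out are fine: the coordinate-wise reduction (additivity of both $\|x-y\|_p^p$ and squared Euclidean distance), the expectation identity $\EE\|f_i(s)-f_i(t)\|_2^2=\lambda^{-p}(1-e^{-\lambda^p|s-t|^p})$ for $p$-stable random Fourier features, the per-pair Bernstein bound, and the plan of 1D nets plus truncation of the heavy-tailed frequencies. But, as you yourself observe, with the single scale $\lambda^p\approx\eps\,d^{-O(1)}$ the per-feature bound is $\Theta(1/(k\lambda^p))=\Theta(d^{O(1)}/(k\eps))$, so for an intermediate coordinate gap $\Delta=|x_i-y_i|^p$ with $\eps/d\lesssim\Delta\ll d^{O(1)}$ the Bernstein exponent $\eps^2k\lambda^p\Delta$ forces $k=\Omega(d^{\Theta(1)})$, blowing the dimension budget. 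Everything therefore hinges on the step you only gesture at.

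That multi-scale step, as described, does not work: the embedding computes the \emph{sum} of squared distances over all blocks, so it is not enough that one block "lands in its sweet spot" — you must simultaneously (i) calibrate the total expectation to $(1\pm\eps)\Delta$ for every $\Delta$ in a range of multiplicative size $d^{\Theta(1)}$, and (ii) control the contribution of the off-scale blocks. With equal weights, the expectation of the concatenation is roughly $\Delta$ times the fraction of scales that are unsaturated (those with $\lambda_m^p\Delta\lesssim 1$), a factor that drifts with $\log\Delta$ instead of staying within $1\pm\eps$; if instead you reweight the scales to repair the expectation (e.g., by discretizing the representation $|s-t|^p=c_p\int(1-\cos(w(s-t)))\,|w|^{-1-p}\,dw$), the coarse scales reintroduce exactly the large per-feature bound and hence the same Bernstein deficit you were trying to escape. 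So the central bias-versus-concentration tension is not resolved; the clause "likely via a multi-scale variant" is precisely the content of the cited theorem and needs a concrete estimator and analysis. A smaller issue: the assertion that gaps with $\Delta\le\eps/d$ require "no concentration" is not immediate, since the random estimate for such a pair is not deterministically small; you need the truncated-Lipschitz bound (which you invoke only for net transfer) to cap those coordinates' contribution within the additive $\eps$ budget.
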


To construct a data structure for \Cref{cor:lp} (for $\C$-approximate near neighbors in $\ell_p$) we consider a random shift of the lattice $\dimA\cdot\ZZ^\dimA$ and for each resulting grid cell, build a separate data structure for $\C$-approximate near neighbors on the hypercube of side length $\dimA + 2$ that has the same center as the grid cell. Then, for each point in $\RR^\dimA$, we expect it to belong to $(1 + 2 / \dimA)^\dimA = O(1)$ such data structures, adding a constant factor overhead. To see correctness, observe that there exists a data structure containing every pair of points $x, y\in\RR^\dimA$ such that $\norm{x - y}_p\le 1$.

We finish by constructing a data structure for $\C$-approximate near neighbors in $\ell_p$ for a hypercube of side length $\dimA + 2$. Notice that $\norm{x - y}_p\le O(\dimA\sqrt\dimA)\le d^{O(1)}$ for all $x, y$ in the hypercube. We may thus apply \Cref{lem:embed} with $\eps = 1 / \log\N$ to obtain an embedding into an instance of $\C'$-approximate near neighbors in $\ell_2$, where $\C' = (1 - 2\eps)\C^{p / 2}$. \Cref{thm:main} then gives us a Las Vegas data structure solving $\C$-approximate near neighbors in $\ell_p$ with exponent $\rho = 1 / \C^p + o(1)$.

\section*{Acknowledgments}

I would like to thank Jelani Nelson for advising this project and Huy L\^e Nguy\~{\^e}n for answering questions about \cite{Nguyen14}.

\newpage
\appendix

\section{Las Vegas Data-Dependent Hashing}\label{appendix:B}

Our goal in this appendix is to construct a data-independent Las Vegas filter family for the unit sphere in $\RR^{\dimB}$ and as a consequence of \cite{ALRW17}, which provides optimal space-time tradeoffs for data-dependent hashing, prove the following theorem:

\begin{thm}\label{thm:SB}
  Suppose $\rho_u, \rho_q\ge 0$ are such that
  \[ \C^2\sqrt{\rho_q} + (\C^2 - 1)\sqrt{\rho_u}\ge\sqrt{2\C^2 - 1}. \]
  Then there exists a Las Vegas data structure for Euclidean $\C$-approximate near neighbors in $\RR^\dimB$ with expected query time $O(\N^{\rho_q + o(1)})$, space usage $O(\N^{1 + \rho_u + o(1)})$, and preprocessing time $O(\N^{1 + \rho_u + o(1)})$.
\end{thm}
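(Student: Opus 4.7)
The plan is to mirror the three-stage strategy from \Cref{sec:C,,sec:B,,sec:A}, but replacing ball-lattice filters in $\RR^\dimC$ by \emph{spherical cap} filters on the unit sphere $\Sphere^{\dimC-1}$, and then plugging the resulting data-independent Las Vegas spherical LSF family into the data-dependent reduction of \cite{ALRW17}. Concretely, \cite{ALRW17} shows that given a data-independent LSF family for the unit sphere achieving the optimal data-independent exponent $\rho_{\mathrm{sph}}(\alpha,\beta,\rho_q,\rho_u)$ for distinguishing pairs of inner product $\alpha$ versus $\beta$, one obtains a data-dependent data structure for $\C$-approximate near neighbors with the space--time tradeoff stated in \Cref{thm:SB}. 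Our task therefore reduces to building a Las Vegas version of the asymmetric spherical LSF of \cite{ALRW17} without any loss in the relevant exponents.

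First, I would construct a Las Vegas filter family on $\Sphere^{\dimC-1}$ analogous to \Cref{prop:filt}. Each filter will be a spherical cap $\SphCap(u,\eta_q)$ of angular radius $\eta_q$ centered at some $u\in\Sphere^{\dimC-1}$, together with a query cap $\SphCap(u,\eta_u)$ of a possibly different angular radius $\eta_u$ (the asymmetry lets us trade off query and update costs as in \cite{ALRW17}). The family $\cF$ is sampled by drawing $\fnum$ centers $u_1,\dots,u_\fnum$ independently from the uniform distribution on $\Sphere^{\dimC-1}$; a point $x$ belongs to the update (resp.\ query) filter $F_i$ iff $\langle x,u_i\rangle\ge\cos\eta_u$ (resp.\ $\cos\eta_q$). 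To guarantee property \ref{filt:all_pairs} in the Las Vegas sense, I would discretize the sphere by an $\fdel$-net $L\subset\Sphere^{\dimC-1}$ and verify, by explicit enumeration, that every pair $x,y\in L$ with $\langle x,y\rangle\ge\alpha+O(\fdel)$ lies in a common filter with appropriate slack in $\eta_u,\eta_q$; the net size is $(1/\fdel)^{O(\dimC)}=\poly(\dimC^\dimC)$. Choosing $\fnum$ so that the union bound succeeds with probability at least $1/2$ and applying rejection sampling yields a distribution $\Filt^{\mathrm{sph}}$ satisfying the spherical analogs of properties \ref{filt:sample}--\ref{filt:local}, with the collision probabilities given by the standard cap integrals (e.g., \cite[Lemma 3.2]{ALRW17}). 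This step is technically the most delicate: the cap volumes and conditional probabilities in \Cref{lem:sample,,filt:upper} must be replaced by their spherical counterparts, and the rejection sampling must preserve the tight upper bound on $\EE[\abs{\cF(x)\cap\cF(y)}]$ needed to match the optimal spherical exponent.

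Second, I would lift from $\Sphere^{\dimC-1}$ to $\Sphere^{\dimB-1}$ by essentially the tensoring construction of \Cref{subsec:splitting,,prop:filtB}. The orthogonal decompositions from \Cref{prop:splitting} already guarantee that for any unit vector $x\in\RR^\dimB$, some $\cP\in\Proj$ produces components $P_i x$ all of norm within $1\pm\epsB$ of $\sqrt{\dimC/\dimB}$; after renormalizing, each component lies on the sphere $\Sphere^{\dimC-1}$. The crucial geometric observation is that if $x,y\in\Sphere^{\dimB-1}$ have $\langle x,y\rangle\ge\alpha$, then the splitting decomposition simultaneously distributes the inner product across components, so that each pair $(P_ix/\|P_ix\|,P_iy/\|P_iy\|)$ has inner product at least $\alpha-o(1)$. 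Then the Las Vegas property of the per-subspace filter family transfers to the tensored family, while the expected collision count multiplies out across the $\dimB/\dimC$ factors exactly as in the proof of property \ref{filtB:local} of \Cref{prop:filtB}. A small amount of extra care, analogous to the shrinking by $(1+\epsB)$ in \Cref{sec:B}, is needed to absorb the $(1\pm\epsB)$ norm distortion coming from $\Proj$.

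Third, I would feed the resulting data-independent Las Vegas spherical LSF family into the data-dependent framework of \cite{ALRW17}. That framework recursively partitions the dataset using spherical caps and reduces the general Euclidean problem to approximate near neighbors on the unit sphere at carefully chosen inner-product thresholds; the correctness of the reduction is agnostic to how the spherical LSF is implemented, provided it achieves the optimal data-independent exponent with the Las Vegas guarantee. Substituting our Las Vegas spherical LSF for the Monte Carlo one used in \cite{ALRW17} preserves the final tradeoff $\C^2\sqrt{\rho_q}+(\C^2-1)\sqrt{\rho_u}\ge\sqrt{2\C^2-1}$ while upgrading the data structure to Las Vegas. The extrapolation from $\RR^\dimB$ to arbitrary $\RR^\dimA$ for \Cref{thm:SA} is then handled by the same two-stage dimensionality reduction of \Cref{sec:A}, which is already one-sided and hence preserves the Las Vegas property; the $\ell_p$ case $1\le p\le 2$ follows by the embedding of \Cref{lem:embed} exactly as in \Cref{cor:lp}. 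The main obstacle I anticipate is the first step: tuning the cap radii $\eta_u,\eta_q$, the net scale $\fdel$, and the sample size $\fnum$ so that the Las Vegas verification succeeds with constant probability while the expected collision bound exactly matches $\rho_{\mathrm{sph}}$; everything downstream is a faithful adaptation of \Cref{sec:B,,sec:A}.
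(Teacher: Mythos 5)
Your proposal follows the same three-stage strategy as the paper: discretize-and-reject to build a Las Vegas spherical LSF in $\RR^{\dimC}$, lift to $\RR^{\dimB}$ via the splitting decompositions $\Proj$, then plug into the data-dependent framework of \cite{ALRW17}. Two small points distinguish your sketch from the paper's proof and one of them is a genuine (though repairable) gap.

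First, a harmless difference: you take uniform centers on the sphere with angular thresholds, whereas the paper takes Gaussian centers $z\sim\cN(0,I)$ with inner-product thresholds $\ip{z,\cdot}\ge\eta_u,\eta_q$; these are asymptotically equivalent parametrizations of the same asymmetric cap filters, so this does not matter. A second, more substantive difference: you discretize the whole sphere at an absolute scale $\fdel$, while the paper first tiles the sphere into overlapping caps of radius $\Theta(r\dimC^{3/2})$ (via randomly shifted hypercubes of side $r(\dimC+2)$) and then discretizes each cap at scale $r\fdel$, so the net size stays $\poly(\dimC^\dimC)$ \emph{uniformly in $r$}. Your construction is only adequate for $r=\Theta(1)$; this happens to suffice here because \cite[Section 4.3]{ALRW17} always instantiates the filter family with $\C r\ge\sqrt 2$, but the paper's reduction is the more robust statement of \Cref{prop:filtS}.

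The genuine gap is in the tensoring step. You write that the ``crucial geometric observation'' is that some $\cP\in\Proj$ ``simultaneously distributes the inner product across components,'' appealing to \Cref{prop:splitting}. But \Cref{prop:splitting} as stated guarantees only that for a \emph{single} unit vector $x$ there exists some $\cP$ splitting $x$ evenly. For the tensored spherical filter you need, for a fixed pair $u,q$, a \emph{single} $\cP$ that simultaneously splits all of $\|u\|$, $\|q\|$, and $\|u-q\|$ (equivalently, preserves both component norms and the component inner product) to within $1\pm\epsB$; without this, the normalized pairs $(P_iu/\|P_iu\|, P_iq/\|P_iq\|)$ need not have inner product close to $\langle u,q\rangle$, and the Las Vegas guarantee in the lower dimension does not transfer. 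The paper repairs this by strengthening $\Proj$: lower the per-vector error tolerance in \Cref{lem:CountSketch} (equivalently, choose $d'$ a constant factor larger or take $\eps$ a constant factor smaller) so that the failure probability per vector drops below $1/6$, and apply a union bound over the three vectors $u$, $q$, $u-q$ at each level of the divide-and-conquer. This keeps $\abs\Proj=\poly(\dimB^{\dimB/\dimC})$ and is a minor modification, but it is not automatic from \Cref{prop:splitting} as your proposal assumes. The remainder of the sketch — including noting that the other uses of randomness in \cite{ALRW17} do not create false negatives — is essentially the paper's argument.
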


We accomplish this by applying the high-level approach used in the previous sections to LSF on a sphere. Specifically, we construct a Las Vegas LSF family for the unit sphere in $\RR^\dimC$ by discretizing the sphere to a finite subset for which we have a ``brute force'' probabilistic construction. We then use the ``splitting'' orthogonal decompositions of \Cref{subsec:splitting} to extend this construction to the unit sphere in $\RR^\dimB$. We obtain \Cref{thm:SB} by plugging our Las Vegas LSF family for the unit sphere in $\RR^\dimB$ into the data-dependent data structure of \cite{ALRW17} for $\RR^{\dimB}$. Since the one-sided dimensionality reductions of \Cref{sec:A} still apply, the desired data structure for \Cref{thm:SA} follows.

The resulting construction matches the bounds of \cite{ALRW17} across the entire space-time tradeoff. In particular, for the symmetric case, we match \cite{AR15}, achieving an exponent of $\rho = 1 / (2\C^2 - 1) + o(1)$, which is known to be optimal for data-dependent hashing.

\subsection{Preliminaries}

In this appendix we consider space-time tradeoffs and locality-sensitive filters on a sphere. To facilitate this discussion, we introduce some new definitions, in particular those of asymmetric filter families and spherical LSF.

We also introduce some new notation in this appendix: We use $\Sphere(x, r)$ to denote the sphere of radius $r$ centered at $x$ and $\cN(\mu, \Sigma)$ to denote the normal distribution with mean $\mu$ and covariance matrix $\Sigma$.

\subsubsection{Asymmetric Filter Families}

We introduce \emph{asymmetric filter families}, which are a generalization of (symmetric) filter families to allow for space-time tradeoffs. For the rest of this appendix, we assume our filter families are asymmetric.

\begin{defn}
  An \emph{asymmetric filter family} $\cF$ for a metric space $(X, D)$ is a collection of pairs of subsets of $X$, with each pair consisting of an \emph{update filter} and a \emph{query filter}. We refer to a such a pair collectively as a \emph{filter}. For each $x\in X$, define $\cF_u(x)$ to be the set of filters whose update filter contains $x$ and likewise define $\cF_q(x)$ to be the set of filters whose query filter contains $x$.

  The filter families of \Cref{defn:filterfamily} are simply asymmetric filter families such that $\cF_u = \cF_q$.
\end{defn}

\subsubsection{Spherical LSF}

\emph{Spherical LSF} refers to the class of LSF families for the unit sphere in $d$ dimensions whose filters are defined in terms of inner products with a randomly sampled Gaussian. Previously, this class of LSF families has been studied by \cite{AR15, BDGL16, ALRW17}.
% Our contribution will be to show how ``splitting'' collections of orthogonal decompositions can be used to adapt spherical LSF to a Las Vegas setting.

In spherical LSF with parameters $\eta_u$ and $\eta_q$, each filter corresponds to a vector $z\sim\cN(0, I)$, such that a database point $u$ belongs to the filter if $\ip{z, u}\ge\eta_u$ and a query point $q$ belongs to the filter if $\ip{z, q}\ge\eta_q$. To analyze such a filter family, we define
\[ \F(\eta)\coloneqq\PP_{z\sim\cN(0, I)}(\ip{z, x}\ge\eta), \]
where $x\in\cB(0, 1)$ is an arbitrary point, and
\[ \G(s, \eta_u, \eta_q)\coloneqq\PP_{z\sim\cN(0, I)}(\text{$\ip{z, u}\ge\eta_u$ and $\ip{z, q}\ge\eta_q$}), \]
where $u, q\in\Sphere(0, 1)$ are arbitrary points such that $\dist{u}{q} = s$.

The following lemmas give us estimates for $\F$ and $\G$:
\begin{lem}[\cite{ALRW17, AILRS15}]\label{lem:F}
  For $\eta\to\infty$,
  \[ \F(\eta) = \exp\bigp{-(1 + o(1))\frac{\eta^2}{2}}. \]
\end{lem}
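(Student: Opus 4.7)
The plan is to reduce this to the standard Gaussian tail asymptotic. First, observe that the quantity $\F(\eta)$ does not actually depend on the choice of $x$ (as long as it has unit norm, which we may assume since the interesting case for spherical LSF is $x \in \Sphere(0,1)$): by rotational invariance of the standard Gaussian in $\RR^d$, if $\|x\|_2 = 1$ then $\langle z, x\rangle$ is distributed as a standard one-dimensional Gaussian $g \sim \cN(0,1)$, regardless of $x$. Hence $\F(\eta) = \PP(g \ge \eta)$.

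Next, I would invoke the classical Mills ratio bounds for the Gaussian tail:
\[
\frac{1}{\sqrt{2\pi}}\cdot\frac{\eta}{\eta^2+1}\,e^{-\eta^2/2} \;\le\; \PP(g\ge\eta) \;\le\; \frac{1}{\sqrt{2\pi}\,\eta}\,e^{-\eta^2/2},
\]
valid for all $\eta > 0$. These can be derived by a simple integration-by-parts argument on $\int_\eta^\infty e^{-t^2/2}\,dt$, or cited directly.

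Finally, taking logarithms on both sides yields
\[
\log \F(\eta) \;=\; -\frac{\eta^2}{2} \;-\; \log\eta \;+\; O(1)
\]
as $\eta \to \infty$. Dividing through by $\eta^2/2$ shows $\log\F(\eta) = -(1+o(1))\eta^2/2$, which is precisely the claim after exponentiating.

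There is essentially no obstacle here; the lemma is a textbook Gaussian tail estimate, and the only content is to recognize that the dependence on the basepoint $x$ drops out by rotational invariance of $\cN(0,I)$, reducing everything to a one-dimensional computation. The lower-order $\log\eta$ term is absorbed into the $(1+o(1))$ factor because it is negligible compared to $\eta^2$ as $\eta \to \infty$.
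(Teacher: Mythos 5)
Your argument is correct and complete: the paper states this lemma by citation to \cite{ALRW17, AILRS15} without giving a proof, and what you supply is precisely the expected one-dimensional reduction followed by the standard Mills-ratio tail estimate. Your observation that rotational invariance of $\cN(0,I)$ collapses $\ip{z,x}$ to a scalar standard Gaussian for unit-norm $x$ is the only nontrivial step, and your handling of the $\log\eta$ lower-order term is exactly right; note also that the paper's definition writes $x\in\Ball(0,1)$, but as you correctly infer, the intended setting is $x\in\Sphere(0,1)$, since otherwise $\F(\eta)$ would depend on $\norm{x}_2$ and the lemma would not be well-posed.
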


\begin{lem}[\cite{ALRW17, AILRS15}]\label{lem:G}
  For $\eta, \sigma\to\infty$ and $0 < s < 2$,
  \[ \G(s, \eta, \sigma) = \exp\bigp{-(1 + o(1))\frac{\eta^2 + \sigma^2 - 2\eta\sigma\Alpha(s)}{2\Beta(s)^2}}, \]
  where $\Alpha(s)\coloneqq 1 - s^2 / 2$ and $\Beta(s)\coloneqq\sqrt{1 - \Alpha^2(s)}$ are the cosine and sine, respectively, of the angle between two points that are distance $s$ apart on the unit sphere.
\end{lem}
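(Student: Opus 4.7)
The plan is to recognize $\G(s,\eta,\sigma)$ as a bivariate Gaussian tail probability and evaluate it by a Laplace-type (saddle-point) estimate. The key structural observation is that if $u,q\in\Sphere(0,1)$ are at distance $s$, then $\ip{u,q}=1-s^2/2=\Alpha(s)$, so for $z\sim\cN(0,I)$ the pair $(X,Y)\coloneqq(\ip{z,u},\ip{z,q})$ is a centered Gaussian with unit marginals and correlation $\Alpha(s)$. Hence
\[ \G(s,\eta,\sigma)=\PP(X\ge\eta,\,Y\ge\sigma)=\int_\eta^\infty\!\int_\sigma^\infty\frac{1}{2\pi\Beta(s)}\exp\bigp{-\frac{x^2+y^2-2\Alpha(s)xy}{2\Beta(s)^2}}\,dy\,dx, \]
and the exponent evaluated at the corner $(x,y)=(\eta,\sigma)$ is exactly the quantity appearing in the target estimate. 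The proof reduces to showing that the tail integral equals the value of the density at the corner, up to a factor polynomial in $\eta,\sigma$.

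To make this precise, I would translate variables $x=\eta+u$, $y=\sigma+v$ with $u,v\ge 0$ and factor the exponential prefactor out of the integral, obtaining
\[ \G(s,\eta,\sigma)=\frac{1}{2\pi\Beta(s)}\exp\bigp{-\frac{\eta^2+\sigma^2-2\Alpha(s)\eta\sigma}{2\Beta(s)^2}}\cdot J(\eta,\sigma), \]
where $J(\eta,\sigma)$ is the integral over the first quadrant of $\exp(-[(\eta-\Alpha\sigma)u+(\sigma-\Alpha\eta)v]/\Beta^2)\exp(-(u^2+v^2-2\Alpha uv)/(2\Beta^2))$. In the regime where $\eta-\Alpha(s)\sigma$ and $\sigma-\Alpha(s)\eta$ are positive (which, since $|\Alpha(s)|<1$ and we are in the regime $\eta,\sigma\to\infty$ with $\eta,\sigma$ of comparable magnitude, is the natural case), the linear term in the exponent dominates, and a direct upper bound by dropping the positive-definite quadratic piece gives $J(\eta,\sigma)\le\Beta(s)^4/[(\eta-\Alpha\sigma)(\sigma-\Alpha\eta)]$. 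A matching lower bound of the same polynomial order follows by restricting the integration to $[0,1/\eta]\times[0,1/\sigma]$, where the quadratic piece is $O(1)$.

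Taking logarithms, the polynomial prefactor $1/(2\pi\Beta)\cdot J(\eta,\sigma)=\poly(\eta,\sigma)^{\pm 1}$ contributes only $O(\log(\eta\sigma)+1)$ to $-\log\G$, while the quadratic expression in the exponent grows like $\Theta(\eta^2+\sigma^2)$ under our hypotheses on $\eta,\sigma$. Dividing, the polynomial terms are swept into the $(1+o(1))$ factor as $\eta,\sigma\to\infty$, yielding the claimed estimate.

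The main obstacle is the degenerate regime in which the corner $(\eta,\sigma)$ fails to be the minimizer of the quadratic form over the quarter-plane $\{x\ge\eta,\,y\ge\sigma\}$: this happens when, say, $\eta<\Alpha(s)\sigma$, in which case the unconstrained minimum over the edge $\{y=\sigma\}$ lies at an interior point $x^*=\Alpha(s)\sigma>\eta$, and the tail behavior is governed by the single marginal $\PP(Y\ge\sigma)$ rather than the corner. However, in the intended application (and in the symmetric and near-symmetric space-time tradeoffs used by \Cref{thm:SB}), $\eta$ and $\sigma$ are of the same order and both large, so $\eta-\Alpha\sigma,\sigma-\Alpha\eta=\Theta(\eta)$. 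Restricting the statement to this regime (or adding the hypothesis $\eta\ge\Alpha(s)\sigma$ and $\sigma\ge\Alpha(s)\eta$, which is what one actually needs for the downstream LSF analysis) removes this obstacle and the Laplace estimate above goes through cleanly.
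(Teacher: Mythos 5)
Your proposal is essentially correct, and it is worth noting that the paper itself gives no proof of \Cref{lem:G} --- the estimate is imported verbatim from \cite{ALRW17, AILRS15} --- so what you have written is a self-contained derivation of the cited fact. Your route is the standard one and matches the cited sources: identify $(\ip{z,u},\ip{z,q})$ as a centered bivariate Gaussian with unit marginals and correlation $\Alpha(s)$ (since $\ip{u,q}=1-s^2/2$), shift to the corner $(\eta,\sigma)$, and bound the remaining quarter-plane integral above by dropping the positive semidefinite quadratic part (using $|\Alpha(s)|<1$) and below by restricting to a box of side $1/\eta\times 1/\sigma$; the prefactors are polynomial in $\eta,\sigma$ and are absorbed into the $(1+o(1))$ because the corner exponent is at least $(1-|\Alpha(s)|)(\eta^2+\sigma^2)/(2\Beta(s)^2)\to\infty$ for fixed $s\in(0,2)$.

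The caveat you raise is genuine and correctly diagnosed, not a defect of your argument: the large-deviations exponent of $\G(s,\eta,\sigma)$ is the minimum of $(x^2+y^2-2\Alpha(s)xy)/(2\Beta(s)^2)$ over $\{x\ge\eta,\,y\ge\sigma\}$, which equals the corner value only when $\eta\ge\Alpha(s)\sigma$ and $\sigma\ge\Alpha(s)\eta$; if, say, $\eta<\Alpha(s)\sigma$ with a gap proportional to $\sigma$, the true exponent is $(1+o(1))\sigma^2/2$, strictly smaller than the displayed quantity (the difference is $(\eta-\Alpha(s)\sigma)^2/(2\Beta(s)^2)$), so the lemma as literally stated requires the corner-dominance condition. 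This condition is implicit in how the estimate is stated and used in \cite{ALRW17}, and in this paper the lemma is only invoked with the parameter settings of \cite[Section 3.3.3]{ALRW17} in the proof of \Cref{prop:filtSB}, where the thresholds are in the regime you restrict to. Two small points to tidy up if you were to write this in full: your upper bound on $J(\eta,\sigma)$ needs $\eta-\Alpha(s)\sigma>0$ and $\sigma-\Alpha(s)\eta>0$ explicitly (exactly the corner condition), and your lower-bound box argument uses that $(\eta-\Alpha(s)\sigma)/\eta$ and $(\sigma-\Alpha(s)\eta)/\sigma$ are $O(1)$, which your standing assumption that the two thresholds are comparable already supplies.
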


\subsection{Spherical LSF Families for $\Sphere(0, 1)$ in $\RR^\dimC$}

We start by constructing a spherical LSF family in dimension $\dimC = \Theta(\log^{\expC}\N)$ with a Las Vegas property, where $\expC = 2 /3$. Specifically, we prove the following:
\begin{prop}\label{prop:filtS}
  For all $r > 0$, there exists a distribution $\Filt$ over spherical LSF families for $\Sphere(0, 1)$ in $\RR^{\dimC}$ with the following properties:
  \begin{enumerate}
    \item\label{filtS:sample}
      A filter family $\cF$ can be sampled from $\Filt$ in time $O(\poly(\dimC^\dimC) + \poly(\dimC) / \G(r, \eta_u, \eta_q))$.
    \item\label{filtS:decode}
      For all $u, q\in\Sphere(0, 1)$, $\cF_u(u)$ and $\cF_q(q)$ can be computed in time $O(\poly(\dimC) / \G(r, \eta_u, \eta_q))$.
    \item\label{filtS:all_pairs}
      For all $u, q\in\Sphere(0, 1)$ such that $\dist{u}{q}\le r$, $\cF_u(u)\cap\cF_q(q)\neq\emptyset$.
    \item\label{filtS:sizes}
      All filters in $\cF$ are subsets of spherical LSF filters sampled with parameters $\eta_u$ and $\eta_q$.
  \end{enumerate}
\end{prop}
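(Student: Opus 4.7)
The plan is to mirror the \emph{sample-and-verify} approach of \Cref{prop:filt}, adapted from Euclidean lattices to the unit sphere. First I would define $\Filt_0$ as the distribution on filter families in which $N$ iid Gaussians $z_1,\ldots,z_N\sim\cN(0, I)$ are drawn and the $i$-th filter is taken to be the standard spherical LSF filter associated with $z_i$: update set $U_i = \{u\in\Sphere(0, 1) : \ip{z_i, u}\ge\eta_u\}$ and query set $Q_i = \{q\in\Sphere(0, 1) : \ip{z_i, q}\ge\eta_q\}$. This gives property~\ref{filtS:sizes} by construction, and property~\ref{filtS:decode} by iterating over the $N$ filters for each query. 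The distribution $\Filt$ is obtained from $\Filt_0$ by conditioning on a verification step that succeeds with probability at least $1/2$, so rejection sampling yields $O(1)$ trials from $\Filt_0$ in expectation.

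Since property~\ref{filtS:all_pairs} quantifies over infinitely many pairs on the sphere, the verification I would run is a stronger discrete condition: construct a $\delta$-net $S\subset\Sphere(0, 1)$ of size $\abs S = O((1/\delta)^{\dimC}) = O(\poly(\dimC^\dimC))$ with $\delta = 1/\poly(\dimC)$. Choose $R = O(\sqrt{\dimC + \log N})$ so that $\max_i\|z_i\|\le R$ holds with probability at least $3/4$, and define slackened parameters $\eta_u' = \eta_u + R\delta$, $\eta_q' = \eta_q + R\delta$, $r' = r + 2\delta$. The verification then asserts that for every pair $u, q\in S$ with $\dist{u}{q}\le r'$ there is some $i$ with $\ip{z_i, u}\ge\eta_u'$ and $\ip{z_i, q}\ge\eta_q'$. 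This discrete condition implies property~\ref{filtS:all_pairs}: given $u', q'\in\Sphere(0, 1)$ with $\dist{u'}{q'}\le r$, pick nearest net points $u, q\in S$ (within distance $\delta$); then $\dist{u}{q}\le r'$ by the triangle inequality, and the perturbation $\abs{\ip{z_i, u' - u}}\le\|z_i\|\delta\le R\delta$ (and analogously for $q'$) gives $\ip{z_i, u'}\ge\eta_u$ and $\ip{z_i, q'}\ge\eta_q$, as required.

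To bound $N$ and the verification runtime: for a fixed net pair $(u, q)$ with $\dist{u}{q}\le r'$, the probability that no $z_i$ covers it under the stricter thresholds is $(1 - \G(r', \eta_u', \eta_q'))^N$. A union bound over the at most $\abs S^2$ close net pairs gives verification success with probability at least $3/4$ once $N = \Theta(\log\abs S^2 / \G(r', \eta_u', \eta_q'))$; combined with the $\|z_i\|\le R$ event, the total success probability is at least $1/2$, as needed. The verification itself reduces to scanning net pairs against Gaussians, which costs $\poly(\abs S, N, \dimC)$ and matches property~\ref{filtS:sample} in form.

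The main obstacle is calibrating $\delta$. The slack $R\delta$ in each of $\eta_u', \eta_q'$ and the slack $2\delta$ in $r'$ must be small enough that $\G(r', \eta_u', \eta_q')\ge\G(r, \eta_u, \eta_q)/\poly(\dimC)$, so that $N$ retains the form $\poly(\dimC)/\G(r, \eta_u, \eta_q)$; at the same time, $\delta$ must be at least $1/\poly(\dimC)$ to keep $\abs S = O(\poly(\dimC^\dimC))$. By \Cref{lem:G}, the exponent of $\G$ is smooth in all arguments, so when $\eta_u, \eta_q = O(\sqrt\dimC)$ (the regime of interest for spherical LSF), an $o(1)$ perturbation in each argument produces at most an $O(\sqrt\dimC)\cdot o(1)$ change in the exponent, which can be forced to $O(\log\dimC)$ by picking $\delta$ appropriately. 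This delicate tradeoff between net granularity and filter quality is where the bulk of the quantitative work lies.
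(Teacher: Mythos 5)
The proposal follows the same high-level sample-and-verify template as the paper, but omits a localization step that the paper uses to handle the full range of $r > 0$, and this omission is where your own flagged ``delicate tradeoff'' actually breaks.

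The paper does not discretize the whole sphere with a fixed-granularity net. It first tiles $\RR^\dimC$ by a randomly shifted lattice $r\dimC\cdot\ZZ^{\dimC}$, reduces to spherical caps of radius $O(r\dimC^{3/2})$, and only then takes a net $L$ inside each cap with resolution $r\fdel$ for $\fdel = \Theta(1/\dimC)$. The crucial point is that both the cap radius and the net resolution scale \emph{linearly with $r$}, so $\abs{L} = \poly((\dimC/\fdel)^{\dimC})$ is independent of $r$, and the distance perturbation from rounding a point to $L$ is $O(r\fdel)$, i.e., a multiplicative $1 + O(1/\dimC)$ factor of $r$ for every $r > 0$. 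Your construction instead takes a $\delta$-net $S$ of the \emph{entire} unit sphere with $\delta = 1/\poly(\dimC)$ and inflates the distance threshold to $r' = r + 2\delta$. For this to be a $(1+o(1))$ multiplicative perturbation of $r$ you need $\delta = o(r)$; but for small $r$ (say $r = \dimC^{-10}$, which is allowed since the proposition asserts ``for all $r > 0$''), $\delta = o(r)$ forces $\abs S = O((1/\delta)^\dimC)$ to blow up far beyond $\poly(\dimC^\dimC)$. So the discretization budget and the perturbation budget cannot both be met without the localization.

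A secondary, more cosmetic difference: you perturb the thresholds ($\eta_u \to \eta_u + R\delta$, $\eta_q \to \eta_q + R\delta$) and condition on $\max_i\norm{z_i}\le R$, whereas the paper rounds input points to $L$ (so that membership is decided on net points) and absorbs the $r(1+2\fdel)$ distortion into a slightly enlarged radius when $\Filt$ is later instantiated in Proposition \ref{prop:filtSB}. Rounding points is cleaner because the only quantity that moves is $r$, not the Gaussian thresholds, and one does not need to reason about $\norm{z_i}$ at all. If you combine the paper's localization (cap of radius $\Theta(r\dimC^{3/2})$, net resolution $r\fdel$) with either the point-rounding or your threshold-shifting, the quantitative gap you identified disappears: the perturbation exponent is $O(\dimC\cdot\fdel) = O(1)$ uniformly in $r$, contributing only a $\poly(\dimC)$ factor to $1/\G$ as required.
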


We first reduce the problem to spherical caps comparable in size to $r$; our resulting filter family will be the union of the filter families we construct for all the spherical caps we consider. To do so, we take a random shift of the lattice $r\dimC\cdot\ZZ^{\dimC}$ and consider the (axis-aligned) hypercubes with side length $r(\dimC + 2)$ centered at the lattice points. We build a separate filter family for each hypercube. Due to the random shift, each point in $\RR^{\dimC}$ belongs to $(1 + 2 / \dimC)^{\dimC} = O(1)$ hypercubes in expectation. Hence this reduction adds only a constant factor overhead. Furthermore, for each pair of points that are distance at most $r$ apart, there exists a hypercube containing both points, so this step does not separate any pairs of near neighbors.

Now consider a hypercube centered at $O\in r\dimC\cdot\ZZ^\dimC$ that intersects $\Sphere(0, 1)$, and let $O'$ be the point on $\Sphere(0, 1)$ closest to $O$. This hypercube is contained by $\Ball(O', r(\dimC + 2)\sqrt\dimC)$, so it suffices to construct a filter family for the spherical cap $\SphCap = \Sphere(0, 1)\cap\Ball(O', r(\dimC + 2)\sqrt\dimC)$. Let $L$ be a subset of $\SphCap$ of size $\poly((\dimC / \fdel)^{\dimC})$ such that each point in $\SphCap$ is distance at most $r\fdel$ from a point in $L$.\footnote{One construction is to project $r\fdel / \sqrt{\dimC}\cdot\ZZ^{\dimC}\cap\Ball(O', 2r(\dimC + 2)\sqrt{\dimC})$ onto the sphere $\Sphere(0, 1)$.} For $\delta = \Theta(1 / \dimC)$, we can round each point to its nearest neighbor in $L$ with negligible loss in precision---points that are distance $r$ apart become points that are initially distance at most $r(1 + 2\fdel) = r(1 + o(1))$ apart and points that are initially distance $\C r$ apart become points that are distance at least $\C r(1 - o(1))$ apart.

After rounding points to the subset $L$, we show how to construct a spherical LSF family for $\SphCap$. Observe that $p\coloneqq\G(r, \eta_u, \eta_q)$ is a lower bound on the probability that a database point $u$ and a query point $q$ such that $\dist{u}{q}\le r$ both belong to a randomly sampled filter. Thus, if we sample $\fnum\coloneqq\Theta(p^{-1}\log\abs L)$ filters, then with probability $1 / 2$, $\cF_u(u)\cap\cF_q(q)\neq\emptyset$ for all pairs $u, q\in L$ that are distance at most $r$ apart by the union bound. Let the resulting distribution over spherical LSF families for $W$ be $\Filt_0$. Given $\Filt_0$, we can sample a filter family such that the condition holds for all pairs $u, q\in L$ that are distance at most $r$ apart by sampling a filter family from $\Filt_0$ and resampling if the condition does not hold. The checking can be done in time $O(\fnum |L|^2) = O(\poly(\dimC^\dimC))$. Because the success probability is $1/2$, we expect to sample only $O(1)$ times.

This spherical LSF family works for each cap $\SphCap$ associated to a hypercube, since they are all contained in the same-shaped spherical cap. Thus we only need to do the above sampling procedure once to get a spherical LSF family for the entire sphere by taking the union over all caps. Our last step is to analyze the properties of this LSF family and show that it satisfies the properties stated in \Cref{prop:filtS}.

By the definition of sampling above, we immediately have properties \ref{filtS:sample} and \ref{filtS:all_pairs}. Observe that for each cap, there are only $N = O(\poly(\dimC) / \G(r, \eta_u, \eta_q))$ filters. Thus to compute $\cF_u(u)$ (and similarly for $\cF_q(q)$), we can iterate over the set of filters and check for each whether $\ip{z, u}\ge\eta_u$. This gives us property \ref{filtS:decode}. Finally, property \ref{filtS:sizes} follows from the definition of the filter family.\footnote{We actually expect to reject up to $1 / 2$ of the filters sampled this way, so the distribution is slightly different than that of typical spherical LSF, but by the argument for property \ref{filt:local} in \Cref{prop:filt}, this only affects upper bounds on expected values by a factor of $2$.}

\subsection{``Tensoring'' Up Spherical LSF}

We show how to use a modification of the tensoring operation to construct a LSF family for $\Sphere(0, 1)$ in $\RR^{\dimB}$, where $\dimB = \Theta(\log\N\cdot\log\log\N)$.

We start with some parameter settings. Define $\K = \Theta(\log^{1/2}\N)$ and suppose $\rho_u$ and $\rho_q$ are such that
\[ (1 - \Alpha(r)\Alpha(\C r))\sqrt{\rho_q} + (\Alpha(r) - \Alpha(\C r))\sqrt{\rho_u}\ge\Beta(r)\Beta(\C r). \]
Then by \cite[Section 3.3.3]{ALRW17}, there exist $\eta_u$ and $\eta_q$ such that the following hold:
\begin{align*}
  \frac{\F(\eta_u)}{\G(r, \eta_u, \eta_q)}&\le \N^{(\rho_u + o(1)) / \K} \\
  \frac{\F(\eta_q)}{\G(r, \eta_u, \eta_q)}&\le \N^{(\rho_q + o(1)) / \K} \\
  \frac{\G(\C r, \eta_u, \eta_q)}{\G(r, \eta_u, \eta_q)}&\le \N^{(\rho_q - 1 + o(1)) / \K}.
\end{align*}
We can scale $\eta_u$ and $\eta_q$ down by a factor of $\sqrt{\dimB / \dimC}$ to $\eta_u'$ and $\eta_q'$ so that the following hold:
\begin{align*}
  \frac{\F(\eta_u')}{\G(r, \eta_u', \eta_q')}&\le \N^{\frac{1}{\K}\frac{\dimC}{\dimB}(\rho_u + o(1))} \\
  \frac{\F(\eta_q')}{\G(r, \eta_u', \eta_q')}&\le \N^{\frac{1}{\K}\frac{\dimC}{\dimB}(\rho_q + o(1))} \\
  \frac{\G(\C r, \eta_u', \eta_q')}{\G(r, \eta_u', \eta_q')}&\le \N^{\frac{1}{\K}\frac{\dimC}{\dimB}(\rho_q - 1 + o(1))}.
\end{align*}
The spherical LSF parameters $\eta_u'$ and $\eta_q'$ are what we will use to instantiate the distribution $\Filt$ of \Cref{prop:filtS} for our tensoring construction. Finally, define the maximum distortion we allow to be $1\pm\epsB$, where $\epsB = \log^{-\expC / 2}\N$.

We begin our construction by noting that there exists a set $\Proj$ of orthogonal decompositions of $\RR^\dimB$ into $\RR^\dimC$ such that for any pair $u, q\in\Sphere(0, 1)$, there exists a $\cP\in\Proj$ such that every projection in $\cP$ projects $u$, $q$, and $u - q$ with distortion $1\pm\epsB$. Furthermore, we can construct $\Proj$ so that its size is bounded by $\abs{\Proj}\le\poly(\dimB^{\dimB / \dimC})$. This follows from the same CountSketch construction as before, but with a lower tolerance for error so that three distances are preserved instead of one. We further assume without loss of generality that the elements of $\Proj$ are all multiplied by the same random rotation.

Given this collection of orthogonal decompositions $\Proj$, we prove the following:
\begin{prop}\label{prop:filtSB}
  For all $r, \C$ such that $\C > 1$ and $\C r\ge\sqrt 2$, there exists a distribution $\FiltB$ over filter families for $\Sphere(0, 1)$ in $\RR^{\dimC}$ with the following properties:
  \begin{enumerate}
    \item\label{filtSB:sample}
      A filter family $\cG$ can be sampled from $\FiltB$ in time $O(\N^{o(1)})$.
    \item\label{filtSB:decode}
      For all $u, q\in\Sphere(0, 1)$, $\cG_u(u)$ and $\cG_q(q)$ can be computed in time $O(\N^{o(1)} + \abs{\cG_u(u)\cap\cG_q(q)})$. % TODO: soft-O?
    \item\label{filtSB:all_pairs}
      For all $u, q\in\Sphere(0, 1)$ such that $\dist{u}{q}\le r$, $\cG_u(u)\cap\cG_q(q)\neq\emptyset$.
    \item\label{filtSB:sizes}
      For all $u, q\in\Sphere(0, 1)$, we have that
      \[ \EE_{\cG\sim\FiltB}(\abs{\cG_u(u)})\le\poly(\dimB^{\dimB / \dimC})\frac{\F(\eta_u)}{\G(r, \eta_u, \eta_q)}\le\N^{(\rho_u + o(1)) / \K} \]
      and
      \[ \EE_{\cG\sim\FiltB}(\abs{\cG_q(q)})\le\poly(\dimB^{\dimB / \dimC})\frac{\F(\eta_q)}{\G(r, \eta_u, \eta_q)}\le\N^{(\rho_q + o(1)) / \K}. \]
      Furthermore, if $\dist{u}{q} > \C r$, then
      \[ \EE_{\cG\sim\FiltB}(\abs{\cG_u(u)\cap\cG_q(q)})\le\poly(\dimB^{\dimB / \dimC})\frac{\G(\C r, \eta_u, \eta_q)}{\G(r, \eta_u, \eta_q)}\le\N^{(-1 + \rho_u + o(1)) / \K}. \]
  \end{enumerate}
\end{prop}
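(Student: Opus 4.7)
The plan is to construct $\FiltB$ in direct analogy to the tensoring construction of \Cref{prop:filtB}, replacing the ball-lattice filter family of $\RR^\dimC$ by the spherical LSF family of \Cref{prop:filtS}. Concretely, I would instantiate $\Filt$ at radius $r(1+\epsB)$ with parameters $\eta_u', \eta_q'$, so that every sampled filter is a subset of a spherical LSF filter with these parameters. For each orthogonal decomposition $\cP = \{P_i\}_{i=1}^{\dimB/\dimC} \in \Proj$, I would draw $\dimB/\dimC$ independent filter families $\cF_1, \ldots, \cF_{\dimB/\dimC}$ from $\Filt$ and form the tensored family $\cG_\cP$ whose filters are indexed by tuples $(F_1, \ldots, F_{\dimB/\dimC})$ with $\cG_{\cP, u}(u) = \cF_{1, u}(\sqrt{\dimB/\dimC}\, P_1 u) \times \cdots \times \cF_{\dimB/\dimC, u}(\sqrt{\dimB/\dimC}\, P_{\dimB/\dimC} u)$ and $\cG_{\cP, q}$ defined symmetrically. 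The distribution $\FiltB$ outputs $\cG = \bigcup_{\cP \in \Proj} \cG_\cP$, applied on inputs rescaled by a factor $1 + o(1)$ when needed to absorb the $1 \pm \epsB$ distortion from the splitting property, and relying on the shared random rotation baked into $\Proj$ so that the rescaled projections land close to $\Sphere(0,1) \subset \RR^\dimC$ where the spherical LSF analysis is valid.

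Properties \ref{filtSB:sample} and \ref{filtSB:decode} follow from routine bookkeeping. Sampling costs $\abs{\Proj} \cdot (\dimB/\dimC)$ draws from $\Filt$, each of cost $O(\poly(\dimC^\dimC) + \poly(\dimC)/\G(r, \eta_u', \eta_q'))$, which is $\N^{o(1)}$ under our parameter settings (since $\abs{\Proj} \le \poly(\dimB^{\dimB/\dimC}) = \N^{o(1)}$). Decoding runs property \ref{filtS:decode} for each of the $\dimB/\dimC$ coordinate families per $\cP$, then emits the Cartesian product, for a total cost of $\N^{o(1)} + \abs{\cG_u(u) \cap \cG_q(q)}$ once we also note that intersection-only emission is possible by first computing $\cG_u(u)$ and $\cG_q(q)$ coordinate-wise. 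For property \ref{filtSB:all_pairs}, I would invoke the splitting property of $\Proj$: for any $u, q \in \Sphere(0,1)$ with $\dist{u}{q} \le r$, there is some $\cP^* \in \Proj$ under which every projection preserves the norms of $u$, $q$, and $u - q$ with distortion $1 \pm \epsB$. After rescaling by $\sqrt{\dimB/\dimC}$, each $\sqrt{\dimB/\dimC}\, P_i u$ and $\sqrt{\dimB/\dimC}\, P_i q$ lies within $\epsB$ of $\Sphere(0,1)$ at mutual distance at most $r(1+\epsB)$, so property \ref{filtS:all_pairs} of $\Filt$ supplies a common filter in each coordinate, hence a common tensored filter in $\cG_{\cP^*}$; the small deviation from the sphere is absorbed by the discretization step in the construction of $\Filt$.

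The main obstacle is property \ref{filtSB:sizes}, which requires tight control of expected filter counts in three regimes. For fixed $\cP$, independence across coordinates gives $\EE[\abs{\cG_{\cP, u}(u)}] = \prod_i \EE[\abs{\cF_{i, u}(\sqrt{\dimB/\dimC}\, P_i u)}]$; using property \ref{filtS:sizes} and the fact that $\cF_i$ contains $O(\poly(\dimC)/\G(r, \eta_u', \eta_q'))$ filters, each factor is at most $O(\poly(\dimC)) \cdot \F(\eta_u')/\G(r, \eta_u', \eta_q')$, and the parameter choice for $\eta_u'$ makes the product at most $\N^{(\rho_u + o(1))/\K}$ after summing over $\cP$; the $\cG_q$ bound is symmetric. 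The delicate step is the intersection bound for $\dist{u}{q} > \C r$: the analogous tensored expansion gives $\prod_i \EE[\abs{\cF_{i, u}(\sqrt{\dimB/\dimC}\, P_i u) \cap \cF_{i, q}(\sqrt{\dimB/\dimC}\, P_i q)}]$, which I would bound using the log-form of $\G$ from \Cref{lem:G} together with the Pythagorean identity $\sum_i \dist{P_i u}{P_i q}^2 = \dist{u}{q}^2 \ge (\C r)^2$. The hypothesis $\C r \ge \sqrt 2$ (equivalently $\Alpha(\C r) \le 0$) is what puts us in the regime where the exponent of $\G$ as a function of $s$ is monotone and convex enough to yield $\prod_i \G(s_i, \eta_u', \eta_q') \le \N^{o(1)} \cdot (\G(\C r, \eta_u, \eta_q)/\G(r, \eta_u, \eta_q))^{\dimC/\dimB \cdot \dimB/\dimC}$, which combined with the $\eta'$-scaling and a final multiplication by $\abs{\Proj} = \N^{o(1)}$ gives the claimed $\N^{(-1 + \rho_q + o(1))/\K}$ bound.
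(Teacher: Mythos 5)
Your construction omits the central modification that makes the paper's tensoring well-defined and analyzable for the sphere. You write $\cG_{\cP,u}(u) = \cF_{1,u}(\sqrt{\dimB/\dimC}\,P_1 u)\times\cdots$ directly, but $\sqrt{\dimB/\dimC}\,P_i u$ is in general nowhere near $\Sphere(0,1)$: the splitting property of $\Proj$ only guarantees that \emph{some} $\cP$ splits a given $u$ into nearly equal-norm components. For the vast majority of $\cP\in\Proj$ the components $P_i u$ have wildly uneven norms, so no fixed $1+o(1)$ rescaling (nor the shared random rotation, which preserves norms) brings them onto the sphere. Since $\Filt$ from \Cref{prop:filtS} is only a filter family for $\Sphere(0,1)$, the decoding is not even well-defined for these inputs, and for components of large norm the expected number of filters containing the point would blow up. The paper instead defines a \emph{modified} tensoring: $P_i u$ is radially projected onto $\Sphere(0,\sqrt{\dimC/\dimB})$ to get $u_i'$, and if $\dist{u_i'}{P_i u}>\epsB$ for any $i$, the filter set $\cG^{\cP}_u(u)$ is declared \emph{empty}. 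This empty-set fallback is exactly what discards the contribution of bad decompositions, and it is load-bearing for property \ref{filtSB:sizes}.

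Your approach to property \ref{filtSB:sizes} is also different and not clearly correct. You propose a per-coordinate product $\prod_i \G(s_i,\eta_u',\eta_q')$ controlled via the Pythagorean identity and a convexity argument on the log of $\G$, gated on $\C r\ge\sqrt 2$. Even setting aside whether that convexity claim holds as stated, the analysis must also restrict to ``good'' decompositions for \emph{both} $u$ and $q$, which you have no mechanism to enforce. The paper instead treats a tensored filter as a single Gaussian vector $z=(z_1,\ldots,z_{\dimB/\dimC})\in\RR^\dimB$ and argues globally: on the event that $\cP$ is good (all $\dist{u_i'}{P_i u}\le\epsB$), membership of $u$ forces $(1\pm\epsB)\ip{z,u}\ge\eta_u$, so one can invoke \Cref{lem:F} and \Cref{lem:G} directly in $\RR^\dimB$ at the unscaled parameters $\eta_u,\eta_q$, multiply by the total filter count $\abs{\Proj}\cdot(\poly(\dimC)/\G(r,\eta_u',\eta_q'))^{\dimB/\dimC}=\poly(\dimB^{\dimB/\dimC})/\G(r,\eta_u,\eta_q)$, and conclude. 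This global inner-product reduction sidesteps the per-coordinate product entirely and is where the $\epsB$-gated empty-set condition pays off: without it, the identity $(1\pm\epsB)\ip{z,u}=\sqrt{\dimC/\dimB}\sum_i\ip{z_i,\sqrt{\dimB/\dimC}\,u_i'}$ fails and there is no single distortion factor to absorb.
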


To construct our LSF family, we independently sample $\dimB / \dimC$ filter families $\cF^{1},\ldots,\cF^{\dimB / \dimC}$ from $\Filt$ with radius $r'\coloneqq r(1 + 8\epsB)$ for each element $\cP\in\Proj$. We then construct a filter family $\cG^{\cP}$ according to a modified version of the tensoring operation: Fix $u\in\Sphere(0, 1)$, and define $u_i'$ to be the projection of $P_i u$ onto $\Sphere(0, \sqrt{\dimC / \dimB})$ in $\RR^{\dimC}$. We then define
\[ \cG^{\cP}_u(u)\coloneqq \cF^{1}_u\bigp{\sqrt{\frac{\dimB}{\dimC}}u_1'}\times\cdots\times\cF^{\dimB / \dimC}_u\bigp{\sqrt{\frac{\dimB}{\dimC}}u_{\dimB / \dimC}'} \]
if $\dist{u_i'}{P_iu}\le\epsB$ for all $i\in [\dimB / \dimC]$ and define $\cG_u^{\cP}$ to be the empty set otherwise. We define $\cG_q^{\cP}(q)$ similarly for query points $q\in\Sphere(0, 1)$. The filter family $\cG$ for $\Sphere(0, 1)$ that we want will be the union of the filter families $\cG^{\cP}$ over all $\cP\in\Proj$.

By the parameter setting of \cite[Section 3.3.3]{ALRW17}, $G(r, \eta_u, \eta_q)\ge\N^{-o(1)}$ when $\C r\ge\sqrt 2$. Thus, by \Cref{prop:filtS}, we can sample from $\Filt$ and compute the filters containing a given point in time $O(\N^{o(1)})$. The overhead of $\Proj$ is also only $\N^{o(1)}$. This gives us properties \ref{filtSB:sample} and \ref{filtSB:decode} for $\cG$.

The ``splitting'' property of $\Proj$ makes our filter family $\cG$ Las Vegas locality-sensitive. For every pair $u,q\in\Sphere(0, 1)$ such that $\dist{u}{q}\le r$, there exists some $\cP\in\Proj$ so that the three distances $u$, $q$, and $u - q$ are preserved (i.e., with distortion within $1\pm\epsB$) for all projections $P\in\cP$. Then $\cG_u^{\cP}(u)$ and $\cG_q^{\cP}(q)$ are non-empty. It follows from the Las Vegas property of filter families drawn from $\Filt_0$ that $u_i'$ and $q_i'$ share a filter for each $i\in [\dimB / \dimC]$, since scaling $P_iu$ and $P_iq$ to $u_i'$ and $q_i'$ increases the distance between them by at most a factor of $1 + \eps$. Hence $\cG_u^{\cP}(u)\cap\cG_q^{\cP}(q)\neq\emptyset$ by \Cref{prop:filtS}, which gives us property \ref{filtSB:all_pairs}.

To bound the number of filters containing a point and the number of collisions, we use the fact that $\Filt$ is a distribution over spherical LSF. Note that for a database point $u$ to belong to a filter characterized by the tuple $z\coloneqq(z_1,\ldots,z_{\dimB / \dimC})$, where the $z_i\in\RR^{\dimC}$ correspond to the centers of the filters in each of the $\dimB / \dimC$ subspaces and $z$ is thought of as belonging to $\RR^\dimB$, we must have
\[ (1\pm\epsB)\ip{z, u} = \sqrt{\frac{\dimC}{\dimB}}\bigp{\sum_{i=1}^{\dimB / \dimC} \ip{z_i, \sqrt{\frac{\dimB}{\dimC}}u_i'}}\ge\sqrt{\frac{\dimB}{\dimC}}\eta_u' = \eta_u. \]
Similarly, we would need $(1\pm\epsB)\ip{z, q}\ge\eta_q$ for a query point $q$. With the random rotation that we applied at the beginning, we get by \Cref{lem:F} that
\[ \PP\bigp{(1 + \epsB)\ip{z, u}\ge\eta_u}\le\F((1 - 2\epsB)\eta_u)\le \N^{o(1) / \K}\F(\eta_u) \]
and
\[ \PP\bigp{(1 + \epsB)\ip{z, q}\ge\eta_q}\le\F((1 - 2\epsB)\eta_q)\le \N^{o(1) / \K}\F(\eta_q). \]
Furthermore, if $\dist{q}{x} > \C r$, then by \Cref{lem:G},
\begin{align*}
  \PP(\text{$(1 + \epsB)\ip{z, u}\ge\eta_u$ and $(1 + \epsB)\ip{z, q}\ge\eta_q$})
  &\le G(\C r, (1 - 2\epsB)\eta_u, (1 - 2\epsB)\eta_q) \\
  &\le\N^{o(1) / \K}\G(\C r, \eta_u, \eta_q).
\end{align*}
Since there are a total of
\[ \abs{\Proj}\cdot\bigp{\frac{\poly(\dimC)}{\G(r, \eta_u', \eta_q')}}^{\dimB / \dimC} = \frac{\poly(\dimB^{\dimB / \dimC})}{\G(r, \eta_u, \eta_q)} \]
filters in a filter family sampled from $\FiltB$ and $\poly(\dimB^{\dimB / \dimC}) = \N^{o(1) / \K}$, property \ref{filtSB:sizes} now follows, completing our proof of \Cref{prop:filtSB}.

Now to prove \Cref{thm:SB}, note that the filter family that we constructed in \Cref{prop:filtSB} can be used as a drop-in replacement for the filter family used in the data-dependent data structure of \cite{ALRW17}. By the parameter settings given in \cite[Section 4.3]{ALRW17}, the filter family will always be instantiated with $\C r\ge\sqrt 2$ (our $\C r$ corresponds to the $r^*$ in \cite{ALRW17}). We now show that the resulting data structure is Las Vegas. The only way randomness leads to false negatives in the data structure of \cite{ALRW17} is through the filtering step, when two near neighbors fail to share a filter. However, this cannot happen with our LSF family by property \ref{filtSB:all_pairs} of \Cref{prop:filtSB}. (Randomness is also used elsewhere, e.g., for the fast preprocessing step described in \cite{AR15}, but this randomness does not create false negatives.)

Finally, to reduce the dimension of the problem from $\dimA$ to $\dimB$, we apply the same dimensionality reductions as in \Cref{sec:A}. To make this reduction step work, we need to slightly modify the data-dependent data structure of \cite{ALRW17}, since our reduction requires the option of continuing after encountering a false positive. In \cite{ALRW17}, at certain nodes of the decision tree, only one point is stored at a vertex as a ``representative'' for specific base cases of the recursion. We require the entire list of points mapped to that vertex to be stored, since it could be that the representative point is a false positive. This modification increases space usage negligibly. We finish by noting that the reduction of \cite{Nguyen14} for $\ell_p$ still applies, from which \Cref{thm:SA} follows.

\section{Las Vegas Locality-Sensitive Filter Families}\label{appendix:A}

In this appendix, we define ``Las Vegas'' data-independent filter families and show that they imply Monte Carlo families of filters as defined in \cite{Christiani17}. Combined with \cite[Theorem 1.5]{Christiani17}, this reduction implies analogs of the data-independent lower bounds of \cite{OWZ14} hold for Las Vegas LSF families. It follows from these lower bounds that the exponents of \Cref{thm:main} and \Cref{cor:lp} are optimal for LSF in the data-independent setting.

Let $(X, D)$ be a metric space. We start with a Monte Carlo definition of LSF families as considered in \cite{Christiani17}:

\begin{defn}[(Symmetric\footnote{In \cite{Christiani17}, asymmetric filter families yielding space-time tradeoffs for LSF are also discussed. These filter families have separate ``query'' and ``update'' filters. However, since this section focuses only on the symmetric setting, we merge the (asymmetric) LSF parameters $p_q$ and $p_u$ of \cite{Christiani17} into the single (symmetric) parameter $q$.}) Monte Carlo filter families \cite{Christiani17}]\label{defn:MCfilt}
  A distribution $\mathscr{F}$ over filters is said to be \emph{Monte Carlo $(r, cr, p_1, p_2, q)$-sensitive} if the following holds:
  \begin{itemize}
    \item
      For all $x, y\in X$ such that $D(x, y)\le r$, $\PP_{F\sim\mathscr{F}}(x\in F, y\in F)\ge p_1$.
    \item
      For all $x, y\in X$ such that $D(x, y) > \C r$, $\PP_{F\sim\mathscr{F}}(x\in F, y\in F)\le p_2$.
    \item
      For all $x\in X$, $\PP_{F\sim\mathscr{F}}(x\in F)\le q$.
  \end{itemize}
\end{defn}

To obtain Las Vegas algorithms using LSF, we need a stronger notion of filter family, which we define as follows:

\begin{defn}[Las Vegas filter families]
  A distribution $\Filt$ over filter families is said to be \emph{Las Vegas $(r, \C r, p_2, q, N)$-sensitive} if the following holds:
  \begin{itemize}
    \item
      For all $x, y\in X$ such that $D(x, y)\le r$, $\PP_{\cF\sim\Filt}(\cF(x)\cap\cF(y)\neq\emptyset) = 1$.
    \item
      For all $x, y\in X$ such that $D(x, y) > \C r$, $\EE_{\cF\sim\Filt}(\abs{\cF(x)\cap\cF(y)})\le p_2$.
    \item
      For all $x\in X$, $\EE_{\cF\sim\Filt}(\abs{\cF(x)})\le q$.
    \item
      The number of filters in $\cF$ sampled from $\Filt$ is always $N$.
  \end{itemize}
\end{defn}

This definition is general in the sense that for a set of filters to be usable in a Las Vegas setting, it must guarantee that every pair of ``close'' points is contained in a filter. Furthermore, $p_2$ and $q$ are analogous to those of \Cref{defn:MCfilt}, and our lower bound is independent of the size $N$.

With this definition of LSF in a Las Vegas setting, the remaining proofs are straightforward: We first show a general reduction from Las Vegas LSF to Monte Carlo LSF, and then, for the specific case of $\ell_p$ spaces, we show that the lower bound of \cite{Christiani17} and \cite{OWZ14} still holds.

\begin{lem}\label{lem:LStoMC}
  Suppose a distribution $\Filt$ over filter families is Las Vegas $(r, \C r, p_2, q, N)$-sensitive, and define $p_1'\coloneqq 1 / N$, $p_2'\coloneqq p_2 / N$, and $q'\coloneqq q / N$. Then there exists a distribution $\mathscr{F}$ over filters that is Monte Carlo $(r, \C r, p_1', p_2', q')$-sensitive.
\end{lem}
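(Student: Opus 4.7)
My plan is to obtain $\mathscr{F}$ by composing the two-level sampling: draw $\cF \sim \Filt$ and then pick a filter $F$ uniformly at random from the $N$ filters of $\cF$. The fact that the Las Vegas definition fixes the size of $\cF$ at exactly $N$ is precisely what makes this uniform choice well-defined and what converts each of the Las Vegas guarantees (which are statements about $|\cF(x)|$, $|\cF(x) \cap \cF(y)|$, and certainty of intersection) into probabilistic statements normalized by $N$.

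The verification then proceeds by directly computing the three quantities in \Cref{defn:MCfilt} under this two-level distribution. For a close pair $x, y$ with $D(x,y) \le r$, I would observe that
\[ \PP_{F \sim \mathscr{F}}(x \in F,\, y \in F) = \EE_{\cF \sim \Filt}\!\left[\frac{|\cF(x) \cap \cF(y)|}{N}\right] \ge \frac{1}{N} = p_1', \]
where the inequality uses the Las Vegas guarantee that $\cF(x) \cap \cF(y) \neq \emptyset$ deterministically, so $|\cF(x) \cap \cF(y)| \ge 1$. For a distant pair $x,y$ with $D(x,y) > \C r$, the same identity combined with $\EE|\cF(x) \cap \cF(y)| \le p_2$ yields
\[ \PP_{F \sim \mathscr{F}}(x \in F,\, y \in F) \le \frac{p_2}{N} = p_2'. \]
Finally, for any $x \in X$, the marginal probability is $\EE_{\cF}[|\cF(x)|]/N \le q/N = q'$, matching the single-point bound.

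I do not anticipate any real obstacle here; the reduction is essentially bookkeeping. The only subtlety worth highlighting is the necessity of the fixed-size parameter $N$ in the Las Vegas definition. Without it the normalization $1/N$ would vary with $\cF$, and the lower bound $\PP(x \in F, y \in F) \ge 1/N$ for close pairs would need to be restated via $\EE[1/|\cF|]$, typically through Jensen's inequality; including $N$ in the definition sidesteps this and also makes the resulting Monte Carlo parameters depend cleanly on the Las Vegas parameters. Once \Cref{lem:LStoMC} is established in this form, one can compose it with \cite[Theorem 1.5]{Christiani17} and the $\ell_p$ lower bounds of \cite{OWZ14} to conclude optimality of the exponents in \Cref{thm:main} and \Cref{cor:lp}.
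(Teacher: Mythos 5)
Your proof takes exactly the same approach as the paper: sample $\cF\sim\Filt$, then pick a filter uniformly at random from the $N$ filters of $\cF$. The paper leaves the verification of the three Monte Carlo conditions as "one can check," and your calculations fill in precisely that routine bookkeeping, so the two proofs are in substance identical.
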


\begin{proof}
  Let $\mathscr{F}$ be the distribution over filters where we first sample $\cF\sim\Filt$ and then sample a filter uniformly at random from $\cF$. One can check that $\mathscr{F}$ is Monte Carlo $(r, \C r, p_1', p_2', q')$-sensitive.
\end{proof}

\begin{prop}
  Suppose $0 < p < \infty$. Every Las Vegas $(r, \C r, p_2, q, N)$-sensitive filter family $\Filt$ for $\RR^\dimA$ under the $\ell_p$ norm must satisfy
  \[ \rho\coloneqq\frac{\log(q)}{\log(q / p_2)}\ge \frac{1}{\C^p} - o_d(1) \]
  when $p_2 / q \ge 2^{-o(\dimA)}$.
  (In particular, this lower bound depends only on $p_2$ and $q$ and not on $N$.)
\end{prop}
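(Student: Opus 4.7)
The plan is to reduce to the Monte Carlo lower bound of \cite{Christiani17} (which itself extends \cite{OWZ14} from LSH to LSF). First I would apply \Cref{lem:LStoMC} to the given Las Vegas $(r, \C r, p_2, q, N)$-sensitive filter family $\Filt$ to obtain a Monte Carlo $(r, \C r, p_1', p_2', q')$-sensitive distribution $\mathscr{F}$ over filters with $p_1' = 1/N$, $p_2' = p_2/N$, and $q' = q/N$.

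The key observation is that the parameter $N$ drops out of the ratios one cares about:
\[
\frac{\log(q'/p_1')}{\log(q'/p_2')} \;=\; \frac{\log q}{\log(q/p_2)} \;=\; \rho,
\]
and similarly $p_2'/q' = p_2/q \ge 2^{-o(\dimA)}$, so the regime hypothesis of the Monte Carlo lower bound is inherited intact. This is what makes the proposition's bound automatically independent of $N$.

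Next I would invoke \cite[Theorem 1.5]{Christiani17}, the symmetric Monte Carlo LSF lower bound for $\RR^\dimA$ under $\ell_p$, which asserts that any Monte Carlo $(r, \C r, p_1', p_2', q')$-sensitive filter family must satisfy $\log(q'/p_1')/\log(q'/p_2') \ge 1/\C^p - o_\dimA(1)$ whenever $p_2'/q' \ge 2^{-o(\dimA)}$. Combining this with the ratio identity above gives $\rho \ge 1/\C^p - o_\dimA(1)$, completing the proof.

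The main obstacle is almost purely bookkeeping: one must verify that the regime condition $p_2/q \ge 2^{-o(\dimA)}$ is exactly the hypothesis under which \cite{Christiani17, OWZ14} obtain the sharp asymptotic exponent $1/\C^p$ for $\ell_p$ (and not just some weaker bound), and that the Christiani theorem is available in the symmetric form applicable to $(p_1', p_2', q')$ rather than only in its asymmetric, space-time-tradeoff form. Once this matching of hypotheses is confirmed, the argument is immediate from \Cref{lem:LStoMC}, and no further geometric work is needed — all the hard $\ell_p$-specific isoperimetric content is imported from the cited Monte Carlo lower bound.
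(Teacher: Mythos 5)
Your proposal is correct and takes exactly the same route as the paper: apply \Cref{lem:LStoMC} to convert the Las Vegas family into a Monte Carlo $(r, cr, p_1', p_2', q')$-sensitive family with $p_1' = 1/N$, $p_2' = p_2/N$, $q' = q/N$, observe that $N$ cancels in both the exponent ratio and the regime condition, and invoke Christiani's Monte Carlo lower bound. The only difference is cosmetic: the paper's proof cites Theorem~1.4 of Christiani while its surrounding prose (and you) cite Theorem~1.5, and your write-up spells out the $N$-cancellation bookkeeping that the paper leaves implicit.
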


\begin{proof}
  By the reduction of \Cref{lem:LStoMC} and \cite[Theorem 1.4]{Christiani17}, we have that
  \[ \rho = \frac{\log(q' / p_1')}{\log(q' / p_2')}\ge\frac{1}{\C^p} - o_d(1). \]
\end{proof}

\newpage
\bibliographystyle{alpha}
\bibliography{writeup}

\end{document}